\tikzstyle heightone=[scale=.7,shift={(0,-.3)}]
\tikzstyle heightones=[scale=.8,xscale=.35,shift={(0,.1)}]
\tikzstyle heightoneonehalf=[scale=.9,shift={(0,-.2)}]
\tikzstyle heighttwo=[scale=.9,shift={(0,-.4)}]
\tikzstyle heighttwos=[scale=.5,xscale=.6,shift={(0,-.1)}]
\tikzstyle heightthree=[scale=.6,shift={(0,-.9)}]
\tikzstyle heightthrees=[scale=.4,xscale=.7,shift={(0,-.2)}]
\tikzstyle arrowstyle=[blue,semitransparent,scale=2]
\tikzstyle basiclabel=[draw=none,fill=none,shape=rectangle,inner sep=2pt,scale=.8]
\tikzstyle leftlabel=[basiclabel,anchor=east]
\tikzstyle rightlabel=[basiclabel,anchor=west]
\tikzstyle bottomlabel=[basiclabel,anchor=north]
\tikzstyle toplabel=[basiclabel,anchor=south]
\tikzstyle vertex=[circle,draw,fill=black,inner sep=1pt]
\tikzstyle ciliation=[circle,draw=none,fill=red,inner sep=1pt,semitransparent]
\tikzstyle ciliatednode=[vertex,pin={[pin distance=1mm,pin edge={semitransparent,red},ciliation]#1:{}}]
\tikzstyle vector=[black,thick,rectangle,draw=gray!50!yellow,top color=yellow!30,bottom color=black!10,scale=.8,inner sep=2pt]
\tikzstyle small vector=[vector,scale=.8]
\tikzstyle plain vector=[rectangle,draw=none,fill=white,scale=.7]
\tikzstyle my signal=[black,thick,signal,signal pointer angle=120,draw=blue!50,top color=blue!20,bottom color=black!10,scale=.8,inner sep=2pt]
\tikzstyle matrix=[my signal,signal from=south,signal to=north]
\tikzstyle reverse matrix=[my signal,signal from=north,signal to=south]
\tikzstyle small matrix=[matrix,scale=.7]
\tikzstyle reverse small matrix=[reverse matrix,scale=.7]
\tikzstyle matrix on edge=[small matrix,sloped,rotate=-90]
\tikzstyle reverse matrix on edge=[small matrix,sloped,rotate=90]
\tikzstyle trivalent=[very thick]
\tikzstyle dotdotdot=[decorate,decoration={markings,
    mark=at position .3 with{\node{.};},
    mark=at position .5 with {\node{.};},
    mark=at position .7 with {\node{.};}}]
\tikzstyle wavyup=[out=90,in=-90]
\tikzstyle wavydown=[out=-90,in=90]
\tikzstyle symmetrizer=[rectangle,fill=gray!10,draw=black]
\tikzstyle permutation=[symmetrizer]
\tikzstyle antisymmetrizer=[rectangle,fill=black,draw=black]
\tikzstyle symlabel=[draw=none,fill=none,black,scale=.8]
\tikzstyle asymlabel=[draw=none,fill=none,white,scale=.8]
\newtheorem{theorem}{Theorem}
\newtheorem{proposition}[theorem]{Proposition}
\newtheorem{lemma}[theorem]{Lemma}
\newtheorem{definition}[theorem]{Definition}
\def\RR{\mathbbm{R}}
\def\CC{\mathbbm{C}}
\def\EE{\mathbbm{E}}
\def\AA{\mathcal{A}}
\def\II{\mathcal{I}}
\def\MM{\mathcal{M}}
\def\op{\mathrm{op}}
\def\Id{\mathbbm{1}}
\def\RR{\mathcal{R}}
\def\PP{\mathcal{P}}
\def\II{\mathcal{I}}
\DeclareMathOperator{\Sym}{Sym}
\DeclareMathOperator{\tr}{tr}
\DeclareMathOperator{\Tr}{Tr}
\DeclareMathOperator{\range}{range}
\g@addto@macro{\endabstract}{\@setabstract}
\newcommand{\authorfootnotes}{\renewcommand\thefootnote{\@fnsymbol\c@footnote}}
\begin{document}

\begin{center}
  \LARGE 
  A Partial Derandomization of PhaseLift using Spherical Designs \par \bigskip

\normalsize
\authorfootnotes
D.\ Gross\textsuperscript{1,2}, 
F.\ Krahmer\textsuperscript{3},
R.\ Kueng\footnote{Corresponding author: richard.kueng@physik.uni-freiburg.de}\textsuperscript{1} \par \bigskip

\small
\textsuperscript{1}Institute for Physics, University of Freiburg, Rheinstra{\ss}e 10, 79104 Freiburg, Germany \par
\textsuperscript{2}Freiburg Center for Data Analysis Modeling, Eckerstr.~1, 79104 Freiburg, Germany \par
\textsuperscript{3}Institute for Numerical and Applied Mathematics, University of G{\"o}ttingen, Lotzestra\ss e 16-18, 37083 G{\"o}ttingen, Germany\par \bigskip

  \today
\end{center}

\begin{abstract} 
The problem of retrieving phase information from amplitude
measurements alone has appeared in many scientific disciplines over the
last century. 
\emph{PhaseLift} is a recently introduced algorithm for phase recovery that
is computationally tractable, numerically stable, and comes with
rigorous performance guarantees.
PhaseLift is optimal in the sense that the number of amplitude
measurements required for phase reconstruction scales linearly with
the dimension of the signal. However, it specifically demands
Gaussian random measurement vectors --- a limitation that restricts
practical utility and obscures the 
specific properties of measurement
ensembles that enable phase retrieval.
Here we present a partial derandomization of PhaseLift that only
requires sampling from certain polynomial size vector configurations, called
\emph{$t$-designs}. Such configurations have been studied in algebraic
combinatorics, coding theory, and quantum information.
We prove reconstruction guarantees for a number of measurements that
depends on the degree $t$ of the design. If the degree is allowed 
to grow logarithmically with the dimension, the bounds become 
tight up to polylog-factors.
Beyond the specific case of PhaseLift, this work highlights the utility of
spherical designs for the derandomization of data
recovery schemes.

\smallskip
\noindent \textbf{Keywords:} Phase retrieval, PhaseLift, Semidefinite relaxations of nonconvex quadratic programs, non-commutative large deviation estimates,
spherical designs, quantum information

\smallskip
\noindent \textbf{Mathematics Subject Classification:}  90C25 -- 49N30 -- 62H12 -- 60F10

\end{abstract}

\section{Introduction}

In this work we are interested in the problem of recovering a complex
signal (vector) $x \in \CC^d$ from an \emph{intensity} measurement
$y_0 = \| x \|_{\ell_2}^2$ and \emph{amplitude} measurements
\begin{equation*} y_i = | \langle a_i, x \rangle |^2 \quad i=1,\ldots,
m,	\label{eq:measurements} \end{equation*} where $a_1, \ldots, a_m
\in \CC^d$ are sampling vectors.  Problems of this type are abundant
in many different areas of science, where capturing phase information
is hard or even infeasible, but obtaining amplitudes is comparatively
easy.  Prominent examples for this case occur in X-ray
cristallography, astronomy and diffraction imaging -- see for example
\cite{M90}.
This inverse problem is called \emph{phase
retrieval} and has attracted considerable interest over the last
decades. 

It is by no means clear how many such amplitude measurements are necessary to 
allow for recovery. Thus from the very beginning, there have been a number of works regarding 
injectivity conditions for this problem in the context of the specific applications \cite{BS79}.

More recently this question has been studied in more abstract terms, asking
for the minimal number of amplitude measurements of the form (\ref{eq:measurements}) -- without imposing structural assumptions on the $a_i$'s --  that are
required to make the above map injective.
In
\cite{balan_signal_2006}, the authors showed that in the real case ($x
\in \mathbb{R}^d$), at least $2d-1$ such measurements
 are necessary and generically sufficient to guarantee injectivity, while in the complex case a generic sample size of $m \geq 4d-2$ suffices. Here generic is to be understood in the sense that the sets of measurements of such size which do not allow for recovery form an algebraic variety in the space of all frames.
Also, the latter bound is close to optimal: as shown in \cite{heinosaari_quantum_2013}, it follows from the results derived in \cite{sanderson_immersions_1964} that a sample size of $m \geq \left( 4+ o(1) \right)d$ is necessary (cf.~\cite{MixonBlog}).
However, finding the precise bound is still an open problem.

Balan et al.~\cite{balan_painless_2009} consider the scenario of $\mathcal{O}(d^2)$ measurements, which form a complex projective $2$-design (cf.~Def.~\ref{def:design} below).  They derive an explicit reconstruction formula for this setup based on the following observation well known in conic programming.
Namely, the quadratic constraints on $x$ are linear in the
outer product $x x^*$: 
\begin{equation}\label{eqn:linearized}
	y_i = |\langle a_i, x \rangle|^2 = \tr \left( (a_i a_i^*) (x x^*) \right).
\end{equation}
This ``lifts'' the problem to matrix space of dimension $d^2$,
where it becomes linear and can be explicitly solved to find the unique solution.

As we will show in Theorem \ref{thm:converse_bound}, it is, without making additional assumptions on the $2$-design, not possible to use as measurements a random subset of this $2$-design which is of size $o(d^2)$. In other words, for the measurement scenario described in \cite{balan_painless_2009}, the quadratic scaling in $d$ is basically unavoidable.

To contrast these two extreme approaches, ref.~\cite{balan_signal_2006} works with a number of measurements close to the absolute minimum, but there are no tractable reconstruction schemes provided, the question
of numerical stability is not considered, and it is unclear whether
non-generic measurements -- i.e., vectors with additional structural
properties -- can be employed. On the other hand, the number of measurements in~\cite{balan_painless_2009} is much larger, while the measurements are highly structured and there is an explicit reconstruction method.
A number of recent works including this paper aim to balance between these two approaches, working with a number of measurements only slightly larger while having at least some of the desired properties mentioned above.

Ref.~\cite{alexeev_phase_2012} introduces a reconstruction method 
called \emph{polarization}
that works for $\mathcal{O}(d \log d)$ measurements and can handle
structured measurement vectors, including the {\em masked illumination}
setup that appears in diffraction imaging
\cite{bandeira_phase_2013}, where the measurements are generated by the discrete Fourier
transform preceded by a random diagonal matrix.
For Gaussian measurements, the polarization approach has also shown to be stable with respect to measurement noise \cite{alexeev_phase_2012}.
While simulations seem to suggest stability also for the derandomized masked illumination setup, a proof of stability is -- to our knowledge --
not available yet. 

An alternative approach, which we will also follow in this paper, is the {\em PhaseLift} algorithm, which is
based on the lifted formulation \eqref{eqn:linearized}. The algorithm was introduced in \cite{candes_phase_2013} and
 reconstruction guarantees have been provided in
\cite{candes_phaselift_2012, candes_solving_2012}.
The central observation
is that the matrix $x x^*$, while unknown, is certainly of rank one. 
This connects the phase retrievel problem with the young but already
extensive field of \emph{low-rank matrix recovery} \cite{recht_guaranteed_2010,
candes_power_2010, gross_recovering_2011, liu_universal_2011}.
Over the past years, this research program has rigorously
identified many instances in which low-rank matrices can be
efficiently reconstructed from few linear measurements.
The existing results on low-rank matrix recovery were not directly applicable to 
phase retrieval, because the measurement matrices $a_i a_i^*$
failed to be sufficiently \emph{incoherent} in the sense of
\cite{candes_power_2010, gross_recovering_2011} (the incoherence parameter captures the
well-posedness of a low-rank recovery problem).
For the case of Gaussian measurement vectors $a_i$, 
Cand\`es, Strohmer, Voroninski and Li were able to circumvent this problem, providing problem-specific stable recovery guarantees
\cite{candes_phaselift_2012, candes_solving_2012} 
for a number of measurements of optimal order
$\mathcal{O}(d)$. For recovery, they use a convex relaxation of the rank minimization problem, which makes the reconstruction algorithm tractable.

It should be noted, however, that because of the significantly increased problem dimensions, PhaseLift is not as efficient as many phase retrieval algorithms developed over the last decades in the physics literature (such as \cite{F82}) and the optimization literature (for example \cite{BCR03}). 
Recently there have been attempts to provide recovery guarantees for alternating minimization algorithms \cite{NJS13}, which are somewhat closer to the algorithms used in practice, but this direction of research is only at its beginnings.

While the above mentioned recovery guarantees for PhaseLift address
the issues of tractable reconstruction and stability with respect to
noise, these results leave open the question of whether measurement
systems with additional structure and less randomness still allow for
guaranteed recovery. There are both practical and theoretical
motivations for pursuing such generalizations: A practitioner may be
constrained in the choice of measurements by the application at hand
or reduce the amount of randomness required for implementation
purposes. The most prominent example are again masked Fourier
measurements, which appear as a natural model in diffraction imaging,
but a lot of different scenarios imposing different structure are
conceivable.  From a theoretical point of view, the use of Gaussian
vectors obscures the specific properties that make phase retrieval
possible. As discussed in the following subsection, it is a common
thread in randomized signal processing that results are first
established for Gaussian measurements and later generalized to
structured ensembles.

A different direction of research, which will not be pursued in this
paper, is to ask how additional structural assumptions on the signal
to be recovered, such as sparsity, can be incorporated into the
theory. A general analysis based on the Gaussian width of how many
measurements are needed to allow for stable recovery of a signal known
to lie in a set $T\subset {\mathbb{R}}^d$  is provided in \cite{EM12}.
Notably the results allow for measurements with arbitrary subgaussian
rather than just Gaussian entries. Efficient algorithms for recovery,
however, are not provided.  For the case of $s$-sparse signals, also
tractable recovery algorithms are available: It has been shown in
\cite{li_sparse_2012} that PhaseLift can recover $x$ with high
probability from Gaussian measurements for a number of measurements
$m$ proportional to $s^2$ (up to logarithmic factors), which, for
small $s$, can be considerably less than the dimension. In
\cite{EFS13}, it is shown that only a number of subgaussian
measurements scaling linearly in the sparsity (up to logarithmic
factors) is needed if recovery proceeds using certain greedy
algorithms. 

\subsection{Designs as a general-purpose tool for de-randomization}

In this paper, we focus on the theoretical aspect: which 
properties of a measurements are sufficient for PhaseLift to succeed?
We prove recovery guarantees for ensembles of measurement vectors
drawn at random from a finite set whose first $2t$ moments agree with those of Haar-random vectors (or,
essentially, Gaussian vectors). A configuration of finite vectors
which gives rise to such an ensemble is known as a \emph{complex
projective $t$-design}\footnote{
	The definition of a $t$-design varies
	between authors. 
	In particular, what is called a $t$-design here (and in most of the physics
	literature), would sometimes be referred to as a $2t$ or
	even a $(2t+1)$-design.
	See Section~\ref{sec:complex
	projective designs}
	for our precise definition.
}.
Designs were introduced 
by Delsarte, Goethals and Seidel
in a seminal paper \cite{delsarte_spherical_1977}
and have been studied in
algebraic combinatorics
\cite{sidelnikov_spherical_1999}, coding theory \cite{delsarte_spherical_1977,nebe_invariants_2001}, and recently
in quantum information theory \cite{scott_tight_2006,
ambainis_quantum_2007,hayashi2005reexamination, gross_evenly_2007,
brandao_local_2012}.
Furthermore, complex projective $2$-designs were the key ingredient
for the reconstruction formula for phase retrieval proposed in \cite{balan_painless_2009}. 

One may see a more general philosophy behind this approach. In the
field of sparse and low-rank reconstruction, a number of recovery
results had first been established for Gaussian measurements. In
subsequent works, it has then been proven that measurements drawn at
random from certain fixed orthonormal bases are actually sufficient.
Examples include uniform recovery guarantees for compressed sensing
(\cite{CT05:Decoding, badadewa08} vs.~\cite{carota06-1, RV08:sparse})
and low-rank matrix recovery (\cite{recht_guaranteed_2010}
vs.~\cite{liu_universal_2011}), respectively. Typically, the
de-randomized proofs require  much higher technical efforts and
deliver slightly weaker results. For a recent survey on structured random measurements in signal processing see \cite{krahmer_structured_2014}.

As the number of measurements needed for phase retrieval is larger
than the signal space dimension, one cannot expect these results to
exactly carry over to the phase retrieval setting. Nevertheless, the
question remains whether there is a larger, but preferably not too
large, set such that measurements drawn from it uniformly at random allow
for phase retrieval reconstruction guarantees. In some sense, the
sampling scenario we seek can be interpreted as an
interpolation between the maximally random setup of Gaussian
measurement with an optimal order of measurements and the construction
in \cite{balan_painless_2009}, which is completely deterministic, but
suboptimal in terms of the embedding dimension. While in this paper,
we will focus on the phase retrieval problem, we remark that such an
interpolating approach between measurements drawn from a basis and
maximally random measurements may also be of interest in other
situations where constructions from bases are known, but lead to
somewhat suboptimal embedding dimensions.

 The concept of $t$-designs, as defined in Section~\ref{sub:complex_designs}, provides such an interpolation. The intuition behind that definition is that with growing $t$, more and more moments of the random vector corresponding to a random selection from the $t$-design agree with the Haar measure on the unit sphere.
In that sense, as $t$ scales up further, $t$-designs give better and better
approximations to Haar-random vectors.

The utility of this concept as a general-purpose de-randomization tool
for Hilbert-space valued random construtions has been appreciated for
example in quantum information theory \cite{ambainis_quantum_2007,
low_large_2009}. It has been compared 
\cite{ambainis_quantum_2007}
to the notion of \emph{$t$-wise
independence}, which plays a role for example in the analysis of
discrete randomized algorithms \cite{luby_pairwise_2006},
seems to have been long appreciated in coding theory.
The smallest $t$-design in $\CC^d$
consists of $\mathcal{O}(d^{2t})$ elements. Thus, whenever that lower
bound is met, drawing a single element from a design requires $2t\log
d$ bits, as opposed to $2d$ bits for a complex Bernoulli vector -- an
exponential gap.

From a practical point of view, the usefulness of these concepts
hinges on the availability of constructions for designs. 
Explicit constructions
for any order $t$ and any dimension $d$ are known
\cite{hayashi2005reexamination, bajnok1992construction,
korevaar_chebyshev_1994,seymour_averaging_1984} -- however, they 
are  typically ``inefficient''
in the sense that they require a vector set of exponential size.
For example, the construction in \cite{hayashi2005reexamination} uses
$\mathcal{O}(t)^d$ vectors which is exponential in the dimension $d$. 

Tighter analytic
expressions for \emph{exact}
designs are notoriously difficult to find. Designs of degree 2 
are widely known
\cite{schwinger_unitary_1960,zauner_quantendesigns_1999, konig_cubature_1999,
klappenecker_mutually_2005}.
A concrete example is used for the converse bound
in Section~\ref{sec:converse_bounds} (as well as for the converse
bounds for low-rank matrix recovery from Fourier-type bases in
\cite{gross_recovering_2011}).
For degree 3, both real\footnote{
	While stated only for dimensions that are a power of $2$, the
	results can be used for construtions in arbitrary dimensions
	\cite{kueng_stabilizer_2013}.
}
\cite{sidelnikov_spherical_1999} and complex 
\cite{kueng_stabilizer_2013} designs are known. 
For higher $t$, there are numerical methods based on the notion of the
\emph{frame potential} \cite{renes_symmetric_2004,
klappenecker_mutually_2005,  kueng_stabilizer_2013}
, non-constructive
existence proofs \cite{seymour_averaging_1984}, and constructions in
sporadic dimensions (c.f.\ \cite{bachoc_modular_2001} and references
thererin).

Importantly, almost-tight randomized constructions for
\emph{approximate designs} for arbitrary degrees and dimensions are
known \cite{ambainis_quantum_2007,
hayashi2005reexamination,
brandao_local_2012}.
The simplest results \cite{hayashi2005reexamination} show that collections of Haar-random
vectors form approximate
$t$-designs. This indeed can reduce randomness: One only needs to
expend a considerable amount of randomness \emph{once} to generated a
design -- for subsequent applications it is sufficient to sample small
subsets from it\footnote{ The situation is comparable to the use of
random graphs as randomness expanders \cite{hoory_expander_2006}.}.
Going further, there have been recent deep results on designs obtained
from certain structured ensembles \cite{brandao_local_2012}. We do not
describe the details here, as they are geared toward quantum problems
and may have to be substantially modified to be applicable to the
phase retrivial. The only connection to phase retrieval to date is the
estimation of pure quantum states \cite{heinosaari_quantum_2013,
mondragon_determination_2013}.

Finally we point out that the notion of the \emph{frame potential} above
is no coincidence. In \cite{bachoc_tight_2012} a frame-theoretic 
approach to designs is provided, underlining their close connection.

\subsection{Main results}

In this paper, we show that spherical designs can indeed be used to
partially derandomize recovery guarantees for underdetermined
estimation problems; we generalize the recovery guarantee in \cite{candes_phaselift_2012} to measurements drawn uniformly at random from complex projective designs, at the cost of a slightly higher number of measurements.

\begin{theorem}[Main Theorem]		\label{thm:main_theorem}
	Let $x \in \CC^d$ be the unknown signal. 
	Suppose that $\|
	x\|_{\ell_2}^2$ is known and that $m$ measurement vectors $a_1,
	\dots, a_m$ have been sampled
	independently and
	uniformly at random
	from a $t$-design $D_t \subset \CC^d$  ($t \geq 3$). Then,
	with
	probability at least
	$1-\mathrm{e}^{-\omega}$, PhaseLift
	(the convex optimization problem (\ref{eq:convex_program}) below) recovers
	$x$ up to a global phase, provided that the
	sampling rate exceeds
	\begin{equation}
		m \geq \omega \,Ct\, d^{1+2/t} \log ^2 d. 
	\end{equation}
	Here $\omega \geq 1$ is an arbitrary parameter and $C$ is a
	universal constant.
\end{theorem}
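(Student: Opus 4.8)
The plan is to follow the architecture of the Cand\`es--Li analysis of PhaseLift for Gaussian vectors \cite{candes_solving_2012} and to isolate the two places where Gaussianity is used, replacing each by an argument that invokes only the defining moment property of a $t$-design, $\mathbb{E}_{a\sim D_t}[(aa^*)^{\otimes k}]=\mathbb{E}_{a\sim\mathrm{Haar}}[(aa^*)^{\otimes k}]$ for all $k\le t$. Normalising $\|x\|_{\ell_2}=1$ and the design vectors to the unit sphere, write $\mathcal{A}(H)=\{a_i^*Ha_i\}_{i=1}^m$, let $T=\{xv^*+vx^*:v\in\CC^d\}$ be the tangent space at $xx^*$, and let $P_T,P_{T^\perp}$ be the associated projections. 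By the standard convex-duality reduction (using the known trace $\tr(xx^*)=1$ to pin the $\mathbbm{1}$-direction), exact recovery of $xx^*$ by \eqref{eq:convex_program} follows from two structural conditions: (i) a \emph{local lower isometry} --- $\|\mathcal{A}(H)\|_{\ell_1}\gtrsim m\|H\|$ for every Hermitian $H$ in the relevant descent cone (equivalently, a restricted $\ell_1$-isometry of $\mathcal{A}$ on rank-$\le 2$ matrices), making $\mathcal{A}$ injective there; and (ii) an \emph{inexact dual certificate} $Y\in\mathrm{range}(\mathcal{A}^*)$ with $\|P_T(Y)-xx^*\|_F$ small and $\|P_{T^\perp}(Y)\|<1/2$ in operator norm. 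Granting (i)--(ii), the proof concludes exactly as in the Gaussian case.

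Both conditions reduce to controlling empirical averages $\tfrac1m\sum_i q(a_i)\,(a_ia_i^*)$ around their expectations, where $q$ is a \emph{fixed} quadratic form ($q(a)=|a^*Ha|$ for (i); $q(a)=a^*Wa$, $W\in T$, for (ii)). The expectations are reproduced exactly by the design identity --- already at $t=2$ for the relevant second moments --- so the ``mean'' versions of (i)--(ii) are the same (in)equalities Cand\`es--Li establish for Gaussians, and what is left is a large-deviation statement. The heart of the proof is therefore a ``design version'' of the matrix Bernstein inequality: unlike a Gaussian, a finite design vector possesses no sub-gaussian or sub-exponential tail, so the heavy tail of $q(a_i)$ must be tamed using only the Haar-matching moment estimate $\mathbb{E}_{a\sim D_t}|q(a)|^{t}\lesssim (Ct)^{t/2}(\|q\|_F/d)^{t}$ (valid because $|q(a)|^{t}$ is a degree-$t$ polynomial in $aa^*$). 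I would split each summand at a truncation level $R$, apply matrix Bernstein to the truncated part, and discard the tail via Markov's inequality $\mathbb{P}[|q(a)|>R]\le \mathbb{E}|q(a)|^{t}/R^{t}$; forcing $m\cdot\mathbb{P}[|q(a)|>R]\ll 1$ dictates $R\sim \sqrt t\,m^{1/t}\,\|q\|_F/d$, i.e.\ a threshold exceeding the Haar-typical scale $\|q\|_F/d$ by a factor $\approx\sqrt t\,m^{1/t}$. This truncation loss --- which enters squared when $q$ factors through two coherences $|a^*x|,|a^*v|$ (for $W=xv^*+vx^*$) --- is the arithmetic source of the $d^{2/t}$. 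Moreover the matrix-variance statistic of the truncated summands, $\sum_i\mathbb{E}[(a_i^*Wa_i)^2\,a_ia_i^*]$, is a third-order moment of $aa^*$, so it can be evaluated by the design identity only when $t\ge3$ --- this is precisely the role of that hypothesis (at $t=2$ only the mean is controlled and the bound degenerates to the $\Theta(d^2)$ of the converse in Section~\ref{sec:converse_bounds}).

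For (ii) I would build $Y$ by Gross's \emph{golfing scheme} \cite{gross_recovering_2011}: partition the $m$ samples into $L\asymp\log d$ batches $\Omega_1,\dots,\Omega_L$, set $W_0=xx^*$, iterate $W_\ell=W_{\ell-1}-P_T\mathcal{A}_{\Omega_\ell}^*\mathcal{A}_{\Omega_\ell}(W_{\ell-1})$ (with $\mathcal{A}_{\Omega_\ell}$ normalised to have expectation the identity on $T$), and take $Y=\sum_\ell\mathcal{A}_{\Omega_\ell}^*\mathcal{A}_{\Omega_\ell}(W_{\ell-1})$. The decisive point that makes the derandomisation possible is that $W_{\ell-1}$ is \emph{independent} of the fresh batch $\Omega_\ell$: in the step estimates $\|P_T(\mathcal{A}_{\Omega_\ell}^*\mathcal{A}_{\Omega_\ell}-\mathcal{I})P_T\|\le\tfrac12$ (with $\mathcal{I}$ the identity superoperator) and $\|P_{T^\perp}\mathcal{A}_{\Omega_\ell}^*\mathcal{A}_{\Omega_\ell}(W_{\ell-1})\|\le\tfrac14\|W_{\ell-1}\|_F$, the matrix $W_{\ell-1}$ is a fixed quadratic form, so a single per-form moment bound as above suffices --- one never needs a bound on $|a^*Wa|$ uniform over $W$, which would require $t\gtrsim d/\log d$ and is false for small $t$. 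Each step is an instance of the design--Bernstein lemma; assembling the $L\asymp\log d$ steps contributes the second logarithm. Condition (i) is handled in parallel: its descent cone lives in an effectively $O(d)$-dimensional set, so the truncated Bernstein bound plus a union bound over an $\varepsilon$-net of cardinality $\mathrm{e}^{O(d)}$ suffices (alternatively one may use Mendelson's small-ball method, with the Paley--Zygmund small-ball probability lower-bounded by the $2$-design fourth-moment identity); this part turns out to be no more expensive than (ii). Carrying the truncation loss through all these estimates and optimising $R$ against the batch size and the union/net bounds yields $m\gtrsim\omega\,t\,d^{1+2/t}\log^2 d$.

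The main obstacle is exactly this heavy-tailedness of the measurement functionals for a finite design: the uniform sub-gaussian/sub-exponential tail bounds on $|a_i^*Ha_i|$ that underpin the Cand\`es--Li large-deviation estimates are simply unavailable and must be replaced by $2t$-order moment bounds together with a truncation argument. Making this work forces one (a) to arrange \emph{every} concentration estimate so that the matrix it is applied to is either frozen by the past (via golfing) or ranges only over a low-dimensional net, so that a single $2t$-order moment is enough; and (b) to choose the truncation threshold $\approx\sqrt t\,m^{1/t}$ above the Haar scale --- large enough that the discarded tail is negligible, small enough that matrix Bernstein still beats the union bound --- which, propagated through the two coherences in $W\in T$ and the $L\asymp\log d$ golfing steps, is what produces the $d^{1+2/t}\log^2 d$ in the statement. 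Everything else --- the convex-duality reduction, the geometry of $T$, and the bookkeeping of universal constants --- is as in the Gaussian argument.
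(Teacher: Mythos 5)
Your plan for the dual certificate is essentially the paper's proof: golfing with $L\asymp\log d$ legs applied to iterates that are frozen relative to the fresh batch, truncation of the coherences at a level $\sim t\,d^{-1+2/t}$ above the Haar scale, a $t$-th-moment Markov bound to discard the tail (the paper's Lemma \ref{lem:undesired}, using $\mathbb{E}|\langle a,x\rangle|^{2k}\le k!\,d^{-k}$ for $k\le t$), operator Bernstein on the truncated summands, and the observation that the variance statistic is a third moment of $aa^*$, which is exactly where $t\ge 3$ enters (Lemma \ref{lem:P3_calculation}). Two refinements in the paper that you elide but that do not change the architecture: the certificate must satisfy $\|Y_T-X\|_2\le 1/(4d)$ rather than merely ``small'' (see below for why), and the per-leg success probability is only a constant, so the paper oversamples the number of legs and controls the number of successes by a Chernoff bound after replacing the dependent success indicators by stochastically dominated independent ones.

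The genuine gap is in your condition (i). You propose a uniform lower $\ell_1$-isometry over the descent cone (equivalently over rank-$\le 2$ matrices), established by a union bound over an $\varepsilon$-net of cardinality $\mathrm{e}^{O(d)}$. This cannot be closed with the tools you allow yourself: for fixed $t$ the discarded-tail probability $\Pr[|q(a)|>R]\le \mathbb{E}|q(a)|^{t}/R^{t}$ is only \emph{polynomially} small in $d$ for any reasonable threshold $R$, and the truncation event depends on the net point $q$, so the union over $\mathrm{e}^{O(d)}$ points forces $R\ge \mathrm{e}^{cd/t}$ times the Haar scale, which destroys the Bernstein estimate. (Your fallback to a small-ball argument might be salvageable but is a different proof.) The paper sidesteps this entirely: it proves only the crude bound $m^{-1}\|\mathcal{A}(Z)\|_2^2\ge \tfrac14 d^{-2}\|Z\|_2^2$ for $Z$ in the \emph{linear subspace} $T$ (Proposition \ref{prop:inj1}), which needs no net and, crucially, no truncation --- one applies the smallest-eigenvalue Bernstein inequality to $\mathcal{P}_T(\mathcal{R}-\mathbb{E}[\mathcal{R}])\mathcal{P}_T$, and since the summands $\mathcal{M}_i$ are positive semidefinite they satisfy $\lambda_{\min}(\mathcal{M}_i-\mathbb{E}[\mathcal{M}_i])\ge -2/m$ \emph{deterministically}, so the heavy tails never appear. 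The price is the factor $d^{-2}$ in the isometry constant, which is then repaid by demanding the $1/(4d)$-accurate certificate on $T$; the pinching-inequality argument of Proposition \ref{prop:convex geometry} combines $\|\Delta_T\|_2\le 2d\|\Delta_T^\perp\|_2$ with $\|Y_T-X\|_2\le 1/(4d)$ so that the product is the constant $1/2$. Without this rebalancing your condition (i), as stated, is the step that would fail.
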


As the discussion of the previous subsection suggests, the bounds on the
sampling rate decrease as the order of the design increases. For fixed
$t$, and up to poly-log factors, it is proportional to
$\mathcal{O}(d^{1+2/t})$. 
This is sub-quadratic for the regime $t\geq 3$ where our arguments
apply.
If the degree  is allowed to grow
logarithmialy with the dimension (as $t=2\log d$), we recover an
optimal, linear scaling up to a polylog
overhead, $m=\mathcal{O}(d\,\log^3 d)$.

In light of the highly structured, analytical and exact designs known
for degree 2 and 3, it is of great interest to ask whether a linear scaling can already be achieved for some small, fixed $t$. As shown by the following theorem, however, for $t=2$ not even a subquadratic scaling is possible if no additional assumptions are made, irrespective of the reconstruction algorithm used. 

\begin{theorem}[Converse bound] \label{thm:converse_bound}
	Let $d$ be a prime power. Then there exists a $2$-design $D_2\subset
	\CC^d$ and orthogonal, normalized vectors $x, z\in\CC^d$ 
	which have the following property.
	
	Suppose that 
	$m$ measurement vectors $y_1,
	\dots, y_m$ are sampled
	independently and
	uniformly at random from $D_2$.
	Then, for any $\omega \geq 0$, the number of measurements must obey
	\begin{equation*}
		m \geq \frac{\omega}{4}d(d+1), 
	\end{equation*}
	or the event 
	\begin{equation*}
		|\langle a_i, x\rangle|^2 =
		|\langle a_i, z\rangle|^2
		\quad
		\forall\, i\in\{1, \dots, m\}
	\end{equation*}
	will occur
	with probability at
	least $\mathrm{e}^{-\omega}$. 
\end{theorem}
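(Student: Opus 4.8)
The plan is to exhibit an explicit $2$-design together with a pair of orthogonal vectors that a sub-quadratic random sample essentially never distinguishes. The natural candidate is a complete set of mutually unbiased bases (MUBs): for $d$ a prime power there exist $d+1$ MUBs, and it is classical \cite{zauner_quantendesigns_1999,klappenecker_mutually_2005} that the union of all of them — a set of $d(d+1)$ normalized vectors — is a complex projective $2$-design in the sense of Section~\ref{sec:complex projective designs}. Since applying a fixed unitary to every element of a design preserves the design property, I may assume without loss of generality that one of these $d+1$ bases is the standard basis $\{e_1,\dots,e_d\}$. This set is our $D_2$.

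Next I would choose $x = e_1$ and $z = e_2$, which are orthogonal and normalized, and determine for which $a \in D_2$ the strict inequality $|\langle a, x\rangle|^2 \neq |\langle a, z\rangle|^2$ can possibly hold. If $a$ lies in one of the $d$ bases unbiased to the standard one, then every coordinate of $a$ has modulus $1/\sqrt{d}$, so $|\langle a,x\rangle|^2 = |\langle a,z\rangle|^2 = 1/d$ and $a$ carries no information separating $x$ from $z$. If instead $a = e_k$ belongs to the standard basis, then $|\langle a,x\rangle|^2 = \delta_{1k}$ and $|\langle a,z\rangle|^2 = \delta_{2k}$, which disagree precisely when $k \in \{1,2\}$. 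Hence exactly $2$ of the $d(d+1)$ elements of $D_2$ can ever tell $x$ apart from $z$.

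The remainder is an elementary estimate. Because $a_1,\dots,a_m$ are drawn independently and uniformly from $D_2$, the probability that $|\langle a_i,x\rangle|^2 = |\langle a_i,z\rangle|^2$ for all $i$ equals $\bigl(1 - \tfrac{2}{d(d+1)}\bigr)^m$. Using the inequality $-\log(1-u) \le 2u$ valid for $0 \le u \le 1/2$ (applicable since $d(d+1) \ge 6$, hence $u = \tfrac{2}{d(d+1)} \le \tfrac13$), this probability is bounded below by $\exp\!\bigl(-\tfrac{4m}{d(d+1)}\bigr)$, which is at least $\mathrm{e}^{-\omega}$ as soon as $m \le \tfrac{\omega}{4} d(d+1)$. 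Contrapositively, either $m \ge \tfrac{\omega}{4} d(d+1)$ or the non-distinguishing event occurs with probability at least $\mathrm{e}^{-\omega}$, which is exactly the assertion of the theorem.

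I do not anticipate a genuine technical obstacle; the care points are (i) invoking a concrete construction of a complete set of MUBs in prime-power dimension and checking that their union matches the precise normalization/degree convention of the $2$-design definition used here, and (ii) justifying the global-rotation reduction so that the standard basis may be taken to be one of the MUBs. The real content is conceptual: the extremal mutual coherence between the Fourier-type (MUB) vectors and the standard basis vectors makes nearly every element of this particular $2$-design blind to the pair $(e_1,e_2)$, and it is this degeneracy — not the reconstruction algorithm — that forces the quadratic sampling rate.
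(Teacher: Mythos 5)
Your proposal is correct and follows essentially the same route as the paper: a maximal set of MUBs as the $2$-design, two vectors from one of its constituent bases as $x$ and $z$, the observation that only those two of the $d(d+1)$ design elements can distinguish them, and the bound $-\log(1-p)\le 2p$ for the resulting Bernoulli/stopping-time estimate. The only cosmetic difference is your (harmless but unnecessary) global-unitary reduction to the standard basis, where the paper simply works with an abstract basis $\{u_i\}$ from the MUB set.
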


It is worthwhile to put this statement in perspective with other advances in the field. 
Throughout our work, we have only demanded that the set of all possible measurement vectors forms a $t$-design
and have not made any further  assumptions. 
Theorem \ref{thm:converse_bound} has to be interpreted in this regard: 
The 2-design property \emph{alone} does not allow for a sub-quadratic scaling when a ``reasonably small'' probability of failure is required in the recovery process.

Note that this does not exclude the possibility that certain realizations of 2-designs can perform better, if additional structural properties can be exploited. 
A good example for such a measurement process is the multi-illumination setup provided in \cite{candes_masked_2013}.
In \cite{gross_improved_2014} the authors of this paper verified that the set of all measurement vectors used in the framework of \cite{candes_masked_2013} does constitute a 2-design (Lemma 6). 
Additional structural properties
 -- most notably a certain correlated Fourier basis structure in the individual measurements -- 
allowed for establishing recovery guarantees already for $m = \mathcal{O} \left( d \log^4 d \right)$ measurements \cite{candes_masked_2013} and $m = \mathcal{O}(d \log^2 d)$ \cite{gross_improved_2014}, respectively 
-- which both clearly are sub-quadratic sampling rates.

\subsection{Outlook}

There are a number of  problems left open by our analysis. First,
recall that our results achieve linear scaling up to logarithmic
factors only when samples are drawn from a set of superpolynomial
size. Thus it would be very interesting to find out whether there are
polynomial size sets such that sampling from them achieves such a
scaling, in particular, if $t$-designs for some fixed $t$ can be used.
The case of $t=3$ seems particularly important in that regard, 
since the converse bound (Theorem \ref{thm:converse_bound})
shows that a design order of at least 3 is necessary.
Also,
highly structued 3-designs are known to exist (see above).

Another important follow-up problem concerns approximate $t$-designs.
While our main result is phrased for exact $t$-designs, certain
scenarios will only exhibit approximate design properties. We expect
that our proofs can be generalized to such a setup, but also leave
this problem for future work. Lastly, the reconstruction quality for
noisy measurements is also an important issue yet to be investigated.

\section{Numerical Experiments}

In this section we complement our theoretical results with numerical experiments, which we have implemented in Matlab using
CVX \cite{cvx1, cvx2}. As may have been expected, these experiments suggest that
PhaseLift from designs
actually works much better than our main theorem suggests. 
To be
concrete, we use \emph{stabilizer states} -- a highly structured
vector set which is very prominent in quantum information theory
\cite{gottesman_stabilizer_1997, gottesman_heisenberg_1999}.
 Stabilizer states exist in any dimension, though their
properties are somewhat better-behaved in prime power dimensions.
In this case, there exists
$\mathcal{O}(d^{\log d})$ stabilizer state vectors.  
Due to their rich
combinatorial structure, these vectors can be constructed efficiently.
For dimensions $d=2^n$ that are a power of two, it is known 
\cite{kueng_stabilizer_2013} 
that the
set of stabilizer states forms a
3-design.
This statement is false for other prime power
dimensions ($d \neq 2^n$ for some $n$), where they only form an exact
2-design.  However, weighted 3-designs can be constructed for
arbitrary dimensions $d$ by projecting down stabilizer states from the
next largest power-of-2-dimension $2^n$ obeying $2^{n-1}~<d<2^n$
\cite{kueng_stabilizer_2013}. For further clarification of the concept
of exact and weighted $t$-designs we defer the reader to \cite{scott_tight_2006}
and references therin.

We have used these vectors in our numerical simulations, the results of
which are depicted in Figure~\ref{fig:numerics}. For each dimension $d$
between 1 and 32 ($x$-axis) and for each number of measurements $m$ ranging
from 1 to 160 ($y$-axis), we ran a total of 30 independent experiments.
Each such experiment consisted in choosing a Haar-random (normalized
Gaussian) complex vector $x$ as test signal. Then, we drew $m$
projected stabilizer states uniformly at random and calculated their
squared overlap with the test signal. We then ran PhaseLift on this
data and declared the recovery a ``success'' if the 
Forbenius distance between the 
reconstructed
matrix $\tilde X$ and the true projection $X = x x^*$ was smaller than
$10^{-3}$.
Figure~\ref{fig:numerics} depics the empirical success probability:
Black corresponds to only failures, white to only successes.

We obtain the picture of a relatively sharp  phase transition along  a
line that scales linearly in the problem dimension.
In fact,
the transtion seems to occur in the vicinity of the line $m=4d-4$ --
drawn in red in Figure~\ref{fig:numerics}.  
This seems to agree with the conjecture that $4d-4$ 
measurements are required for injectivity  (see
e.g.~\cite{heinosaari_quantum_2013}).
However, there are a few differences in the problem setup: Firstly, the conjecture only asks whether there is a unique solution, while the numerical simulations study whether the PhaseLift algorithm can find it. Secondly, the conjecture concerns unique solutions for all possible inputs, while numerically, we estimate the success probability. And thirdly, the conjecture states that generic measurements work, while our simulations use a specific random procedure (drawn uniformly from a 3-design) to generate them.

%

\begin{figure}[h] 
\centering
\includegraphics[width=0.9\textwidth]{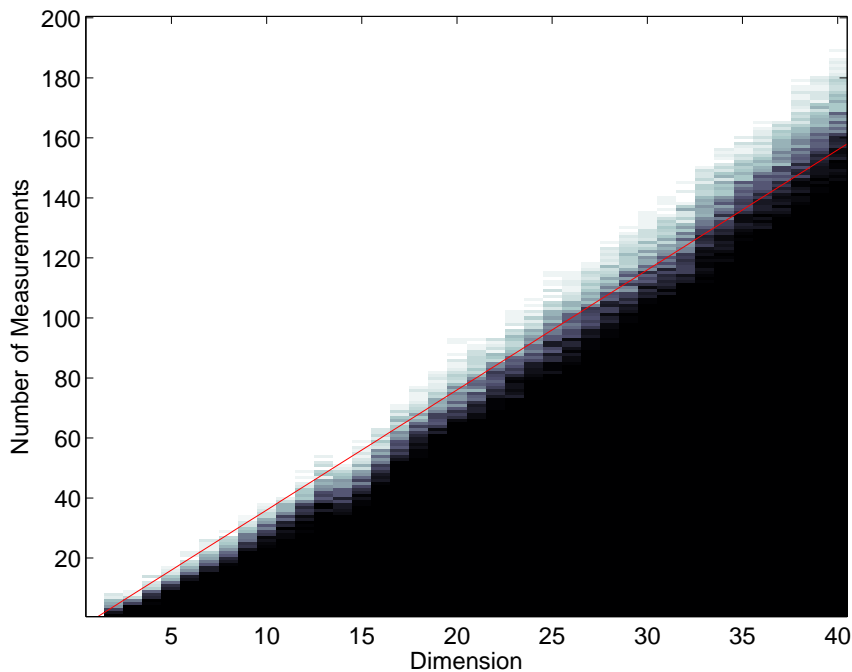}
\caption{Phase Diagram for PhaseLift from (projected) stabilizer
states, which form an exact 3-design in dimensions $2^n$ and a
weighted one else.
The $x$-axis indicates the problem's dimension, while the $y$-axis denotes the number of independent design measurements performed. 
The frequency of a successful recovery over 30 independent runs of the experiment appears color-coded from black (zero) to white (one).
To guide the eye, we have furthermore included a red line indicating $m = 4d - 4$. 
}\label{fig:numerics}
\end{figure}

\section{Technical Background and Notation}

\subsection{Vectors, Matrices and matrix valued Operators}

In this work we require three different objects of linear algebra:
vectors,  matrices and operators acting on matrices.

We will work with vectors in a $d$-dimensional complex Hilbert space $V^d$ equipped with an inner product $\langle \cdot, \cdot \rangle$. We refer to the associated induced norm by
\begin{equation*}
\| z \|_{\ell_2} = \sqrt{\langle z, z \rangle } \quad \forall z \in V^d.
\end{equation*}
We will denote such vectors by latin characters.
For $z \in V^d$, we define the dual vector $z^* \in (V^d)^*$ via
\begin{equation*}
z^* y = \langle z, y \rangle \quad \forall y \in V^d.
\end{equation*}

On the level of matrices we will exclusively consider $d \times d$ dimensional hermitian matrices, which we denote by capital latin characters.
Endowed with the Hilbert-Schmitt (or Frobenius) scalar product
\begin{equation}
(Z,Y) = \tr (Z Y) ,	\label{eq:frobenius}
\end{equation}
the space $H^d$ becomes a Hilbert space. 
In addition to that, we will require the 3 different Schatten-norms
\begin{eqnarray*}
\| Z \|_1 &=& \tr (|Z|)\quad \textrm{(trace norm)}, 	\\
\|Z \|_2 &=& \sqrt{\tr (Z^2)} \quad \textrm{(Frobenius norm)},	\\
\| Z \|_\infty &=& \sup_{y \in V^d} \frac{ \|Zy\|_{\ell_2}}{\|y\|_{\ell_2}}
\quad \textrm{(operator norm)},
\end{eqnarray*}
where the second one is induced by the scalar product (\ref{eq:frobenius}).
These three norms are related via the inequalities
\begin{equation*}
\| Z \|_2 \leq \| Z \|_1 \leq \sqrt{d} \| Z \|_2
\quad \textrm{and} \quad
\| Z \|_\infty \leq \| Z \|_2 \leq \sqrt{d} \| Z \|_\infty \quad \forall Z \in H^d.
\end{equation*}

We call a hermitian matrix $Z$ positive-semidefinite ($Z \geq 0$), 
if $ \langle y, Zy \rangle \geq 0$ for all $ y \in V^d$. 
Positive semidefinite matrices form a cone \cite{barvinok_course_2002} (Chapter II,12), which induces a partial ordering of matrices. Concretely, for $Z,Y \in H^d$ we write
$ Y \geq Z$ if $Y-Z$ is positive-semidefinite ($Y-Z \geq 0$). 

In this work, the identity matrix $\Id$ and rank-1 projectors are of particular importance.  They are
positive semidefinite and any matrix of the latter kind can be decomposed as $Z = z
z^*$ for some $z \in V^d$. 
Up to a global phase, they correspond
to vectors $z \in V^d$. 
The most important cases
are the projection onto the unknown signal $x$ and onto the $i$th
measurement vector $a_i$ respectively. They will be denoted by
\begin{equation*}
	X = x x^* 
	\quad \textrm{and} \quad
	A_i = a_i a_i^*.
\end{equation*}

Finally, we will frequently encouter \emph{matrix-valued
operators} acting on the space $H^d$.  We label such objects with
capital caligraphic letters and introduce the operator norm
\begin{equation*}
\| \MM \|_\op = \sup_{Z \in H^d} \frac{ \| \MM Z \|_2}{\| Z \|_2}
\end{equation*}
induced by the Frobenius norm on $H^d$.
It turns out that only very few matrix-valued operators will appear
below. These are:  the identity map
\begin{eqnarray*}
\II: H^d & \to & H^d 	\\
Z & \mapsto & Z \quad \forall Z \in H^d
\end{eqnarray*}
and (scalar multiples of) projectors onto some matrix $Y \in H^d$. The latter corresponds to
\begin{eqnarray*}
\Pi_Y: H^d & \to& H^d 		\\
Z & \mapsto &  Y (Y,Z) = Y \tr (YZ)   \quad \forall Z \in H^d.	
\end{eqnarray*}
The operator
\begin{equation*}
\Pi_\Id: Z \mapsto \Id \tr (\Id Z) = \Id \tr (Z) \quad \forall Z \in H^d,
\end{equation*}
is a very important example for this subclass of operators.
Note that it is not a normalized projection, but $\frac{1}{d} \Pi_\Id$ is. 
  Indeed, for $Z \in H^d$ arbitrary
\begin{equation}
\left( d^{-1} \Pi_\Id \right)^2 Z = d^{-2} \Id \tr (\Id \Pi_\Id Z ) = d^{-2} \tr (\Id) \Id \tr (Z)  = d^{-1}  \Pi_\Id Z .	\label{eq:PrId1}
\end{equation}

The notion of positive-semidefiniteness directly translates to matrix valued operators.
Concretely, we call 
$\MM$  positive-semidefinite ($\MM \geq 0$)
if $ (Z, \MM Z ) \geq 0$ for all $Z \in H^d$. 
Again, this induces a partial ordering.
Like in the matrix case, we write  $\mathcal{N} \geq \MM$,  if  $\mathcal{N} - \MM \geq 0$.
It is easy to check that all the operators introduced so far are positive semidefinite
and in particular we obtain the ordering
\begin{equation}
0 \leq \Pi_\Id \leq d \II,	\label{eq:PrId}
\end{equation}
by using (\ref{eq:PrId1}). 

\subsection{Multilinear Algebra}

The properties of $t$-designs are most naturally stated in the
framework of ($t$-fold) tensor product spaces. This motivates
recapitulating some basic concepts of multilinear algebra
that are going to greatly simplify our
analysis later on.
The concepts presented here are standard and can be found in any
textbook on multilinear algebra.  Our presentation has been influenced
in particular by \cite{landsberg_tensors_2012, watrous_lecture_2011}.

Let $V_1, \ldots, V_k$ be (finite dimensional, complex) vector spaces, and let $V_1^*, \ldots, V_k^*$ be their dual spaces. 
A function
\begin{equation*}
f: V_1 \times \cdots \times V_k \to \CC
\end{equation*}
is \emph{multilinear}, if it is linear in each $V_i$, $i=1,\ldots, k$. 
We denote the space of such functions by $V_1^* \otimes \cdots \otimes V_k^*$ and call it the \emph{tensor product} of $V_1^*, \ldots, V_k^*$. 
Consequently, the tensor product $\left(V^d \right)^{\otimes k} = \bigotimes_{i=1}^k V^d $ is the space of all multilinear functions
\begin{equation}
f: \underset{k\textrm{ times }}{\underbrace{\left( V^{d}\right)^*\times\cdots\times \left(V^{d}\right)^*}}\mapsto \CC ,		\label{eq:tensor_vector}
\end{equation}
and we call the elementary elements $z_1 \otimes \cdots \otimes z_k$ the \emph{tensor product} of the vectors $z_1, \ldots, z_k \in V^d$. 
Such an element can alternatively be defined more concretely via the \emph{Kronecker product} 
of the individual vectors. 
However, such a construction requires an explicit choice of basis in $V^d$ which is not the case in (\ref{eq:tensor_vector}).

With this notation, the space of linear maps $V^d \to V^d$ ($d \times d$-matrices) corresponds to the tensor product $M^d:=V^d \otimes \left(V^d \right)^* $ which is spanned by 
$\left\{y \otimes z^*:\;y,z \in V^d \right\}$ -- the set of all rank-1 matrices. 
For this generating set of $ M^d$, we define the \emph{trace} to be the natural bilinear map
\begin{eqnarray*}
\tr: V^d \otimes \left(V^d \right)^* & \to & \CC 	\\
\left( y \otimes z^* \right) & \mapsto & z^* y = \langle z, y \rangle 
\end{eqnarray*}
for all $y,z \in V^d$. The familiar notion of trace is obtained by extending this definition linearly to $M^d$.

Using $M^d = V^d \otimes \left(V^d\right)^*$ allows us to define the (matrix) tensor product $\left(M^d \right)^{\otimes k}$ to be the space of all multilinear functions
\begin{equation*}
f: \underset{k\textrm{ times }}{\underbrace{\left( \left(V^d \right)^* \times V^d \right) \times \cdots \times \left( \left(V^d \right)^* \times V^d \right)}}\to \CC
\end{equation*}
in complete analogy to the above. We call the elements $Z_1 \otimes \cdots \otimes Z_k$ the tensor product of the matrices $Z_1, \cdots, Z_k \in M^d$. 

On this tensor space, we define the \emph{partial trace} (over the $i$-th system) to be
\begin{eqnarray*}
\tr_i: \left(M^d \right)^{\otimes k} & \to & \left(M^d \right)^{\otimes (k-1)}	\\
Z_1 \otimes \cdots \otimes Z_k & \mapsto & \tr(Z_i) \left( Z_1 \otimes \cdots \otimes Z_{i-1} \otimes Z_{i+1} \otimes \cdots \otimes Z_k \right).
\end{eqnarray*}
Note that with the identification $M^d = V^d \otimes \left(V^d \right)^*$, $\tr_i$ corresponds to the natural contraction at position $i$. 
The partial trace over more than one system can be obtained by concatenating individual traces of this form, e.g. for $1\leq i < j \leq k$
\begin{equation*}
\tr_{i,j}:= \tr_i \circ \tr_j: \left( M^d \right)^{\otimes k}  \to \left( M^d \right)^{\otimes (k-2)}.	
\end{equation*}
In particular, the \emph{full trace} then corresponds to
\begin{eqnarray*}
\tr:= \tr_{1,\ldots,k}: \left(M^d \right)^{\otimes k} & \to & \CC	\\
\left( Z_1 \otimes \cdots \otimes Z_k \right) &\mapsto& \tr (Z_1) \ldots \tr(Z_k).
\end{eqnarray*}

Let us now return to the tensor space $\left(V^d \right)^{\otimes k}$ of vectors. 
We define the (symmetrizer) map 
$P_{\Sym^k}:  \left(V^d \right)^{\otimes k}  \to  \left( V^d \right)^{\otimes k} $
via their action on elementary elements:
\begin{equation}
P_{\Sym^k} \left( z_1 \otimes \cdots \otimes z_k \right) := \frac{1}{k!} \sum_{\pi \in S_k} z_{\pi (1)} \otimes \cdots \otimes z_{\pi(k)}, \label{eq:symmetrizer}
\end{equation}
where $S_k$ denotes the group of permutations of $k$ elements.
This map projects $ \left(V^d \right)^{\otimes k}$ onto the totally symmetric subspace $\Sym^k$ of $\left(V^d \right)^{\otimes k}$ whose dimension \cite{landsberg_tensors_2012} is
\begin{equation}
\dim \Sym^k = \binom{d+k-1}{k}		\label{eq:dimsym}.
\end{equation}

\subsection{Complex projective designs} \label{sub:complex_designs}
\label{sec:complex projective designs}

The idea of (real) spherical designs originates in coding theory \cite{delsarte_spherical_1977} and has been extended to more general spaces in
 \cite{neumaier_combinatorial_1981,hoggar_t_1982, levenshtein1998universal}. 
We refer the interested reader to Levenshtein \cite{levenshtein1998universal} for a unified treatment of designs in general metric spaces 
and from now on focus on designs in the complex vector space $ V^d$. 

Roughly speaking, a complex projective $t$-design is a finite subset of the complex unit sphere in $V^d$ with the property that the discrete average of any polynomial of degree $t$ or less equals its uniform average.
Many equivalent definitions -- see e.g. \cite{neumaier_combinatorial_1981, hoggar_t_1982, konig_cubature_1999} -- capture this essence. 
However, there is a more explicit definition of a $t$-design that is much more suitable for our purpose:

\begin{definition}[Definition 2 in \cite{scott_tight_2006}]	\label{def:design}
A finite set $\{w_1,\ldots, w_N\}\subset V^d$ of normalized vectors is called a \emph{$t$-design} of dimension $d$ if and only if 
\begin{equation}
\frac{1}{N}\sum_{i=1}^N (w_i w_i ^*)^{\otimes t}  =  \dim (\Sym^t )^{-1}  P_{\Sym^t},	\label{eq:design_def2}
\end{equation}
where $P_{\Sym^t}$ denotes the projector onto the totally symmetric subspace (\ref{eq:symmetrizer}) of $(V^d)^{\otimes t}$ and
consequently
$\dim (\Sym^t ) = \binom{d+t-1}{t}$.
\end{definition}

Note that the defining property (\ref{eq:design_def2}) is invariant
under global phase changes $w_i \mapsto \mathrm{e}^{i \phi} w_i$, thus
it matches the symmetry of the phase retrieval problem.
The definition above is equivalent to demanding
\begin{equation*}
	\frac{1}{N} \sum_{i=1}^N (w_i w_i^*)^{\otimes t} = 
	\int_{w} \mathrm{d}w\,(w w^*)^{\otimes t},	
\end{equation*}
where the right hand side is integrated with respect to the Haar
measure. This form makes the statement that $t$-designs mimic the first
$2t$ moments of Haar measure more explicit.

P. Seymor and T. Zaslavsky proved in \cite{seymour_averaging_1984} that $t$-designs on $V^d$ exist for every $t,d \geq 1$, provided that $N$ is large enough ($N \geq N(d,t)$),
but  they do not give an explicit construction. 
A necessary criterion -- cf. \cite{hoggar_t_1982, konig_cubature_1999} -- for the $t$-design property is that the number of vectors $N$ obeys 
\begin{equation}
N \geq \binom{d + \lceil t/2 \rceil -1}{\lceil t/2 \rceil} \binom{d+ \lfloor t/2 \lfloor - 1}{\lfloor t/2 \rfloor} = \mathcal{O}(d^{2t}).	\label{eq:design_bound}
\end{equation}

However, the proof in \cite{seymour_averaging_1984} is non-constructive and known constructions are ``inneficient'' in the sense that the number of vectors required greatly exceeds (\ref{eq:design_bound}). 
Hayashi et al. \cite{hayashi2005reexamination} proposed a construction requiring $ \mathcal{O}(t)^d$ vectors. 
For real spherical designs other ``inefficient'' constructions have been proposed \cite{bajnok1992construction, korevaar_chebyshev_1994} ($N = t^{\mathcal{O}(d^2)}$) which can be used to obtain complex projective designs.

Adressing this apparant lack of efficient constructions, Ambainis and Emerson \cite{ambainis_quantum_2007} proposed the notion of \emph{approximate desings}.
These vector sets only fulfill property (\ref{eq:design_def2}) only up to an $\epsilon$-precision, 
but their great advantage is that they can be constructed efficiently. Concretely, they show that for every $d \geq 2t$, there exists an $\epsilon = \mathcal{O}(d^{-1/3})$
approximate $t$-design consisting of $\mathcal{O}(d^{3t})$ vectors only.

The great value of $t$-designs is due to the following fact:
If we sample $m$ vectors $a_i, \ldots, a_m$ iid from a $t$-design $D_t = \left\{ w_1, \ldots, w_N \right\}$, the design property guarantees (with $A_i = a_i a_i^*$ and $W_i = w_i w_i^*$)
\begin{equation*}
\EE \left[ \frac{1}{m} \sum_{i=1}^m A_i^{\otimes k} \right] 
= \EE \left[ A_1^{\otimes k} \right] 
= \frac{1}{N} \sum_{i=1}^N W_i^{\otimes k} 
= \binom{d+k-1}{k}^{-1} P_{\Sym^k} 
\end{equation*}
for all $1 \leq k \leq t$. This knowledge about the first $t$ moments of the sampling procedure is the key ingredient for our partial derandomization of Gaussian PhaseLift \cite{candes_phaselift_2012}.

\subsection{Large Deviation Bounds}

This approach makes heavy use of operator-valued large deviation
bounds. They have been established first in the field of quantum
information by Ahlswede and Winter \cite{ahlswede_strong_2002}. Later
the first author of this paper and his coworkers successfully applied
these methods to the problem of low rank matrix recovery
\cite{gross_recovering_2011, gross_quantum_2010}.  By now these methods are widely used
and we borrow them in their most recent (and convenient) form from Tropp
\cite{tropp_user_2012,tropp_introduction_2012}.

\begin{theorem}[Uniform Operator Bernstein inequality, \cite{tropp_user_2012,gross_recovering_2011}] \label{thm:bernstein}
Consider a finite sequence $\left\{M_k \right\}$ of independent, random self-adjoint  operators.
Assume that each random variable satisfies $\EE \left[ M_k \right] = 0$ and $\|M_k \|_\infty \leq \overline{R}$ (for some finite constant $\overline{R}$) almost surely
and define the norm of the total variance $ \sigma^2 := \| \sum_k \EE \left[ M_k^2 \right] \|_\infty$. Then the following chain of inequalities holds for all $t\geq 0$.
\begin{equation*}
\Pr \left[ \| \sum_k M_k \|_\infty \geq t \right] \leq d \; \exp \left( - \frac{t^2/2}{\sigma^2 + \overline{R}t/3} \right) 
\leq 
\begin{cases}
 d \, \exp ( - 3t^2 / 8 \sigma^2 ) & t \leq \sigma^2/\overline{R}	\\ 
d \, \exp (-3t/8\overline{R}) & t \geq \sigma^2/\overline{R} .
\end{cases}
\end{equation*}
\end{theorem}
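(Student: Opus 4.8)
The plan is to prove this by the matrix Laplace transform method, the standard route to matrix Bernstein-type tail bounds. Write $Y := \sum_k M_k$ for the self-adjoint random sum. The first step converts the tail probability into a moment generating function estimate: for any $\theta > 0$, Markov's inequality applied to the increasing map $s \mapsto \mathrm{e}^{\theta s}$, together with the elementary bound $\mathrm{e}^{\theta \lambda_{\max}(Y)} \leq \tr\, \mathrm{e}^{\theta Y}$ (the trace of a positive matrix dominates its top eigenvalue), yields
\[
\Pr\!\left[\lambda_{\max}(Y) \geq t\right] \leq \mathrm{e}^{-\theta t}\, \EE\, \tr\, \mathrm{e}^{\theta Y}.
\]
Everything then reduces to controlling the matrix mgf $\EE\, \tr \exp\!\big(\theta \sum_k M_k\big)$.

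The crux of the argument -- and the step I expect to be the main obstacle -- is the \emph{subadditivity of the matrix cumulant generating function}. In the scalar case one would factor the mgf of a sum of independent variables into a product; for matrices this fails because the $M_k$ need not commute, so $\exp(\sum_k \theta M_k) \neq \prod_k \exp(\theta M_k)$ and the expectation cannot simply be pushed through. The resolution is Lieb's concavity theorem, which guarantees that $A \mapsto \tr \exp(H + \log A)$ is concave on the positive-definite cone; feeding the independent summands in one at a time and applying Jensen's inequality gives
\[
\EE\, \tr \exp\!\Big(\theta \sum_k M_k\Big) \leq \tr \exp\!\Big(\sum_k \log \EE\, \mathrm{e}^{\theta M_k}\Big).
\]
This is the refinement over the original Ahlswede--Winter estimate, which instead used the Golden--Thompson inequality and produced a weaker variance constant. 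It remains to bound each local term. Using $\EE M_k = 0$, $\|M_k\|_\infty \leq \overline{R}$, the transfer rule $\mathrm{e}^{\theta M_k} \preceq \Id + \theta M_k + \overline{R}^{-2}(\mathrm{e}^{\theta \overline{R}} - \theta\overline{R} - 1)\, M_k^2$, the operator bound $\log(\Id + A) \preceq A$, and the scalar Bernstein inequality $\mathrm{e}^{\theta\overline{R}} - \theta\overline{R} - 1 \leq \tfrac{1}{2}\theta^2\overline{R}^2/(1 - \theta\overline{R}/3)$ valid for $0 < \theta < 3/\overline{R}$, one arrives at
\[
\log \EE\, \mathrm{e}^{\theta M_k} \preceq g(\theta)\, \EE M_k^2, \qquad g(\theta) := \frac{\theta^2/2}{1 - \theta\overline{R}/3}.
\]

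Assembling the pieces is then routine. Summing the last inequality and using monotonicity of $\tr\exp$ together with the definition $\sigma^2 = \|\sum_k \EE M_k^2\|_\infty$ (which, since each $\EE M_k^2$ is positive-semidefinite, means $\sum_k \EE M_k^2 \preceq \sigma^2 \Id$) gives
\[
\EE\, \tr\, \mathrm{e}^{\theta Y} \leq \tr \exp\!\Big(g(\theta)\sum_k \EE M_k^2\Big) \leq d\, \mathrm{e}^{g(\theta)\sigma^2}.
\]
Inserting this into the Laplace bound and choosing $\theta = t/(\sigma^2 + \overline{R}t/3)$ -- which always lies in $(0,3/\overline{R})$ and, as a direct substitution confirms, yields the clean exponent $-\tfrac{t^2/2}{\sigma^2+\overline{R}t/3}$ -- produces
\[
\Pr\!\left[\lambda_{\max}(Y) \geq t\right] \leq d \exp\!\left(-\frac{t^2/2}{\sigma^2 + \overline{R}t/3}\right).
\]
Running the identical argument for the sequence $\{-M_k\}$ (same mean, same norm bound, same variance) controls $\lambda_{\max}(-Y)$, and since $\|Y\|_\infty = \max\{\lambda_{\max}(Y),\lambda_{\max}(-Y)\}$ the two one-sided estimates combine to give the first inequality of the claim. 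Finally, the piecewise bound follows by estimating the denominator in each regime: if $t \leq \sigma^2/\overline{R}$ then $\overline{R}t/3 \leq \sigma^2/3$, so $\sigma^2 + \overline{R}t/3 \leq \tfrac{4}{3}\sigma^2$ and the exponent is at most $-3t^2/(8\sigma^2)$; if $t \geq \sigma^2/\overline{R}$ then $\sigma^2 \leq \overline{R}t$, so $\sigma^2 + \overline{R}t/3 \leq \tfrac{4}{3}\overline{R}t$ and the exponent is at most $-3t/(8\overline{R})$, which yields the two stated cases.
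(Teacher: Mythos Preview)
The paper does not supply a proof of this theorem; it is quoted as background from the cited references \cite{tropp_user_2012,gross_recovering_2011}. Your argument is precisely the standard matrix-Laplace-transform proof from Tropp's paper: Lieb's concavity to decouple the independent summands, the Bernstein-type bound on each local mgf, optimization over $\theta$, and the elementary case split for the piecewise form. It is correct. One small remark: when you combine the two one-sided estimates for $\lambda_{\max}(Y)$ and $\lambda_{\max}(-Y)$ by a union bound you actually pick up a leading factor of $2d$, not $d$; the constant $d$ as stated in the paper is already a slight abuse (Tropp's two-sided version carries $2d$), so your derivation in fact proves the honest form of the inequality.
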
 

\begin{theorem}[Smallest Eigenvalue Bernstein Inequality, \cite{tropp_introduction_2012}]	\label{thm:smallest_eigenvalue_bernstein}
Let $S = \sum_k M_k$ be a sum of iid random matrices $M_k$ which obey $\EE \left[ M_K \right] = 0$ and $\lambda_{\textrm{min}}(M_k) \geq - \underline{R}$ almost surely for some fixed $\underline{R}$.
With the variance parameter
$\sigma^2 (S) = \| \sum_k \EE \left[ M_k^2 \right] \|_\infty $
the following chain of inequalities holds for all $t \geq 0$.
\begin{equation*}
\Pr \left[ \lambda_{\min} (S) \leq - t \right] \leq d \exp \left( - \frac{t^2/2}{\sigma^2 + \underline{R}t/3} \right) 
\leq
\begin{cases}
 d \, \exp ( - 3t^2 / 8 \sigma^2 ) & t \leq \sigma^2/\underline{R}\\ 
d \, \exp (-3t/8 \underline{R}) & t \geq \sigma^2/\underline{R}. 
\end{cases}
\end{equation*}
\end{theorem}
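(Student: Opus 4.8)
The plan is to run the matrix Laplace transform method (the Ahlswede--Winter/Tropp approach) on $-S$, exploiting $\lambda_{\min}(S) = -\lambda_{\max}(-S)$ so that $\Pr[\lambda_{\min}(S) \le -t] = \Pr[\lambda_{\max}(-S) \ge t]$. First I would apply the matrix Markov/Chernoff bound: for every $\theta > 0$,
\[
\Pr[\lambda_{\max}(-S) \ge t] \le e^{-\theta t}\, \EE\, \Tr \exp(-\theta S),
\]
which follows from $\lambda_{\max}(e^{-\theta S}) \le \Tr\, e^{-\theta S}$ together with scalar Markov. The core step, and the one that makes the matrix argument nontrivial, is to control $\EE\,\Tr\exp(-\theta S)$ for a sum of \emph{independent} summands. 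Here I would invoke the subadditivity of the matrix cumulant generating function (a consequence of Lieb's concavity theorem), which gives
\[
\EE\, \Tr \exp\!\Big(\textstyle\sum_k (-\theta M_k)\Big) \le \Tr \exp\!\Big(\textstyle\sum_k \log \EE\, e^{-\theta M_k}\Big).
\]
This replaces the scalar fact that the moment generating function of a sum factorizes, and it is the main obstacle: because the $M_k$ need not commute, one cannot simply pull a product of matrix exponentials through the trace, and the Golden--Thompson inequality alone is not strong enough for more than two summands.

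Next I would bound each matrix moment generating factor. The one-sided hypothesis $\lambda_{\min}(M_k) \ge -\underline{R}$ means $\lambda_{\max}(-M_k) \le \underline{R}$, so the eigenvalues of $-\theta M_k$ lie below $\theta\underline{R}$. Using that $x \mapsto (e^x - 1 - x)/x^2$ is increasing, I get the scalar estimate $e^{x} \le 1 + x + g(\theta)\underline{R}^{-2} x^2 / \dots$ — more cleanly, transferring $e^{y} \le 1 + y + \tfrac{e^{\theta\underline R}-1-\theta\underline R}{(\theta\underline R)^2}\,y^2$ (valid for $y \le \theta\underline R$) to the single self-adjoint matrix $-\theta M_k$. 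Taking expectations and using $\EE M_k = 0$ together with the semidefinite bound $\log(\Id + A) \preceq A$ yields
\[
\log \EE\, e^{-\theta M_k} \preceq g(\theta)\, \EE M_k^2, \qquad g(\theta) := \frac{e^{\theta\underline R} - 1 - \theta\underline R}{\underline R^{2}}.
\]
Summing over $k$, using monotonicity of $\Tr\exp$ under the Loewner order, and $\Tr \exp(B) \le d\, e^{\lambda_{\max}(B)}$, I obtain
\[
\EE\,\Tr\exp(-\theta S) \le \Tr\exp\!\Big(g(\theta)\textstyle\sum_k \EE M_k^2\Big) \le d\, \exp\!\big(g(\theta)\sigma^2\big),
\]
where $\sigma^2 = \lambda_{\max}(\sum_k \EE M_k^2)$ since each $\EE M_k^2$ is positive semidefinite. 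Combining with the Markov step gives $\Pr[\lambda_{\min}(S) \le -t] \le d\, \exp(-\theta t + g(\theta)\sigma^2)$.

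Finally I would optimize over $\theta$. The minimizer is $\theta^\star = \underline R^{-1}\log(1 + \underline R t/\sigma^2)$, producing the Bennett-type bound $d\exp\!\big(-\sigma^2 \underline R^{-2}\, h(\underline R t/\sigma^2)\big)$ with $h(u) = (1+u)\log(1+u) - u$. The first displayed inequality of the theorem then follows from the elementary estimate $h(u) \ge (u^2/2)/(1 + u/3)$. For the two-branch form I would split at $t = \sigma^2/\underline R$: when $t \le \sigma^2/\underline R$ the denominator satisfies $\sigma^2 + \underline R t/3 \le \tfrac{4}{3}\sigma^2$, giving the exponent $3t^2/(8\sigma^2)$; when $t \ge \sigma^2/\underline R$ the bound $\sigma^2 \le \underline R t$ gives $\sigma^2 + \underline R t/3 \le \tfrac{4}{3}\underline R t$ and hence $3t/(8\underline R)$. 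I expect the only genuine subtlety beyond bookkeeping to be the Lieb-theorem step; once the moment generating function bound is in hand, the remaining work is exactly the scalar Bernstein optimization specialized to the one-sided lower tail.
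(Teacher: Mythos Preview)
The paper does not prove this statement at all: Theorem~\ref{thm:smallest_eigenvalue_bernstein} is quoted verbatim from Tropp~\cite{tropp_introduction_2012} and used as a black box, so there is no in-paper proof to compare against. Your sketch is nonetheless correct and is precisely the argument given in the cited reference: reduce to the upper tail of $-S$, apply the matrix Laplace transform bound, control $\EE\,\Tr\exp(-\theta S)$ via the Lieb/Tropp subadditivity lemma, bound each summand's matrix mgf using the one-sided eigenvalue hypothesis and the monotonicity of $(e^y-1-y)/y^2$, optimize in $\theta$ to get the Bennett form, and then simplify with $h(u)\geq (u^2/2)/(1+u/3)$ and the split at $t=\sigma^2/\underline R$. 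One small cosmetic point: you do not need the i.i.d.\ assumption anywhere in your argument---independence suffices---so your proof actually gives a slightly more general statement than the one quoted here.
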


\subsection{Wiring Diagrams}	\label{sub:wiring}

The defining property 
(\ref{eq:design_def2}) 
of $t$-designs is phrased in terms of tensor spaces. To work with
these notions practically, we need tools for efficiently computing
contractions between high-order tensors.
The concept of \emph{wiring diagrams} provides such a method -- see
\cite{landsberg_tensors_2012} for an introduction and also
\cite{turaev_quantum_1994,cvitanovic_group_1984} (however, they use a
slightly different notation). Here, we give a brief description that
should suffice for our calculations.

Roughly, the calculus of wiring diagrams associates with every tensor
a box, and with every index of that tensor a line emanating from the
box. Two connected lines represent contracted indices. (More
precisely, we place contravariant indices of a tensor on top of the
associated box and covariant ones at the bottom. However, one should
be able to digest our calculations without reference to this detail).
A matrix $A: V^d \to V^d$ can be seen as a two-indexed tensor
${A^i}_j$. It will thus be represented by a node
$
\tikz[heighttwo,xscale=.5,baseline]{
\coordinate(up)at(0,0.8);
\coordinate(mid)at(0,0.5);
\coordinate(down)at(0,0.2);
\draw(down)to(up);
\draw(mid)node[vector]{$A$};
}
$
with the upper line corresponding to the index $i$ and the lower one
to $j$. Two matrices $A, B$ are multiplied by contracting $B$'s
``contravariant'' index with $A$'s ``covariant'' one:
\begin{equation*}
	{(A B)^i}_j = \sum_k {A^i}_k {B^k}_j
\end{equation*}
Pictographically, we write
\begin{equation*}
	AB
	=
	\tikz[heighttwo,xscale=.5,baseline]{
	\coordinate(up)at(0,01);
	\coordinate(mid1)at(0,0.73);
	\coordinate(mid2)at(0,0.27);
	\coordinate(down)at(0,0);
	\draw(down)to(up);
	\draw(mid1)node[vector]{$A$};
	\draw(mid2)node[vector]{$B$};
	}
\end{equation*}
The trace operation
\begin{equation*}
	A \mapsto \tr A = \sum_k {A^k}_k
\end{equation*}
corresponds to a contraction of the two indices of a matrix:
\begin{equation*}
	\tr (A) 
	= 
	\tikz[heighttwo,xscale=.5,baseline]{
	\coordinate(up)at(0,0.8);
	\coordinate(trup)at(0.5,0.8);
	\coordinate(mid)at(0,0.5);
	\coordinate(down)at(0,0.2);
	\coordinate(trdown)at(0.5,0.2);
	\draw(down)to(up);
	\draw(trdown)to(trup);
	\draw(up)to[out=90,in=90](trup);
	\draw(down)to[out=-90,in=-90](trdown);
	\draw(mid)node[vector]{$A$};
	}.
\end{equation*}
Tensor products are arranged in parallel:
\begin{equation*}
	A \otimes B
	=
	\tikz[heighttwo,xscale=.5,baseline]{
	\coordinate(up)at(0,0.9);
	\coordinate(upr)at(1,0.9);
	\coordinate(mid)at(0,0.5);
	\coordinate(midr)at(1,0.5);
	\coordinate(down)at(0,0.1);
	\coordinate(downr)at(1,0.1);
	\draw(down)to(up);
	\draw(downr)to(upr);
	\draw(mid)node[vector]{$A$};
	\draw(midr)node[vector]{$B$};
	}.
\end{equation*}
Hence, a partial trace takes the following form:
\begin{equation*}
\tr_2 \left( A \otimes B \right)
=
\tikz[heighttwo,xscale=.5,baseline]{
\coordinate(up)at(0,0.9);
\coordinate(upr)at(1,0.8);
\coordinate(mid)at(0,0.5);
\coordinate(midr)at(1,0.5);
\coordinate(down)at(0,0.1);
\coordinate(downr)at(1,0.2);
\coordinate(trup)at(1.5,0.8);
\coordinate(trdown)at(1.5,0.2);
\draw(down)to(up);
\draw(downr)to(upr);
\draw(trdown)to(trup);
\draw(downr)to[out=-90,in=-90](trdown);
\draw(upr)to[out=90,in=90](trup);
\draw(mid)node[vector]{$A$};
\draw(midr)node[vector]{$B$};
}\;.
\end{equation*}

The last ingredient we need are the \emph{transpositions}
$\sigma_{(i,j)}$ on $(V^d)^{\otimes t}$ which act by interchanging the
$i$th and the $j$th tensor factor.
For example
\begin{equation*}
	\sigma_{(1,2)} \left( x \otimes y \otimes \cdots \right)
	= y \otimes x \otimes \cdots,
\end{equation*}
with $x,y \in V^d$ arbitrary. 
Transpositions suffice, because they generate the full group of
permutations.
For $ \left( V^d \right)^{\otimes 2} $ we only have
\begin{equation*}
\underline{1} = 
\tikz[heighttwo,xscale=.5,baseline]{
\coordinate(up)at(0,0.9);
\coordinate(upr)at(1,0.9);
\coordinate(mid)at(0,0.5);
\coordinate(midr)at(1,0.5);
\coordinate(down)at(0,0.1);
\coordinate(downr)at(1,0.1);
\draw(down)to(up);
\draw(downr)to(upr);
}
\;\textrm{(trivial permutation)}
\quad
\textrm{and}
\quad
\sigma_{(1,2)}
=
\tikz[heighttwo,xscale=.5,baseline]{
\coordinate(up)at(0,0.9);
\coordinate(upr)at(1,0.9);
\coordinate(mid)at(0,0.5);
\coordinate(midr)at(1,0.5);
\coordinate(down)at(0,0.1);
\coordinate(downr)at(1,0.1);
\draw(down)to[wavyup](upr);
\draw(downr)to[wavyup](up);
},
\end{equation*} 
but for higher tensor systems more permutations can occur. 
Consequently, permutations act by interchanging different input and output lines
and the wiring diagram representation allows one to  keep track of this pictorially. 
In fact, only the input and output position of a line matters.
We can use diagrams to simplify expressions by disentangling the corresponding lines.
Take $\sigma_{(1,2)}$ on $\left( V^d \right)^{\otimes 2}$ as an example. 
Using wiring diagrams we can derive the standard result
\begin{equation*}
\sigma_{(1,2)}^2 
=
\tikz[heighttwo,xscale=.5,baseline]{
\coordinate(up)at(0,0.9);	\coordinate(upr)at(1,0.9);
\coordinate(down)at(0,0.1); \coordinate(downr)at(1,0.1);
\coordinate(mid)at(0,0.5); \coordinate(midr)at(1,0.5);
\draw(down)to[wavyup](midr);
\draw(midr)to[wavyup](up);
\draw(downr)to[wavyup](mid);
\draw(mid)to[wavyup](upr);
}
=
\tikz[heighttwo,xscale=.5,baseline]{
\coordinate(up)at(0,0.9);	\coordinate(upr)at(1,0.9);
\coordinate(down)at(0,0.1); \coordinate(downr)at(1,0.1);
\coordinate(mid)at(0,0.5); \coordinate(midr)at(1,0.5);
\draw(down)to[wavyup](up);
\draw(downr)to[wavyup](upr);
}
= \underline{1}
\end{equation*}
pictorially. 
We are now ready to prove some important auxiliary results.

\begin{lemma}	\label{lem:tr1sym2}
Let $A,B \in H^d$ be arbitrary. Then it holds that
\begin{equation}
\tr_2 \left( P_{\Sym^2} A \otimes B \right) 
= \frac{1}{2} \left( \tr (B) A + BA \right).	\label{eq:tr1sym2}
\end{equation}
\end{lemma}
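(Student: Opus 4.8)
The plan is to compute both sides in wiring-diagram language, using the explicit formula $P_{\Sym^2} = \tfrac12(\underline 1 + \sigma_{(1,2)})$ for the symmetrizer on $(V^d)^{\otimes 2}$, which is immediate from \eqref{eq:symmetrizer}. By linearity of $\tr_2$, it suffices to analyze the two terms
\begin{equation*}
	\tr_2\bigl(P_{\Sym^2}\, A\otimes B\bigr)
	= \tfrac12\,\tr_2\bigl(A\otimes B\bigr) + \tfrac12\,\tr_2\bigl(\sigma_{(1,2)}\,A\otimes B\bigr)
\end{equation*}
separately. The first term is handled by the partial-trace rule recalled above: $\tr_2(A\otimes B) = \tr(B)\,A$, directly from the definition of $\tr_i$ on elementary tensors. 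The only real content is the second term, and the claim amounts to showing $\tr_2\bigl(\sigma_{(1,2)}\,A\otimes B\bigr) = BA$.

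For the second term I would draw the wiring diagram: stack the box for $A$ on the first strand and the box for $B$ on the second strand, precompose (below) with the crossing $\sigma_{(1,2)}$, and then close off the second system by connecting its top output line to its bottom input line (the partial trace $\tr_2$). Disentangling the crossing — using only that the position of a line's endpoints is all that matters, exactly as in the $\sigma_{(1,2)}^2 = \underline 1$ computation above — one sees that the trace-loop forces the upper index of $B$ to contract with the lower index of $A$, i.e. the two boxes get composed in series as $BA$ on the surviving first system, with no extra loop and hence no stray trace factor. Reading off the resulting diagram gives $\tr_2(\sigma_{(1,2)}\,A\otimes B) = BA$. Combining, $\tr_2(P_{\Sym^2}A\otimes B) = \tfrac12(\tr(B)A + BA)$, which is \eqref{eq:tr1sym2}.

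The main obstacle is purely bookkeeping: getting the order of composition right in the crossed diagram (so that the answer is $BA$ and not $AB$), and making sure the partial trace is taken over the correct tensor factor after the permutation has rearranged the strands. A safe cross-check, which I would include as a one-line verification, is to evaluate both sides on rank-one inputs $A = yy^*$, $B = zz^*$ (which span $H^d$ after complexification, so it suffices): the left side gives $\tfrac12\tr_2\bigl((y\otimes y + \text{sym})(z\otimes z)^*\bigr)$-type contractions, and a direct index computation yields $\tfrac12\bigl(\langle z,z\rangle\, yy^* + \langle z,y\rangle\, y z^*\bigr) = \tfrac12(\tr(B)A + BA)$, matching the right side. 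This confirms the diagrammatic manipulation and pins down the composition order.
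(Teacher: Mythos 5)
Your proof is correct and follows essentially the same route as the paper's: decompose $P_{\Sym^2}=\tfrac12(\underline{1}+\sigma_{(1,2)})$, note $\tr_2(A\otimes B)=\tr(B)\,A$, and read off $\tr_2(\sigma_{(1,2)}\,A\otimes B)=BA$ from the wiring diagram (the paper also records the index-based version of this same computation in its appendix). One small slip in your rank-one cross-check: with $A=yy^*$, $B=zz^*$ the second term should be $\langle z,y\rangle\,zy^*=BA$, not $\langle z,y\rangle\,yz^*$ --- worth fixing precisely because the composition order is the point you flagged as the main hazard.
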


We remark that in general,
\begin{equation*}
	P_{\Sym^2} \left( X\otimes Y \right) \neq \frac{1}{2} \left( X \otimes Y + Y \otimes X \right),
\end{equation*}
which is, in our experience, a common misconception.

\begin{proof}[Proof of Lemma \ref{lem:tr1sym2}]
The basic formula (\ref{eq:symmetrizer}) for $P_{\Sym^2}$ is given by
\begin{equation*}
P_{\Sym^2} = \frac{1}{2} \sum_{\pi \in S_2} \sigma_{\pi(1),\pi(2)} = \frac{1}{2} \left( \underline{1} + \sigma_{(1,2)} \right),
\end{equation*}
and the concepts from above allow us to translate this into the following wiring diagram:

\begin{equation*}
\tikz[heighttwo,xscale=.5,baseline]{
\coordinate(up)at(0,0.9);
\coordinate(upr)at(1,0.9);
\coordinate(mid)at(0,0.5);
\coordinate(midr)at(1,0.5);
\coordinate(down)at(0,0.1);
\coordinate(downr)at(1,0.1);
\draw(down)to(up);
\draw(downr)to(upr);
\draw[draw=black,fill=gray!10](-0.3,.25)rectangle(1.3,.75);\node[basiclabel]at(0.51,.48){$P_{\Sym^2}$};
}
=
\frac{1}{2} \left(
\tikz[heighttwo,xscale=.5,baseline]{
\coordinate(up)at(0,0.9);
\coordinate(upr)at(1,0.9);
\coordinate(mid)at(0,0.5);
\coordinate(midr)at(1,0.5);
\coordinate(down)at(0,0.1);
\coordinate(downr)at(1,0.1);
\draw(down)to(up);
\draw(downr)to(upr);
}
+
\tikz[heighttwo,xscale=.5,baseline]{
\coordinate(up)at(0,0.9);
\coordinate(upr)at(1,0.9);
\coordinate(mid)at(0,0.5);
\coordinate(midr)at(1,0.5);
\coordinate(down)at(0,0.1);
\coordinate(downr)at(1,0.1);
\draw(down)to[wavyup](upr);
\draw(downr)to[wavyup](up);
}
\right).
\end{equation*}
(Note that this operator acts on the full tensor space $ \left( V^d \right)^{\otimes 2}$,
hence in the wiring diagram it is represented by a two-indexed box.) 
Applying the graphical calculus yields
\begin{eqnarray*}
\tr_{2} \left( P_{\Sym^2} A \otimes B \right)
&=& 
\tikz[heighttwo,xscale=.5,baseline]{
	\coordinate(top)at(0,1.3){};	\coordinate(topr)at(1,1.3){};
	\coordinate(mid1)at(0,1){};	\coordinate(mid1r)at(1,1){};
	\coordinate(mid2)at(0,0.5){};	\coordinate(mid2r)at(1,0.5){};
	\coordinate(bot)at(0,0){};		\coordinate(botr)at(1,0){};
	\coordinate(toptr)at(1.5,1.3){};	\coordinate(bottr)at(1.5,0){};
	\coordinate(in)at(0,1.4){};
	\coordinate(out)at(0,-0.1){};
	\draw(bot)to(top);
	\draw(botr)to(topr);
	\draw(mid1)node[vector]{$A$};
	\draw(mid1r)node[vector]{$B$};
	\draw[draw=black,fill=gray!10](-0.3,.25)rectangle(1.3,.75);\node[basiclabel]at(0.51,.48){$P_{\Sym^2}$};
	\draw(bottr)to(toptr);
	\draw(topr)to[out=90,in=90](toptr);
	\draw(botr)to[out=-90,in=-90](bottr);
	\draw(top)to(in);
	\draw(out)to(bot);
}
=
\frac{1}{2}
\left(
\tikz[heighttwo,xscale=.5,baseline]{
	\coordinate(top)at(0,1.3){};	\coordinate(topr)at(1,1.3){};
	\coordinate(mid1)at(0,1){};	\coordinate(mid1r)at(1,1){};
	\coordinate(mid2)at(0,0.5){};	\coordinate(mid2r)at(1,0.5){};
	\coordinate(bot)at(0,0){};		\coordinate(botr)at(1,0){};
	\coordinate(toptr)at(1.5,1.3){};	\coordinate(bottr)at(1.5,0){};
	\coordinate(in)at(0,1.4){};
	\coordinate(out)at(0,-0.1){};
	\draw(bot)to(top);
	\draw(botr)to(topr);
	\draw(mid1)node[vector]{$A$};
	\draw(mid1r)node[vector]{$B$};
	\draw(bottr)to(toptr);
	\draw(topr)to[out=90,in=90](toptr);
	\draw(botr)to[out=-90,in=-90](bottr);
	\draw(top)to(in);
	\draw(out)to(bot);
}
+
\tikz[heighttwo,xscale=.5,baseline]{
	\coordinate(top)at(0,1.3){};	\coordinate(topr)at(1,1.3){};
	\coordinate(mid1)at(0,1){};	\coordinate(mid1r)at(1,1){};
	\coordinate(mid2)at(0,0.5){};	\coordinate(mid2r)at(1,0.5){};
	\coordinate(bot)at(0,0){};		\coordinate(botr)at(1,0){};
	\coordinate(toptr)at(1.5,1.3){};	\coordinate(bottr)at(1.5,0){};
	\coordinate(in)at(0,1.4){};
	\coordinate(out)at(0,-0.1){};
	\draw(bot)to[wavyup](mid1r);
	\draw(botr)to[wavyup](mid1);
	\draw(bottr)to(toptr);
	\draw(mid1r)to(topr);
	\draw(topr)to[out=90,in=90](toptr);
	\draw(botr)to[out=-90,in=-90](bottr);
	\draw(mid1)to(in);
	\draw(out)to(bot);
	\draw(mid1)node[vector]{$A$};
	\draw(mid1r)node[vector]{$B$};
}
\right)	
=
\frac{1}{2}
\left(
\tikz[heighttwo,xscale=.5,baseline]{
	\coordinate(top)at(0,1.3){};	\coordinate(topr)at(1,1.3){};
	\coordinate(mid1)at(0,1){};	\coordinate(mid1r)at(1,1){};
	\coordinate(mid2)at(0,0.5){};	\coordinate(mid2r)at(1,0.5){};
	\coordinate(bot)at(0,0){};		\coordinate(botr)at(1,0){};
	\coordinate(toptr)at(1.5,1.3){};	\coordinate(bottr)at(1.5,0){};
	\coordinate(in)at(0,1.4){};
	\coordinate(out)at(0,-0.1){};
	\draw(bot)to(top);
	\draw(botr)to(topr);
	\draw(mid1)node[vector]{$A$};
	\draw(mid1r)node[vector]{$B$};
	\draw(bottr)to(toptr);
	\draw(topr)to[out=90,in=90](toptr);
	\draw(botr)to[out=-90,in=-90](bottr);
	\draw(top)to(in);
	\draw(out)to(bot);
}
+
\tikz[heighttwo,xscale=.5,baseline]{
\coordinate(up)at(0,1.4);
\coordinate(mid1)at(0,1);
\coordinate(mid2)at(0,0.3);
\coordinate(down)at(0,-0.1);
\draw(down)to(up);
\draw(mid1)node[vector]{$A$};
\draw(mid2)node[vector]{$B$};
}
\right)		\\
 &=& 
\frac{1}{2} \left(  \tr (B) A + BA \right)	,	
\end{eqnarray*}
which is the desired result.
\end{proof}

Obviously, it is also possible to obtain (\ref{eq:tr1sym2}) by direct calculation. 
We have included such a calculation in the appendix (Section
\ref{sub:alternative_proof}) to demonstrate the complexity of direct
calculations as compared to graphical ones.

We conclude this section with the following slightly more involved result.

\begin{lemma}	\label{lem:P3_calculation}
Let $A,B,C \in H^d$ be arbitrary. Then it holds that
\begin{eqnarray}
& & \tr_{2,3} \left( P_{\Sym^3} A \otimes B \otimes C \right)	\label{eq:tr23sym3} \\
 &=& 
\frac{1}{6} \left(A \; \mathrm{tr}(B) \mathrm{tr}(C) + BA \; \mathrm{tr}(C) + CA \mathrm{tr}(B)
	+ A \; \mathrm{tr}(BC) + CBA + BCA \right).	\nonumber
\end{eqnarray}
\end{lemma}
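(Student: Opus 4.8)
The plan is to run the same graphical argument used for Lemma~\ref{lem:tr1sym2}, now on the three-fold tensor space $(V^d)^{\otimes 3}$. First I would expand the symmetrizer via its definition~(\ref{eq:symmetrizer}),
\begin{equation*}
	P_{\Sym^3} \;=\; \frac{1}{6}\sum_{\pi\in S_3}\sigma_\pi,
\end{equation*}
a sum over the six permutations of $\{1,2,3\}$, each of which is generated by the transpositions $\sigma_{(i,j)}$ and hence has a wiring-diagram representation in which the three incoming strands are re-routed to the three outgoing strands according to $\pi$. Composing such a diagram with $A\otimes B\otimes C$ places the boxes $A,B,C$ on strands $1,2,3$, and applying $\tr_{2,3}$ closes strands $2$ and $3$ into loops while leaving strand $1$ open. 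By linearity of $\tr_{2,3}$ it then remains to evaluate the six resulting diagrams one at a time.

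Each diagram is read off by the disentangling rule already illustrated for $\sigma_{(1,2)}$ in Lemma~\ref{lem:tr1sym2}: after straightening the lines, the cycle of $\pi$ containing the open strand~$1$ yields an ordered matrix product, while every cycle disjoint from~$1$ yields a closed loop, i.e.\ the trace of the corresponding ordered product. Concretely, the identity (cycle type $(1)(2)(3)$) gives $A\,\tr(B)\,\tr(C)$; the transposition $\sigma_{(1,2)}$ gives $BA\,\tr(C)$ and $\sigma_{(1,3)}$ gives $CA\,\tr(B)$; the transposition $\sigma_{(2,3)}$, which fixes strand~$1$, gives $A\,\tr(BC)$; and the two $3$-cycles give $CBA$ and $BCA$ — the two orderings of the product $ABC$ that keep $A$ rightmost. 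Summing these six contributions and dividing by $6$ produces exactly the right-hand side of~(\ref{eq:tr23sym3}).

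The only delicate point is the bookkeeping of orientations: deciding for each cycle which of its two cyclic orderings actually appears (e.g.\ $BA$ rather than $AB$ for $\sigma_{(1,2)}$, and $CBA$ versus $BCA$ for the two $3$-cycles), and checking that the disjoint $2$--$3$ loop in the $\sigma_{(2,3)}$ term closes to $\tr(BC)$. I would fix the convention once by comparing the action of one $3$-cycle on elementary tensors $x\otimes y\otimes z$ with its index expression ${\sigma_\pi}^{i_1i_2i_3}_{\ j_1j_2j_3}$, after which the remaining cases follow by the same mechanism. It is also worth recalling the warning after Lemma~\ref{lem:tr1sym2}: here $P_{\Sym^3}(A\otimes B\otimes C)$ means the operator composition $P_{\Sym^3}\circ(A\otimes B\otimes C)$ on $(V^d)^{\otimes3}$, not a symmetrization of the three tensor legs, which is precisely why genuine matrix products such as $CBA$ appear rather than mere permutations of $A\otimes B\otimes C$. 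As in Lemma~\ref{lem:tr1sym2}, a purely index-based derivation is possible but considerably more cumbersome; the diagrammatic route keeps the six cases transparent.
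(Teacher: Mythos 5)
Your proposal is correct and follows essentially the same route as the paper: expand $P_{\Sym^3}$ into the six permutations of $S_3$, compose with $A\otimes B\otimes C$, apply $\tr_{2,3}$, and read off each diagram by its cycle structure, with the cycle through the open strand giving an ordered product and the closed loop giving a trace. Your identification of which permutation produces which of the six terms (including the orientation conventions $BA$, $CA$, $CBA$, $BCA$) matches the paper's diagrammatic computation exactly.
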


The proof can in principle be obtained by evaluating all permutations of 3 tensor systems algebraically and taking the partial trace afterwards.
However, a pictorial calculation using wiring diagrams is much faster and more elegant.

\begin{proof}
For permutations of three elements, formula (\ref{eq:symmetrizer}) implies
\begin{equation*}
P_{\Sym^3} = \frac{1}{6} \sum_{\pi \in S_3} \sigma_{\pi (1),\pi(2),\pi(3)}
= \frac{1}{6} \left( \sigma_{1,2,3} + \sigma_{2,1,3} + \sigma_{3,2,1}
+ \sigma_{1,3,2} + \sigma_{2,3,1} + \sigma_{3,1,2} \right),
\end{equation*}
where. $\sigma_{2,1,3}(u \otimes v \otimes w ) = (v \otimes u \otimes w)$, etc.
This in turn allows us to write
\begin{eqnarray*}
	&&
  \tikz[heighttwo,xscale=.5,baseline]{
	\coordinate(top)at(1,1.3){};\coordinate(topm)at(2,1.3){};\coordinate(topr)at(3,1.3){};
	\coordinate(mid)at(1,1){};\coordinate(midm)at(2,1){};\coordinate(midr)at(3,1){};
	\coordinate(bot)at(1,0){};\coordinate(botm)at(2,0){};\coordinate(botr)at(3,0){};
	\coordinate(tr3up)at(3.5,1.3){}; \coordinate(tr3bot)at(3.5,0){};
	\coordinate(tr2up)at(1.5,1.3){}; \coordinate(tr2bot)at(1.5,0){};
	\coordinate(in)at(1,1.4){};
	\coordinate(out)at(1,-0.1){};
	\draw(bot)to(top){};
	\draw(botm)to(topm){};
	\draw(botr)to(topr){};
           \draw[draw=black,fill=gray!10](0.8,.35)rectangle(3.2,.75);\node[basiclabel]at(2,.5){$P_{\Sym^3}$};
	\draw(mid)node[vector]{$A$};
	\draw(midm)node[vector]{$B$};
	\draw(midr)node[vector]{$C$};
	\draw(topr)to[out=90,in=90](tr3up);
	\draw(botr)to[out=-90,in=-90](tr3bot);
	\draw(tr3bot)to[wavyup](tr3up);
	\draw(topm)to[out=90,in=90](tr2up);
	\draw(botm)to[out=-90,in=-90](tr2bot);
	\draw(tr2bot)to[wavyup](tr2up);
	\draw(out)to(bot);
	\draw(top)to(in);
} \\
&=&
\frac{1}{6} \left(
\tikz[heighttwo,xscale=.5,baseline]{
	\coordinate(top)at(1,1.3){};\coordinate(topm)at(2,1.3){};\coordinate(topr)at(3,1.3){};
	\coordinate(mid)at(1,1){};\coordinate(midm)at(2,1){};\coordinate(midr)at(3,1){};
	\coordinate(bot)at(1,0){};\coordinate(botm)at(2,0){};\coordinate(botr)at(3,0){};
	\coordinate(tr3up)at(3.5,1.3){}; \coordinate(tr3bot)at(3.5,0){};
	\coordinate(tr2up)at(1.5,1.3){}; \coordinate(tr2bot)at(1.5,0){};
	\coordinate(in)at(1,1.4){};
	\coordinate(out)at(1,-0.1){};
	\draw(bot)to[wavyup](mid);
	\draw(botm)to[wavyup](midm);
	\draw(botr)to[wavyup](midr);
	\draw(mid)to[wavyup](top);
	\draw(midm)to[wavyup](topm);
	\draw(midr)to[wavyup](topr);
	\draw(mid)node[vector]{$A$};
	\draw(midm)node[vector]{$B$};
	\draw(midr)node[vector]{$C$};
	\draw(topr)to[out=90,in=90](tr3up);
	\draw(botr)to[out=-90,in=-90](tr3bot);
	\draw(tr3bot)to[wavyup](tr3up);
	\draw(topm)to[out=90,in=90](tr2up);
	\draw(botm)to[out=-90,in=-90](tr2bot);
	\draw(tr2bot)to[wavyup](tr2up);
	\draw(out)to(bot);
	\draw(top)to(in);
}
+
\tikz[heighttwo,xscale=.5,baseline]{
	\coordinate(top)at(1,1.3){};\coordinate(topm)at(2,1.3){};\coordinate(topr)at(3,1.3){};
	\coordinate(mid)at(1,1){};\coordinate(midm)at(2,1){};\coordinate(midr)at(3,1){};
	\coordinate(bot)at(1,0){};\coordinate(botm)at(2,0){};\coordinate(botr)at(3,0){};
	\coordinate(tr3up)at(3.5,1.3){}; \coordinate(tr3bot)at(3.5,0){};
	\coordinate(tr2up)at(2.5,1.3){}; \coordinate(tr2bot)at(2.5,0){};
	\coordinate(in)at(1,1.4){};
	\coordinate(out)at(1,-0.1){};
	\draw(bot)to[wavyup](midm);
	\draw(botm)to[wavyup](mid);
	\draw(botr)to[wavyup](midr);
	\draw(mid)to[wavyup](top);
	\draw(midm)to[wavyup](topm);
	\draw(midr)to[wavyup](topr);
	\draw(mid)node[vector]{$A$};
	\draw(midm)node[vector]{$B$};
	\draw(midr)node[vector]{$C$};
	\draw(topr)to[out=90,in=90](tr3up);
	\draw(botr)to[out=-90,in=-90](tr3bot);
	\draw(tr3bot)to[wavyup](tr3up);
	\draw(topm)to[out=90,in=90](tr2up);
	\draw(botm)to[out=-90,in=-90](tr2bot);
	\draw(tr2bot)to[wavyup](tr2up);
	\draw(out)to(bot);
	\draw(top)to(in);
}
+
\tikz[heighttwo,xscale=.5,baseline]{
	\coordinate(top)at(1,1.3){};\coordinate(topm)at(2,1.3){};\coordinate(topr)at(3,1.3){};
	\coordinate(mid)at(1,1){};\coordinate(midm)at(2,1){};\coordinate(midr)at(3,1){};
	\coordinate(bot)at(1,0){};\coordinate(botm)at(2,0){};\coordinate(botr)at(3,0){};
	\coordinate(tr3up)at(3.5,1.3){}; \coordinate(tr3bot)at(3.5,0){};
	\coordinate(tr2up)at(1.5,1.3){}; \coordinate(tr2bot)at(1.5,0){};
	\coordinate(in)at(1,1.4){};
	\coordinate(out)at(1,-0.1){};
	\draw(bot)to[wavyup](midr);
	\draw(botm)to[wavyup](midm);
	\draw(botr)to[wavyup](mid);
	\draw(mid)to[wavyup](top);
	\draw(midm)to[wavyup](topm);
	\draw(midr)to[wavyup](topr);
	\draw(mid)node[vector]{$A$};
	\draw(midm)node[vector]{$B$};
	\draw(midr)node[vector]{$C$};
	\draw(topr)to[out=90,in=90](tr3up);
	\draw(botr)to[out=-90,in=-90](tr3bot);
	\draw(tr3bot)to[wavyup](tr3up);
	\draw(topm)to[out=90,in=90](tr2up);
	\draw(botm)to[out=-90,in=-90](tr2bot);
	\draw(tr2bot)to[wavyup](tr2up);
	\draw(out)to(bot);
	\draw(top)to(in);
}
+
\tikz[heighttwo,xscale=.5,baseline]{
	\coordinate(top)at(1,1.3){};\coordinate(topm)at(2,1.3){};\coordinate(topr)at(3,1.3){};
	\coordinate(mid)at(1,1){};\coordinate(midm)at(2,1){};\coordinate(midr)at(3,1){};
	\coordinate(bot)at(1,0){};\coordinate(botm)at(2,0){};\coordinate(botr)at(3,0){};
	\coordinate(tr3up)at(3.5,1.3){}; \coordinate(tr3bot)at(3.5,0){};
	\coordinate(tr2up)at(1.5,1.3){}; \coordinate(tr2bot)at(1.5,0){};
	\coordinate(in)at(1,1.4){};
	\coordinate(out)at(1,-0.1){};
	\draw(bot)to[wavyup](mid);
	\draw(botm)to[wavyup](midr);
	\draw(botr)to[wavyup](midm);
	\draw(mid)to[wavyup](top);
	\draw(midm)to[wavyup](topm);
	\draw(midr)to[wavyup](topr);
	\draw(mid)node[vector]{$A$};
	\draw(midm)node[vector]{$B$};
	\draw(midr)node[vector]{$C$};
	\draw(topr)to[out=90,in=90](tr3up);
	\draw(botr)to[out=-90,in=-90](tr3bot);
	\draw(tr3bot)to[wavyup](tr3up);
	\draw(topm)to[out=90,in=90](tr2up);
	\draw(botm)to[out=-90,in=-90](tr2bot);
	\draw(tr2bot)to[wavyup](tr2up);
	\draw(out)to(bot);
	\draw(top)to(in);
}
+
\tikz[heighttwo,xscale=.5,baseline]{
	\coordinate(top)at(1,1.3){};\coordinate(topm)at(2,1.3){};\coordinate(topr)at(3,1.3){};
	\coordinate(mid)at(1,1){};\coordinate(midm)at(2,1){};\coordinate(midr)at(3,1){};
	\coordinate(bot)at(1,0){};\coordinate(botm)at(2,0){};\coordinate(botr)at(3,0){};
	\coordinate(tr3up)at(3.5,1.3){}; \coordinate(tr3bot)at(3.5,0){};
	\coordinate(tr2up)at(2.5,1.3){}; \coordinate(tr2bot)at(2.5,0){};
	\coordinate(in)at(1,1.4){};
	\coordinate(out)at(1,-0.1){};
	\draw(bot)to[wavyup](midr);
	\draw(botm)to[wavyup](mid);
	\draw(botr)to[wavyup](midm);
	\draw(mid)to[wavyup](top);
	\draw(midm)to[wavyup](topm);
	\draw(midr)to[wavyup](topr);
	\draw(mid)node[vector]{$A$};
	\draw(midm)node[vector]{$B$};
	\draw(midr)node[vector]{$C$};
	\draw(topr)to[out=90,in=90](tr3up);
	\draw(botr)to[out=-90,in=-90](tr3bot);
	\draw(tr3bot)to[wavyup](tr3up);
	\draw(topm)to[out=90,in=90](tr2up);
	\draw(botm)to[out=-90,in=-90](tr2bot);
	\draw(tr2bot)to[wavyup](tr2up);
	\draw(out)to(bot);
	\draw(top)to(in);
}
+
\tikz[heighttwo,xscale=.5,baseline]{
	\coordinate(top)at(1,1.3){};\coordinate(topm)at(2,1.3){};\coordinate(topr)at(3,1.3){};
	\coordinate(mid)at(1,1){};\coordinate(midm)at(2,1){};\coordinate(midr)at(3,1){};
	\coordinate(bot)at(1,0){};\coordinate(botm)at(2,0){};\coordinate(botr)at(3,0){};
	\coordinate(tr3up)at(3.5,1.3){}; \coordinate(tr3bot)at(3.5,0){};
	\coordinate(tr2up)at(1.5,1.3){}; \coordinate(tr2bot)at(1.5,0){};
	\coordinate(in)at(1,1.4){};
	\coordinate(out)at(1,-0.1){};
	\draw(bot)to[wavyup](midm);
	\draw(botm)to[wavyup](midr);
	\draw(botr)to[wavyup](mid);
	\draw(mid)to[wavyup](top);
	\draw(midm)to[wavyup](topm);
	\draw(midr)to[wavyup](topr);
	\draw(mid)node[vector]{$A$};
	\draw(midm)node[vector]{$B$};
	\draw(midr)node[vector]{$C$};
	\draw(topr)to[out=90,in=90](tr3up);
	\draw(botr)to[out=-90,in=-90](tr3bot);
	\draw(tr3bot)to[wavyup](tr3up);
	\draw(topm)to[out=90,in=90](tr2up);
	\draw(botm)to[out=-90,in=-90](tr2bot);
	\draw(tr2bot)to[wavyup](tr2up);
	\draw(out)to(bot);
	\draw(top)to(in);
}		
\right)\\
&=& 
\frac{1}{6} \left(
\tikz[heighttwo,xscale=.5,baseline]{
	\coordinate(top)at(1,1.3){};\coordinate(topm)at(2,1.3){};\coordinate(topr)at(3,1.3){};
	\coordinate(mid)at(1,1){};\coordinate(midm)at(2,1){};\coordinate(midr)at(3,1){};
	\coordinate(bot)at(1,0){};\coordinate(botm)at(2,0){};\coordinate(botr)at(3,0){};
	\coordinate(tr3up)at(3.5,1.3){}; \coordinate(tr3bot)at(3.5,0){};
	\coordinate(tr2up)at(1.5,1.3){}; \coordinate(tr2bot)at(1.5,0){};
	\coordinate(in)at(1,1.4){};
	\coordinate(out)at(1,-0.1){};
	\draw(bot)to[wavyup](mid);
	\draw(botm)to[wavyup](midm);
	\draw(botr)to[wavyup](midr);
	\draw(mid)to[wavyup](top);
	\draw(midm)to[wavyup](topm);
	\draw(midr)to[wavyup](topr);
	\draw(mid)node[vector]{$A$};
	\draw(midm)node[vector]{$B$};
	\draw(midr)node[vector]{$C$};
	\draw(topr)to[out=90,in=90](tr3up);
	\draw(botr)to[out=-90,in=-90](tr3bot);
	\draw(tr3bot)to[wavyup](tr3up);
	\draw(topm)to[out=90,in=90](tr2up);
	\draw(botm)to[out=-90,in=-90](tr2bot);
	\draw(tr2bot)to[wavyup](tr2up);
	\draw(out)to(bot);
	\draw(top)to(in);
}
+
\tikz[heighttwo,xscale=.5,baseline]{
	\coordinate(top)at(0,1.3){};	\coordinate(topr)at(1,1.3){};
	\coordinate(mid1)at(0,1){};	\coordinate(mid1r)at(1,1){};
	\coordinate(mid2)at(0,0.5){};	\coordinate(mid2r)at(1,0.5){};
	\coordinate(bot)at(0,0){};		\coordinate(botr)at(1,0){};
	\coordinate(toptr)at(1.5,1.3){};	\coordinate(bottr)at(1.5,0){};
	\coordinate(in)at(0,1.4){};
	\coordinate(out)at(0,-0.1){};
	\draw(bot)to(top);
	\draw(botr)to(topr);
	\draw(mid1)node[vector]{$A$};
	\draw(mid2)node[vector]{$B$};
	\draw(mid1r)node[vector]{$C$};
	\draw(bottr)to(toptr);
	\draw(topr)to[out=90,in=90](toptr);
	\draw(botr)to[out=-90,in=-90](bottr);
	\draw(top)to(in);
	\draw(out)to(bot);
}
+
\tikz[heighttwo,xscale=.5,baseline]{
	\coordinate(top)at(0,1.3){};	\coordinate(topr)at(1,1.3){};
	\coordinate(mid1)at(0,1){};	\coordinate(mid1r)at(1,1){};
	\coordinate(mid2)at(0,0.5){};	\coordinate(mid2r)at(1,0.5){};
	\coordinate(bot)at(0,0){};		\coordinate(botr)at(1,0){};
	\coordinate(toptr)at(1.5,1.3){};	\coordinate(bottr)at(1.5,0){};
	\coordinate(in)at(0,1.4){};
	\coordinate(out)at(0,-0.1){};
	\draw(bot)to(top);
	\draw(botr)to(topr);
	\draw(mid1)node[vector]{$A$};
	\draw(mid2)node[vector]{$C$};
	\draw(mid1r)node[vector]{$B$};
	\draw(bottr)to(toptr);
	\draw(topr)to[out=90,in=90](toptr);
	\draw(botr)to[out=-90,in=-90](bottr);
	\draw(top)to(in);
	\draw(out)to(bot);
}
+
\tikz[heighttwo,xscale=.5,baseline]{
	\coordinate(top)at(0,1.3){};	\coordinate(topr)at(1,1.3){};
	\coordinate(mid1)at(0,1){};	\coordinate(mid1r)at(1,1){};
	\coordinate(mid2)at(0,0.5){};	\coordinate(mid2r)at(1,0.5){};
	\coordinate(bot)at(0,0){};		\coordinate(botr)at(1,0){};
	\coordinate(toptr)at(1.5,1.3){};	\coordinate(bottr)at(1.5,0){};
	\coordinate(in)at(0,1.4){};
	\coordinate(out)at(0,-0.1){};
	\draw(bot)to(top);
	\draw(botr)to(topr);
	\draw(mid1)node[vector]{$A$};
	\draw(mid2r)node[vector]{$B$};
	\draw(mid1r)node[vector]{$C$};
	\draw(bottr)to(toptr);
	\draw(topr)to[out=90,in=90](toptr);
	\draw(botr)to[out=-90,in=-90](bottr);
	\draw(top)to(in);
	\draw(out)to(bot);
}
+
\tikz[heighttwo,xscale=.5,baseline]{
	\coordinate(top)at(0,1.3){};	\coordinate(topr)at(1,1.3){};
	\coordinate(mid1)at(0,1){};	\coordinate(mid1r)at(1,1){};
	\coordinate(mid2)at(0,0.5){};	\coordinate(mid2r)at(1,0.5){};
	\coordinate(bot)at(0,0){};		\coordinate(botr)at(1,0){};
	\coordinate(toptr)at(1.5,1.3){};	\coordinate(bottr)at(1.5,0){};
	\coordinate(in)at(0,1.2){};
	\coordinate(out)at(0,-0.3){};
	\draw(bot)to(top);
	\draw(top)to(in);
	\draw(out)to(bot);
	\draw(mid1)node[vector]{$A$};
	\draw(mid2)node[vector]{$B$};
	\draw(bot)node[vector]{$C$};
}
+
\tikz[heighttwo,xscale=.5,baseline]{
	\coordinate(top)at(0,1.3){};	\coordinate(topr)at(1,1.3){};
	\coordinate(mid1)at(0,1){};	\coordinate(mid1r)at(1,1){};
	\coordinate(mid2)at(0,0.5){};	\coordinate(mid2r)at(1,0.5){};
	\coordinate(bot)at(0,0){};		\coordinate(botr)at(1,0){};
	\coordinate(toptr)at(1.5,1.3){};	\coordinate(bottr)at(1.5,0){};
	\coordinate(in)at(0,1.2){};
	\coordinate(out)at(0,-0.3){};
	\draw(bot)to(top);
	\draw(top)to(in);
	\draw(out)to(bot);
	\draw(mid1)node[vector]{$A$};
	\draw(mid2)node[vector]{$C$};
	\draw(bot)node[vector]{$B$};
}
\right)	\\
&=& \frac{1}{6} \left(A \; \mathrm{tr}(B) \mathrm{tr}(C) + BA \; \mathrm{tr}(C) + CA \mathrm{tr}(B)
	+ A \; \mathrm{tr}(BC) + CBA + BCA \right)
\end{eqnarray*}
and we are done.
\end{proof}

\section{Problem Setup}

\subsection{Modelling the sampling process}

In the sampling process, we start by measuring the intensity of the signal:
\begin{equation}
y_0 = \| x \|_{\ell_2}^2 = \tr ( \Id X ).	\label{eq:intensity_measurement}
\end{equation}
This allows us to assume w.l.o.g. $\|x\|_{\ell_2} = 1$. 
Next, we
choose $m$ vectors $a_1, \ldots, a_m$ iid at random from a $t$-design $D_t \subset V^d$ and evaluate
\begin{equation}
 y_i = \tr (A_i X) =  | \langle x, a_i \rangle |^2	\quad \textrm{for } i=1, \ldots m,	\label{eq:amplitude_measurements}
\end{equation}
and consequently the vector $y = (y_1, \ldots, y_m )^T \in \mathbb{R}^m_+$ captures all the information we obtain from the sampling process. 
This process can be represented by a measurement operator 
\begin{eqnarray}
\AA: H^d & \to & \mathbb{R}^m,	\nonumber	\\	
Z & \mapsto& \sum_{i=1}^m  \tr( A_i Z) e_i	\label{eq:AA},
\end{eqnarray}
where $e_1, \ldots, e_m$ denotes the standard basis of $\mathbb{R}^m$.
Therefore $\AA (X) = y$ completely encodes the measurement process. 
For technical reasons we also consider the measurement operator
\begin{eqnarray}
	\RR: H^d &\to& H^d,	\nonumber \\	
	Z & \mapsto& m^{-1} \sum_{i=1}^m (d+1)d \; \Pi_{A_i} Z 
	= m^{-1} \sum_{i=1}^m (d+1)d\,A_i \tr( A_i Z), \label{eq:RR}
\end{eqnarray}
which is a renormalized version of $\AA^* \AA: H^d \to H^d$. Concretely
\begin{equation*}
\RR = \frac{(d+1)d}{m} \AA^* \AA .
\end{equation*}
The scaling  is going to greatly simplify our analysis, because it
guarantees 
that $\RR$ is ``near-isotropic'',
as the following result
shows.

\begin{lemma}[$\RR$ is near-isotropic]	\label{lem:isotropic}
The operator $\RR$ defined in (\ref{eq:RR}) is \emph{near-isotropic} in the sense that
\begin{equation}
\EE [ \RR] = \II + \Pi_{\Id} \quad \textrm{or} \quad
\EE \left[ \RR \right] Z = Z + \tr (Z) \Id \quad \forall Z \in H^d		\label{eq:near_isotropicity}
\end{equation}

\end{lemma}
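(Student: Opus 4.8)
The plan is to compute $\EE[\RR]$ directly, exploiting that a $t$-design with $t\geq 3$ is in particular a $2$-design, and combining this with the contraction identity of Lemma~\ref{lem:tr1sym2}. Since the $a_i$ are i.i.d.\ and $\RR$ is linear in the empirical average $m^{-1}\sum_i (d+1)d\,\Pi_{A_i}$, linearity of expectation gives, for an arbitrary $Z\in H^d$,
\begin{equation*}
\EE[\RR]\,Z = (d+1)d\,\EE\!\left[A_1\tr(A_1 Z)\right].
\end{equation*}

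Next I would rewrite the matrix $A_1\tr(A_1 Z)$ as a partial trace on $(V^d)^{\otimes 2}$: from $(A\otimes A)(\Id\otimes Z)=A\otimes AZ$ one gets $A_1\tr(A_1Z)=\tr_2\!\big((A_1\otimes A_1)(\Id\otimes Z)\big)$. Since $\tr_2(\,\cdot\,)$ and right multiplication by $\Id\otimes Z$ are linear operations, the expectation may be pulled inside, so that
\begin{equation*}
\EE\!\left[A_1\tr(A_1 Z)\right]=\tr_2\!\Big(\EE\big[A_1^{\otimes 2}\big]\,(\Id\otimes Z)\Big).
\end{equation*}
The key input now enters: because $t\geq 3\geq 2$, the design $D_t$ is in particular a $2$-design, so by Definition~\ref{def:design} (equivalently, the moment identity of Section~\ref{sub:complex_designs}) we have $\EE[A_1^{\otimes 2}]=\binom{d+1}{2}^{-1}P_{\Sym^2}=\tfrac{2}{d(d+1)}P_{\Sym^2}$.

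Substituting this and applying Lemma~\ref{lem:tr1sym2} with $A=\Id$ and $B=Z$, which gives $\tr_2\!\big(P_{\Sym^2}(\Id\otimes Z)\big)=\tfrac12(\tr(Z)\Id+Z)$, the prefactors collapse to $(d+1)d\cdot\tfrac{2}{d(d+1)}\cdot\tfrac12=1$, and we obtain
\begin{equation*}
\EE[\RR]\,Z=Z+\tr(Z)\Id,
\end{equation*}
which is precisely the claim $\EE[\RR]=\II+\Pi_{\Id}$. I do not expect a genuine obstacle here; the only points needing mild care are (i) expressing $A\tr(AZ)$ as $\tr_2$ of a \emph{product} of operators on $(V^d)^{\otimes 2}$ so that the multiplication order matches the convention of Lemma~\ref{lem:tr1sym2}, and (ii) justifying the interchange of expectation with the partial trace, which is immediate from linearity. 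Should one prefer a self-contained derivation, the appeal to Lemma~\ref{lem:tr1sym2} can be replaced by an explicit wiring-diagram evaluation of $\tr_2\!\big(P_{\Sym^2}(\Id\otimes Z)\big)$ exactly as in its proof.
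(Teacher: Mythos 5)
Your proposal is correct and follows essentially the same route as the paper's proof: reduce to $(d+1)d\,\EE[A_1\tr(A_1Z)]$ by linearity, rewrite this as $\tr_2\bigl(\EE[A_1^{\otimes 2}]\,\Id\otimes Z\bigr)$, invoke the $2$-design moment identity $\EE[A_1^{\otimes 2}]=\binom{d+1}{2}^{-1}P_{\Sym^2}$, and finish with Lemma~\ref{lem:tr1sym2} applied to $A=\Id$, $B=Z$. The only difference is that you spell out the intermediate step $A_1\tr(A_1Z)=\tr_2\bigl((A_1\otimes A_1)(\Id\otimes Z)\bigr)$ explicitly, which the paper leaves implicit.
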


\begin{proof}
Let us start with deriving (\ref{eq:near_isotropicity}). For $Z \in H^d$ arbitrary we have
\begin{eqnarray}
	\EE [ \RR] Z &=& \frac{(d+1)d}{m} \sum_{i=1}^m \EE [ A_i \tr (A_i Z) ] 	\nonumber \\
	&=& (d+1)d \tr_2 \left( \EE [ A_1^{\otimes 2} ] \Id \otimes Z \right) \label{eq:isoaux1}	\\
	&=& 2 \tr_2 \left( P_{\Sym^2} \Id \otimes Z \right) 	\label{eq:isoaux2}
	\\
	&=& Z + \Id (\tr Z) = \big(\mathcal{I}+\Pi_{\Id}\big) Z.  \nonumber
\end{eqnarray}
Here, (\ref{eq:isoaux1}) follows from the fact that the $a_i$'s are
chosen iid from a $t$-design,
(\ref{eq:isoaux2}) uses the fact that $\dim (\Sym^2) =
\binom{d+1}{2}^{-1}$ together with Definition \ref{def:design},
and the final line is  an application of 
Lemma \ref{lem:tr1sym2}.
\end{proof}

Let now $x \in V^d$ be the signal we want to recover. 
As in \cite{candes_phaselift_2012} we  consider the space
\begin{equation}
T := \left\{ xz^* + zx^*: \; z \in V^d \right\} \subset H^d
\end{equation}
(which is
the tangent space of the manifold of all hermitian matrices at the
point $X= xx^*$).  This space is of crucial importance for our
analysis. The orthogonal projection onto this space can be given
explicitly:
\begin{eqnarray}
\PP_T : H^d & \to & T,	\nonumber \\
Z & \mapsto& X Z + Z X - X Z X \label{eq:PPT1}\\
&=& X Z +Z X - (X,Z) X. \label{eq:PPT2}
\end{eqnarray}
We denote the projection onto its orthogonal complement with respect
to the Frobenius inner product
 by $\PP_T^\perp$. Then for any matrix $Z \in H^d$ the decomposition
\begin{equation*}
Z = \PP_T Z + \PP_T^\perp Z =: Z_T + Z_T^\perp
\end{equation*}
is valid. 
We point out that in particular
\begin{equation}
\PP_T \Pi_{\Id} \PP_T = \Pi_X 	\label{eq:aux1}
\end{equation}
holds. We will frequently use this fact. For a proof, consider $Z \in H^d$ arbitrary and insert the relevant definitions:
\begin{eqnarray*}
\PP_T \Pi_{\Id} \PP_T Z
&=& \PP_T \Id \tr (\Id \PP_T Z)
= \left( X \Id + \Id X - X \Id X \right) \tr \left( X Z + ZX - XZX \right) 	\\	
&=& X \tr(XZ)
= \Pi_X Z.
\end{eqnarray*}

\subsection{Convex Relaxation}

Following \cite{balan_signal_2006, candes_phaselift_2012, candes_solving_2012} the measurements (\ref{eq:intensity_measurement}) and (\ref{eq:amplitude_measurements}) can be translated into matrix form by applying the following ``lifts'':
\begin{equation*}
X :=  x x^*,	
\quad \textrm{and} \quad
A_i := a_i a_i^*.
\end{equation*}
By doing so the measurements assume the a linear form:
\begin{eqnarray*}
y_0 &=& \| x \|_2^2 = (\Id, X) = \tr (X), 	\\
y_i &=& \left(A_i,X\right)= \Tr \left(A_i X \right) \quad i=1,\ldots,m.
\end{eqnarray*}
Hence, the phase retrivial problem becomes a matrix recovery problem.
The solution to this is guaranteed to have rank 1 and encodes (up to a
global phase) the unknown vector $x$ via $X = x x^*$. Relaxing the
rank minimization problem (which would output the correct solution)
to a trace norm minimization yields the now-familiar convex
optimization problem

\begin{eqnarray}
\textrm{minarg}_{X'} & &  \| X' \|_1	\label{eq:convex_program}\\
\textrm{subject to} & & \left( A_i, X' \right) = y_i \quad i = 1,\ldots m,	\nonumber \\
& & X' = \left(X' \right)^\dagger,	\nonumber	\\ 
& & \tr (X') = 1,	\nonumber \\
& & X' \geq 0. \nonumber
\end{eqnarray}
While this convex program is formally equivalent to the previously
studied general-purpose
matrix recovery algorithms \cite{recht_guaranteed_2010,
candes_power_2010, gross_recovering_2011}, there are two important 
differences:
\begin{itemize}
	\item The measurement matrices $A_i$ are rank-1 projectors: $A_i = a_i
	a_i^*$. 
	\item 
	The unknown signal is known to be proportional to a rank-1
	projector ($X = x x^*$) as well.
\end{itemize}
While the second fact is clearly of advantage for us, the first one
makes the problem considerably harder: In the language of
\cite{gross_recovering_2011}, 
it means that the ``incoherence
parameter'' $\mu = d \max_{i=1,\ldots,m} \| A_i \|_\infty = d \|a_i
\|_{\ell_2}^2 =d$ is as large as it can get! Higher values of $\mu$
correspond to more ill-posed problems and as a result, a direct
application of previous low-rank matrix recovery results fails.  It is
this problem that Refs.~\cite{candes_phaselift_2012,
candes_solving_2012} first showed how to circumvent for the case of
Gaussian measurements. Below, we will adapt these ideas to the case of
measurements drawn from designs, which necessitates following more closely
the approach of 
\cite{gross_recovering_2011}.

\subsection{Well-posedness / Injectivity}

In this section, we follow
\cite{candes_phaselift_2012,gross_recovering_2011} to establish a
certain injectivity property of the measurement operator $\AA$.
Compared to \cite{candes_phaselift_2012}, our injectivity properties
are somewhat weaker. Their proof used the independence of the
components of the Gaussian measurement operator, which is not
available in this setting, where individual vector components might be
strongly correlated. We will pay the price for these weaker
bounds in Section~\ref{sec:dual}. There, we construct an 
``approximate dual certificate'' 
that proves that the sought-for signal indeed 
minimizes the nuclear norm. Owing to the weaker bounds found here, the
construction is more complicated than in \cite{candes_phaselift_2012}.
In the language of \cite{gross_recovering_2011}, we will have to carry
out the full ``golfing scheme'', as opposed to the ``single leg'' that
proved sufficient in \cite{candes_phaselift_2012}.

\begin{proposition}		\label{prop:inj1}
	With probability of failure smaller than $d^2 \exp (- \frac{3 m}{384
	d})$ the inequality
	\begin{equation}
		0.25 d^{-2} \| Z \|_2^2 < m^{-1} \| \AA(Z) \|_2^2		\label{eq:inj1}
	\end{equation}
	is valid for all matrices $Z \in T$ simultaneously. 
\end{proposition}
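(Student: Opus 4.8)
The plan is to reduce \eqref{eq:inj1} to a lower bound on the smallest eigenvalue of $\RR$ restricted to the tangent space $T$, and to establish that bound via a matrix Chernoff argument. Since $\RR=\tfrac{(d+1)d}{m}\AA^{*}\AA$, every $Z\in T$ satisfies $m^{-1}\|\AA(Z)\|_{2}^{2}=\tfrac{1}{(d+1)d}(Z,\RR Z)=\tfrac{1}{(d+1)d}\big(Z,(\PP_{T}\RR\PP_{T})Z\big)$, using $\PP_{T}Z=Z$ and self-adjointness of $\PP_{T}$. Hence it suffices to show that, with the stated probability, the self-adjoint operator $\SS:=(\PP_{T}\RR\PP_{T})|_{T}$ obeys $\lambda_{\min}(\SS)>\tfrac12$: this gives $m^{-1}\|\AA(Z)\|_{2}^{2}>\tfrac{1}{2(d+1)d}\|Z\|_{2}^{2}>\tfrac14 d^{-2}\|Z\|_{2}^{2}$ for $d\ge 2$. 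By Lemma~\ref{lem:isotropic} together with \eqref{eq:aux1}, $\EE[\SS]=(\PP_{T}(\II+\Pi_{\Id})\PP_{T})|_{T}=(\PP_{T}+\Pi_{X})|_{T}$, an operator with eigenvalues $1$ and $2$; in particular $\lambda_{\min}(\EE[\SS])=1$, so in expectation $\SS$ lies comfortably above the target.

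Next I would write $\SS=\sum_{i=1}^{m}N_{i}$ with $N_{i}:=\tfrac{(d+1)d}{m}\,\Pi_{\PP_{T}A_{i}}|_{T}$; these are i.i.d.\ and positive semidefinite, since $(Z,N_{i}Z)=\tfrac{(d+1)d}{m}(\PP_{T}A_{i},Z)^{2}\ge 0$. Set $M_{i}:=N_{i}-\EE[N_{i}]$, so $\sum_{i}M_{i}=\SS-\EE[\SS]$. The decisive observation — and the reason to invoke the one-sided bound (Theorem~\ref{thm:smallest_eigenvalue_bernstein}) rather than the symmetric Operator Bernstein inequality (Theorem~\ref{thm:bernstein}) — is that even though $\|N_{i}\|_{\op}$ can be of order $d^{2}/m$ when a design vector is badly aligned with $x$, the relevant \emph{lower} tail parameter is governed only by the mean: from $N_{i}\ge0$ we get $M_{i}\ge-\EE[N_{i}]$, hence $\lambda_{\min}(M_{i})\ge-\lambda_{\max}(\EE[N_{i}])=-2/m$, so one may take $\underline{R}=2/m$, with no dependence on $d$.

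The substantive computation is the variance bound $\sigma^{2}:=\|\sum_{i}\EE[M_{i}^{2}]\|_{\op}=m\,\|\EE[M_{1}^{2}]\|_{\op}$. Using $\EE[M_{1}^{2}]\le\tfrac{1}{m^{2}}\EE[\hat N_{1}^{2}]$ with $\hat N_{1}:=(d+1)d\,\Pi_{\PP_{T}A_{1}}$, the identity $\Pi_{Y}^{2}=\|Y\|_{2}^{2}\Pi_{Y}$, and the elementary fact $\|\PP_{T}A_{1}\|_{2}^{2}=2\tr(A_{1}X)-\tr(A_{1}X)^{2}\le2\tr(A_{1}X)$ (which follows from \eqref{eq:PPT2} and $A_{1}XA_{1}=\tr(A_{1}X)A_{1}$), the task reduces to estimating the operator $\EE[\tr(A_{1}X)\Pi_{A_{1}}]:Z\mapsto\tr_{2,3}\!\big(\EE[A_{1}^{\otimes3}](\Id\otimes X\otimes Z)\big)$. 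Here I would use the third-moment design identity of Definition~\ref{def:design} (legitimate since $t\ge3$) and Lemma~\ref{lem:P3_calculation} with $(A,B,C)=(\Id,X,Z)$, which turns this operator into $\tfrac{1}{(d+2)(d+1)d}\big(\tr(Z)\Id+\tr(Z)X+Z+\tr(XZ)\Id+ZX+XZ\big)$; projecting onto $T$ and using that $\PP_{T}\Id=X$, $\tr(Z)=\tr(XZ)=(X,Z)$, and $ZX+XZ\in T$ for $Z\in T$, it collapses to $Z\mapsto\tfrac{1}{(d+2)(d+1)d}\big(Z+(ZX+XZ)+3(X,Z)X\big)$, whose norm on $T$ is at most $\tfrac{6}{(d+2)(d+1)d}$ (estimate the three summands by $\|Z\|_{2}$, $2\|Z\|_{2}$, $3\|Z\|_{2}$, using $\|X\|_{\infty}=\|X\|_{2}=1$ and $|(X,Z)|\le\|Z\|_{2}$). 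Putting the pieces together yields $\|\EE[\hat N_{1}^{2}]\|_{\op}\le2((d+1)d)^{2}\cdot\tfrac{6}{(d+2)(d+1)d}\le12d$, hence $\sigma^{2}\le12d/m$. I expect this tensor-algebra-plus-tangent-space bookkeeping to be the fiddliest part; the conceptual hurdle is already cleared by the switch to the smallest-eigenvalue inequality.

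Finally I would apply Theorem~\ref{thm:smallest_eigenvalue_bernstein} to $\sum_{i}M_{i}$ (acting on $T$, so the dimension prefactor equals $\dim T=2d-1\le d^{2}$) with threshold $t=\tfrac12$. Since $\lambda_{\min}(\SS)\ge\lambda_{\min}(\EE[\SS])+\lambda_{\min}(\sum_{i}M_{i})=1+\lambda_{\min}(\sum_{i}M_{i})$, the bad event $\{\lambda_{\min}(\SS)\le\tfrac12\}$ is contained in $\{\lambda_{\min}(\sum_{i}M_{i})\le-\tfrac12\}$, which by the theorem has probability at most
\begin{equation*}
d^{2}\exp\!\left(-\frac{(1/2)^{2}/2}{\sigma^{2}+\underline{R}(1/2)/3}\right)\le d^{2}\exp\!\left(-\frac{m}{96d+8/3}\right)\le d^{2}\exp\!\left(-\frac{3m}{384d}\right).
\end{equation*}
On the complement, $\lambda_{\min}(\SS)>\tfrac12$, which as noted above yields \eqref{eq:inj1} for all $Z\in T$ simultaneously.
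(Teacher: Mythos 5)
Your proposal is correct and follows essentially the same route as the paper: reduce to a lower bound on $\lambda_{\min}(\PP_T\RR\PP_T)$, exploit positivity of the summands to get the $d$-independent one-sided bound $\underline{R}=2/m$ via Theorem~\ref{thm:smallest_eigenvalue_bernstein}, and control the variance by the third-moment design identity together with Lemma~\ref{lem:P3_calculation}, arriving at $\sigma^2\le 12d/m$ and the stated failure probability with threshold $1/2$. The only differences are cosmetic (working with $\SS$ rather than $\RR-\EE[\RR]$, and noting $\dim T=2d-1$).
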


\begin{proof}
	We aim to show the more general statement
	\begin{equation*}
		\Pr \left[ m^{-1} \| \AA (Z) \|_2^2 < 0.5 (1-\delta) \|Z \|_2^2 \; \; \forall Z \in T \right] 
		\leq d^2 \exp \left( -\frac{3m\delta^2}{96d} \right)
	\end{equation*}
	for any $\delta \in (0,1)$. 
	
	For $Z \in T$ abritrary use near-isotropicity of $\RR$ ($\EE[ \RR] =
	\II + \Pi_\Id$) and observe
	\begin{eqnarray}
		 & & m^{-1} \| \AA (Z) \|_2^2 		\nonumber \\
		&=& m^{-1}  \sum_{i=1}^m \left( \tr (Z A_i) \right)^2	
		= \tr ( Z m^{-1} \sum_i A_i \tr (A_i Z) )	
		= \frac{1}{(d+1)d} \tr (Z \RR Z)  	\nonumber	\\
		&=& \frac{1}{(d+1)d} \tr( Z (\RR-\EE[\RR] ) Z) + \frac{1}{(d+1)d} \tr(Z(\II + \Pi_\Id) Z)	\nonumber	\\
		&=& \frac{1}{(d+1)d}\tr(Z\PP_T (\RR-\EE[\RR])\PP_T Z) +
		\frac{1}{(d+1)d}(\tr (Z^2) + (\tr Z)^2)	\nonumber	\\
		& \geq & 0.5 d^{-2} \left( \tr(Z \PP_T (\RR - \EE[\RR])\PP_T Z) + \tr(Z^2) \right)	\nonumber	\\
		& \geq & 0.5 d^{-2} (1 + \lambda_{\min} \left( \PP_T (\RR - \EE[\RR]) \PP_T \right) \| Z \|_2^2 ,	\label{eq:inj1aux1}
	\end{eqnarray} 
	where we have used $\PP_T Z = Z$ as well as $\MM \geq
	\lambda_{\min}(\MM)\II$ for any operator $\MM$.
	Therefore everything boils down to bounding the smallest eigenvalue
	of $\PP_T(\RR-\EE[\RR]) \PP_T$.
	To this end we aim to apply Theorem
	\ref{thm:smallest_eigenvalue_bernstein} and decompose 
	\begin{equation*}
		\PP_T(\RR-\EE[\RR]) \PP_T
		= \sum_{i=1}^m \left( \MM_i - \EE[ \MM_i ] \right) 
		\quad \textrm{with} \quad
		\MM_i = \frac{(d+1)d}{m} \PP_T \Pi_{A_i} \PP_T.
	\end{equation*}
	Note that these summands have mean zero by construction. Furthermore
	observe that the auxiliary result (\ref{eq:aux1}) 
	implies
	\begin{eqnarray*}
	- \frac{2}{m} \II
	&\leq& - \frac{1}{m}\II - \frac{1}{m} \Pi_X
	\leq - \frac{1}{m} \PP_T \II \PP_T - \frac{1}{m} \PP_T \Pi_{\Id} \PP_T 	\\
	&=& - \PP_T \EE[\MM_i] \PP_T 
	\leq \PP_T (\MM_i - \EE[\MM_i] ) \PP_T 
	\end{eqnarray*}
	and the a priori bound 
	\begin{equation*}
	\lambda_{\min} (\MM_i - \EE[ \MM_i] ) \geq -2/m =: - \underline{R}
	\end{equation*}
	follows.
	For the variance we use the standard identity
	\begin{equation*}
	0 \leq \EE[ (\MM_i - \EE[\MM_i] )^2] = \EE[\MM_i^2] - \EE[\MM_i]^2 \leq \EE[\MM_i^2] 
	\end{equation*}
	and focus on the last expression. Writing it out explicitly yields
	\begin{eqnarray*}
	0 \leq \EE [ \MM_i^2 ]
	&=& \frac{(d+1)^2 d^2}{m^2} \PP_T \EE \left[ \Pi_{A_i} \PP_T \Pi_{A_i} \right] \PP_T 		\\
	&=& \frac{(d+1)^2 d^2}{m^2} \PP_T \EE \left[ \tr(A_i \PP_T A_i) \Pi_{A_i} \right] \PP_T .
	\end{eqnarray*}
	The trace can be bounded from above by 
	\begin{eqnarray*}
	\tr(A_i \PP_T A_i ) 
	&=& \tr \big( A_i ( XA_i + A_i X - \tr(A_iX)X) \big)	\\
	&=& 2 \tr ( A_i X ) - \tr (A_i X)^2 \leq 2 \tr (A_i X),
	\end{eqnarray*}
	where we have used the basic definition of $\PP_T$ and $0 \leq \tr (A_i X) = | \langle a_i, x \rangle |^2 \leq 1$. 
	Consequently, for $Z \in T$ arbitrary 
	\begin{eqnarray*}
	&& \PP_T \EE[ \MM_i ^2 ] \PP_T Z \\
	&\leq& \frac{2(d+1)^2 d^2}{m^2} \PP_T \EE \left[ A_i \tr (A_i X) \tr (A_i Z) \right]	\\
	&=& \frac{2(d+1)^2 d^2}{m^2} \PP_T \tr_{2,3} \left( \EE[ A_i^{\otimes 3}] \Id \otimes X \otimes Z \right)		\\
	&=& \frac{12(d+1)^2 d^2}{m^2 (d+2)(d+1)d} \PP_T \tr_{2,3} \left( P_{\Sym^3} \Id \otimes X \otimes Z \right) 	\\
	& \leq & \frac{2d}{m^2} \PP_T \left( \Id \tr(Z) + X \tr(Z) + Z + \Id \tr(XZ) + ZX + XZ \right)	\\
	& = & \frac{2d}{m^2} \left( X \tr (XZ) + X \tr(XZ) +  Z + X \tr (XZ) + \PP_T Z + X \tr(XZ) \right) 	\\
	&=& \frac{2d}{m^2} \left( 4 \Pi_X + 2 \II \right) Z 	
	 \leq  \frac{12d}{m^2} \II  Z.
	\end{eqnarray*}
	Here we have applied $\dim \Sym^3 = \binom{d+2}{3}^{-1}$ and Lemma \ref{lem:P3_calculation}
	in lines 3 and 4, respectively. 
	Furthermore we used $Z \in T$ -- hence $\PP_T Z = Z$ and $\tr (Z) = \tr (XZ)$ -- 
	as well as the basic definition (\ref{eq:PPT2}) of $\PP_T$ 
	to simplify the terms occuring in the fourth line. 
	Putting everything together yields
	\begin{equation*}
	\EE [ (\MM_i - \EE[\MM_i])^2 ] \leq \EE[ \MM_i^2] \leq \frac{12d}{m^2} \II
	\end{equation*}
	and we can safely set $\sigma^2 := \frac{12d}{m}$. Now Theorem \ref{thm:smallest_eigenvalue_bernstein} tells us
	\begin{equation*}
	\Pr \left[ \lambda_{\min} \left( \PP_T (\RR - \EE[\RR]) \PP_T \right) \leq - \delta \right] \leq d^2 \exp \left( - \frac{3m\delta^2}{8 \times 12d} \right)
	\end{equation*}
	for all $0 \leq \delta \leq 1 \leq 6d = \sigma^2/\underline{R}$. 
	This gives the desired bound on the event 
	\begin{equation*}
		\left\{\lambda_{\min} (\PP_T (\RR - \EE[\RR] )\PP_T) \leq - \delta \right\}
	\end{equation*}
	occuring. If this is not the case,
	 (\ref{eq:inj1aux1}) implies
	\begin{equation*}
	m^{-1} \| \AA (Z) \|_{\ell_2}^2 > 0.5 d^{-2}( 1- \delta) \|Z \|_2^2 
	\end{equation*}
	for all matrices $Z \in T$ simultaneously. This is the general statement at the beginning of the proof
	and setting $\delta = 1/2$ yields Proposition \ref{prop:inj1}.
\end{proof}

\begin{proposition}		\label{prop:inj2}
Let $\AA$ be as above with vectors sampled from a $t$-design ($t \geq 1$). Then the statement
\begin{equation}
m^{-1} \| \AA (Z) \|_{\ell_2}^2 \leq   \| Z \|_2^2	\label{eq:inj2}
\end{equation}
holds with probability one 
for all matrices $Z \in H^d$ simultaneously. 
\end{proposition}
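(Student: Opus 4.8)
The plan is to observe that, unlike its companion Proposition~\ref{prop:inj1}, this estimate is entirely \emph{deterministic}: it uses nothing about the $t$-design beyond the fact that its elements are unit vectors, so the phrase ``with probability one'' is really ``surely,'' and the design order $t$ enters only through normalization. The bound then follows from a term-by-term Cauchy--Schwarz estimate in the Frobenius inner product.

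First I would unpack the left-hand side using the definition (\ref{eq:AA}) of $\AA$, writing
\begin{equation*}
\| \AA(Z) \|_{\ell_2}^2 = \sum_{i=1}^m \big( \tr(A_i Z) \big)^2 = \sum_{i=1}^m (A_i, Z)^2 .
\end{equation*}
Next I would bound each summand. Since each $a_i$ is normalized, $A_i = a_i a_i^*$ is a rank-one orthogonal projector, hence $\|A_i\|_2^2 = \tr(A_i^2) = \tr(A_i) = \|a_i\|_{\ell_2}^2 = 1$. The Cauchy--Schwarz inequality for the Frobenius inner product then gives $|(A_i,Z)| \le \|A_i\|_2 \|Z\|_2 = \|Z\|_2$, so $(A_i,Z)^2 \le \|Z\|_2^2$ for every $i$. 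Summing these $m$ inequalities and dividing by $m$ yields $m^{-1}\|\AA(Z)\|_{\ell_2}^2 \le \|Z\|_2^2$, simultaneously for all $Z \in H^d$.

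There is no genuine obstacle here; the only thing to get right is the normalization bookkeeping ($\|A_i\|_2 = 1$), and the design hypothesis is invoked solely to guarantee the $a_i$ are unit vectors. An equivalent route, closer in spirit to the rest of the paper, is to recall from (\ref{eq:RR}) that $m^{-1}\|\AA(Z)\|_{\ell_2}^2 = \frac{1}{(d+1)d}\tr(Z\,\RR\,Z)$ and note that each $\Pi_{A_i}$ satisfies $(Z,\Pi_{A_i}Z) = (A_i,Z)^2 \le \|Z\|_2^2$, i.e.\ $0 \le \Pi_{A_i} \le \II$; summing gives $\RR \le (d+1)d\,\II$, hence $\frac{1}{(d+1)d}\tr(Z\,\RR\,Z) \le \tr(Z^2) = \|Z\|_2^2$. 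One may also obtain the per-term bound via the operator norm, $|\tr(A_iZ)| = |\langle a_i, Z a_i\rangle| \le \|Z\|_\infty \le \|Z\|_2$, using the stated Schatten-norm inequalities.
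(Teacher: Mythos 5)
Your proof is correct and is essentially the paper's argument: the paper writes $m^{-1}\|\AA(Z)\|_{\ell_2}^2 = \tr\big(Z\,(m^{-1}\sum_i \Pi_{A_i})\,Z\big)$ and uses $0 \leq \Pi_{A_i} \leq \II$, which is exactly the Cauchy--Schwarz bound $(A_i,Z)^2 \leq \|A_i\|_2^2\|Z\|_2^2 = \|Z\|_2^2$ that you apply term by term (and which you also state explicitly as your ``equivalent route''). Your observation that the result is deterministic and uses only the normalization of the design vectors is accurate.
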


\begin{proof}
Pick $Z \in H^d$ arbitrary and observe
\begin{eqnarray*}
\| \AA (Z) \|_{\ell_2}^2 = \frac{1}{m} \sum_{i=1}^m  \left( \tr (A_i Z) \right)^2 
= \tr \left( Z \left( \frac{1}{m} \sum_{i=1}^m \Pi_{A_i}  \right)Z \right)
\leq  \tr (Z \II Z) =  \| Z\|_2^2, 
\end{eqnarray*}
where we  have used $0 \leq \Pi_{A_i} \leq \II$. 
\end{proof}

Note that equation (\ref{eq:inj2}) can be improved. Indeed, a standard
application of the Operator Bernstein inequality (Theorem
\ref{thm:bernstein}) 
gives
\begin{equation*}
m^{-1} \| \AA (Z) \|_{\ell_2}^2 \leq 2d^{-1} \| Z \|_2^2
\end{equation*}
for all matrices $Z \in T$ with probability of failure smaller than $d^2 \exp \left( - Cm/d \right)$ for some $ 0 < C \leq 1$. 
However, we actually do not require this tighter bound.

\section{Proof of the Main Theorem / Convex Geometry}	\label{sec:convex_geometry}

In this section, we will follow
\cite{gross_recovering_2011, candes_power_2010} to prove that the
convex program
(\ref{eq:convex_program}) indeed recovers the sought for signal $x$,
provided that a certain geometric object -- an
\emph{approximate dual certificate} -- exists.

\begin{definition}[Approximate dual certificate]	\label{def:dual_cert}
Assume that the sampling process corresponds to (\ref{eq:intensity_measurement}) and (\ref{eq:amplitude_measurements}).
Then we call $Y \in H^d$ an \emph{approximate dual certificate}, provided that
$Y \in \mathrm{span} \left( \Id, A_1 ,\ldots, A_m \right)$ and
\begin{equation}
\| Y_T - X \|_2 \leq \frac{1}{4d}
\quad \textrm{as well as} \quad
\| Y_T^\perp \|_\infty \leq \frac{1}{2}.	\label{eq:Y}
\end{equation}
\end{definition}

\begin{proposition}	\label{prop:convex geometry}
Suppose that the measurement gives us access to $ \|x \|_{\ell_2}^2$ and $y_i = | \langle a_i, x \rangle |^2$ for $i=1,\ldots, m$. 
Then the convex optimization (\ref{eq:convex_program}) recovers the unknown $x$ (up to a global phase) provided that (\ref{eq:inj1}) holds and an approximate dual certificate $Y$ exists.
\end{proposition}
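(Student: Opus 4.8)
The plan is to show that $X = xx^*$ is the \emph{unique} minimizer of the convex program (\ref{eq:convex_program}); since any minimizer has unit trace and the unique minimizer is the rank-one matrix $xx^*$, the signal $x$ is thereby recovered up to a global phase. As noted after (\ref{eq:intensity_measurement}), we may assume $\|x\|_{\ell_2}=1$, so that $X$ is feasible: $\tr(X)=1$, $X\geq 0$, $X=X^\dagger$, and $(A_i,X)=|\langle a_i,x\rangle|^2=y_i$. Let $X'=X+Z$ be an arbitrary feasible point. Then $Z$ is Hermitian, $\tr(Z)=0$ (from $\tr(X')=1=\tr(X)$), and $\AA(Z)=0$. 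Decompose $Z=Z_T+Z_T^\perp$ with respect to the tangent space $T$ at $X$; the goal is to prove $\|X'\|_1>\|X\|_1$ whenever $Z\neq 0$.

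First I would invoke the subgradient inequality for the nuclear norm at the rank-one matrix $X=xx^*$: since the reduced singular value decomposition of $X$ is $x\cdot 1\cdot x^*$, every matrix $xx^*+W$ with $W\in T^\perp$ and $\|W\|_\infty\leq 1$ is a subgradient, so $\|X+Z\|_1\geq \|X\|_1+\langle xx^*+W,Z\rangle$. Choosing $W$ to be a Hermitian sign matrix of $Z_T^\perp$ supported on $T^\perp$ (a signed spectral projector), one has $\|W\|_\infty\leq 1$ and $\langle W,Z\rangle=\langle W,Z_T^\perp\rangle=\|Z_T^\perp\|_1$, while $\langle xx^*,Z\rangle=\tr(XZ)$; hence $\|X+Z\|_1\geq \|X\|_1+\tr(XZ)+\|Z_T^\perp\|_1$. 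Next I would use the certificate: because $Y\in\mathrm{span}(\Id,A_1,\dots,A_m)$ and $\tr(Z)=0$, $\tr(A_iZ)=0$, we get $\langle Y,Z\rangle=0$, i.e.\ $\langle Y_T,Z_T\rangle=-\langle Y_T^\perp,Z_T^\perp\rangle$. Since $X\in T$,
\[
\tr(XZ)=\langle X,Z_T\rangle=\langle X-Y_T,Z_T\rangle+\langle Y_T,Z_T\rangle\;\geq\;-\tfrac{1}{4d}\|Z_T\|_2-\tfrac12\|Z_T^\perp\|_1,
\]
using Cauchy--Schwarz with $\|X-Y_T\|_2\leq \tfrac{1}{4d}$ and $|\langle Y_T^\perp,Z_T^\perp\rangle|\leq\|Y_T^\perp\|_\infty\|Z_T^\perp\|_1\leq\tfrac12\|Z_T^\perp\|_1$ from (\ref{eq:Y}). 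Combining the two displays yields $\|X+Z\|_1\geq \|X\|_1-\tfrac1{4d}\|Z_T\|_2+\tfrac12\|Z_T^\perp\|_1$.

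It remains to absorb the negative term using injectivity. If $Z_T=0$ and $Z\neq0$, the bound already reads $\|X+Z\|_1\geq\|X\|_1+\tfrac12\|Z\|_1>\|X\|_1$. If $Z_T\neq 0$, then $\AA(Z)=0$ gives $\AA(Z_T)=-\AA(Z_T^\perp)$, so by (\ref{eq:inj1}) and Proposition~\ref{prop:inj2},
\[
0.25\,d^{-2}\|Z_T\|_2^2<m^{-1}\|\AA(Z_T)\|_2^2=m^{-1}\|\AA(Z_T^\perp)\|_2^2\leq \|Z_T^\perp\|_2^2\leq\|Z_T^\perp\|_1^2 ,
\]
whence $\|Z_T^\perp\|_1>0$ and $\tfrac1{4d}\|Z_T\|_2<\tfrac12\|Z_T^\perp\|_1$ strictly; plugging this in gives $\|X+Z\|_1>\|X\|_1$. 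In all cases $\|X'\|_1>\|X\|_1$ for $Z\neq0$, so $X=xx^*$ is the unique minimizer.

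Since the statement is conditioned on the two genuinely hard inputs — the near-isometry estimate (\ref{eq:inj1}) and the existence of the approximate dual certificate — the proof is essentially bookkeeping, and the only points that require care are: (i) identifying the correct subgradient of the nuclear norm at a rank-one matrix, together with the orthogonality $W\perp T$ that turns $\langle W,Z\rangle$ into $\|Z_T^\perp\|_1$; (ii) using the trace constraint to get $\tr(Z)=0$ so that $Y$ annihilates the feasible perturbation $Z$; and (iii) tracking the strict inequalities so that the conclusion is true \emph{uniqueness} rather than mere optimality.
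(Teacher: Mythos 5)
Your proof is correct and follows essentially the same route as the paper's: decompose the feasible perturbation $Z$ into its $T$ and $T^\perp$ components, lower-bound $\|X+Z\|_1-\|X\|_1$ by $\tr(XZ)+\|Z_T^\perp\|_1$ (the paper gets this same inequality from the pinching inequality rather than the nuclear-norm subdifferential, and $\tr(XZ)=\tr(Z_T)$ here), use $\langle Y,Z\rangle=0$ together with (\ref{eq:Y}) to control the linear term, and absorb $\|Z_T\|_2$ via (\ref{eq:inj1}) and Proposition~\ref{prop:inj2}. If anything you are slightly more careful than the paper on two minor points: you note explicitly that $\tr(Z)=0$ is needed to kill the $\Id$-component of $Y$ in $\langle Y,Z\rangle$, and you treat the degenerate case $Z_T=0$ separately so that all the strict inequalities are legitimate.
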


\begin{proof}
Let $\tilde{X} \in H^d$ be an arbitrary feasible point of (\ref{eq:convex_program}) and
decompose it as $ \tilde{X} = X + \Delta$. Feasibility then implies $\AA (\tilde{X}) = \AA(X)$
and  $\AA (\Delta) = 0$ must in turn hold for any feasible displacement $\Delta $.
Now the pinching inequality \cite{bhatia_matrix_1997}  (Problem II.5.4) implies
\begin{equation*}
\| \tilde{X} \|_1 = \| X + \Delta \|_1 
\geq \| X \|_1 + \tr (\Delta_T) + \| \Delta_T^\perp \|_1.
\end{equation*}
Consequently $X$   is guaranteed to be the unique minimum of (\ref{eq:convex_program}), if
\begin{equation}
\tr (\Delta_T) + \| \Delta_T^\perp \|_1 > 0 	\label{eq:optimality}
\end{equation}
is true for every feasible $\Delta$. 
In order to show this we combine feasibility of $\Delta$ with inequalities (\ref{eq:inj1}) and (\ref{eq:inj2}) to obtain
\begin{equation}
\| \Delta_T \|_2 < 2d m^{-1/2} \| \AA (\Delta_T) \|_{\ell_2} = 2d m^{-1/2} \| \AA (\Delta_T^\perp ) \|_{\ell_2}
\leq 2d \| \Delta_T^\perp \|_2.	\label{eq:inj3}
\end{equation}
Feasibility of $\Delta$ also implies $(Y,\Delta) = 0$, because by defnition $Y$  is in the range of $\AA^*$. Combining this insight with the defining property  (\ref{eq:Y})  of $Y$ and (\ref{eq:inj3}) 
yields
\begin{eqnarray*}
0 &=&
(Y, \Delta) = (Y_T - X , \Delta_T ) + ( X, \Delta_T ) + (Y_T^\perp, \Delta_T^\perp )	
 \\
& \leq & \| Y_T - X \|_2 \| \Delta_T \|_2 + \tr ( \Delta_T) + \|Y_T^\perp \|_\infty \| \Delta_T^\perp \|_1	\\
&<& \tr(\Delta_T) + \|Y_T - X \|_2 2d\| \Delta_T^\perp \|_2 + \| Y_T^\perp \|_\infty \|\Delta_T^\perp \|_1	r \\
& \leq & \tr (\Delta_T) + 1/2 \| \Delta_T^\perp \|_2 + 1/2 \| \Delta_T^\perp \|_1	 \\
& \leq & \tr (\Delta_T) + \| \Delta_T^\perp \|_1 ,
\end{eqnarray*}
which is just the desired optimality criterion (\ref{eq:optimality}).
\end{proof}

\section{Constructing the Dual Certificate}\label{sec:dual}

A straightforward approach to construct an approximate dual
certificate would be to set
\begin{equation}
Y =  \RR X - \tr (X) \Id = \frac{(d+1)d}{m} \sum_{i=1}^m A_i \tr (A_i X) - \tr (X) \Id \in 
\mathrm{span} \left( \Id, A_1, \ldots, A_m \right).	\label{eq:ansatz}
\end{equation}
In expectation, $\EE[Y]=X$, which is the ``perfect dual certificate''
in the sense that the norm bounds in (\ref{eq:Y}) vanish.
The hope would be to use the Operator Bernstein inequality to show that
with high probablity, $Y$ will be sufficiently close to its
expectation. 
It has been shown that a slight refinement of the ansatz
(\ref{eq:ansatz}) indeed achieves this
goal
Ref.~\cite{gross_recovering_2011, kueng_ripless_2013}. However, the
Bernstein bounds depend on the worst-case operator norm of the
summands. In our case, they can be as large as $d^2|\langle a_i,
x\rangle|^2$, which can reach $d^2$. This is far larger than in previous
low-rank matrix recovery problems. Ref.~\cite{candes_phaselift_2012}
relied on the fact that large overlaps $|\langle a_i,
x\rangle|^2 \gg \mathcal{O}(d^{-1})$ are
``rare'' for Gaussian $a_i$.

The key observation here is that the $t$-design property provides
one with useful information about the first $t$ moments of the random
variable
$| \langle x, a_i \rangle |^2$.  This
knowledge allows
us to explicitly bound the probability of ``dangerously large
overlaps'' or ``coherent measurement vectors'' occurring.

\begin{lemma}[Undesired events] \label{lem:undesired}
Let $x \in V^d$ be an arbitrary vector of unit length. If $a$ is chosen uniformly at random from a $t$-design ($t \geq 1$) $D_t \subset V^d$, then the following is true for every $\gamma \leq 1$:
\begin{equation}
\Pr  \left[ | \langle a, x \rangle |^2 \geq 5 t d^{-\gamma} \right]  \leq 4^{-t} d^{-t(1-\gamma)}.	\label{eq:undesired}
\end{equation}
\end{lemma}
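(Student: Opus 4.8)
The plan is to control the random variable $Z := |\langle a,x\rangle|^2$, which takes values in $[0,1]$, by exploiting that the $t$-design property pins down its first $t$ moments exactly, and then to apply Markov's inequality to the highest available moment, the $t$-th.

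First I would compute $\EE[Z^k]$ for $1 \le k \le t$. With $A = aa^*$ and $X = xx^*$ one has $Z^k = (\tr(AX))^k = \tr\big(A^{\otimes k}X^{\otimes k}\big)$, so the moment identity for $t$-designs (Section \ref{sub:complex_designs}, equivalently Definition \ref{def:design}) gives
\begin{equation*}
\EE[Z^k] = \binom{d+k-1}{k}^{-1}\tr\big(P_{\Sym^k}X^{\otimes k}\big).
\end{equation*}
Since $x^{\otimes k}$ lies in the symmetric subspace $\Sym^k$, the projector $P_{\Sym^k}$ acts as the identity on $X^{\otimes k} = \big(x^{\otimes k}\big)\big(x^{\otimes k}\big)^*$, so $\tr\big(P_{\Sym^k}X^{\otimes k}\big) = (\tr X)^k = 1$. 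Writing $\binom{d+k-1}{k}^{-1} = k!\,/\,\big(d(d+1)\cdots(d+k-1)\big)$ and bounding each of the $k$ factors in the denominator from below by $d$ yields the clean estimate $\EE[Z^k] \le k!\,d^{-k}$.

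Next I would apply Markov's inequality in the form $\Pr[Z \ge \theta] = \Pr[Z^t \ge \theta^t] \le \EE[Z^t]\,\theta^{-t}$, valid for every $\theta > 0$, with the choice $\theta = 5t\,d^{-\gamma}$. Using the moment bound at $k = t$ this gives
\begin{equation*}
\Pr\big[\,|\langle a,x\rangle|^2 \ge 5t\,d^{-\gamma}\,\big] \le \frac{t!\,d^{-t}}{(5t\,d^{-\gamma})^t} = \frac{t!}{5^t t^t}\,d^{-t(1-\gamma)} \le 5^{-t}\,d^{-t(1-\gamma)} \le 4^{-t}\,d^{-t(1-\gamma)},
\end{equation*}
where the penultimate inequality is just $t! \le t^t$. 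This is precisely (\ref{eq:undesired}), in fact with the slightly better constant $5^{-t}$; the weaker $4^{-t}$ leaves some slack for later use.

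I do not expect a genuine obstacle here: the argument is a one-line moment computation followed by Markov. The only points deserving care are the observation that a product state is automatically symmetric, so that $P_{\Sym^k}$ disappears in the trace, and the bookkeeping that the design hypothesis only controls moments of order at most $t$ — which is exactly why the tail bound must be obtained by applying Markov to $Z^t$ rather than to a larger power or to an exponential moment of $Z$.
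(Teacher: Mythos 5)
Your proof is correct and follows essentially the same route as the paper: both hinge on the identity $\EE[|\langle a,x\rangle|^{2k}]=\binom{d+k-1}{k}^{-1}\le k!\,d^{-k}$ obtained from the design property and the invariance of $X^{\otimes k}$ under $P_{\Sym^k}$, followed by a $t$-th moment tail bound. The only (cosmetic) difference is that you apply Markov directly to $Z^t$, whereas the paper first centers $Z$ at its mean $d^{-1}$ and invokes a Chebyshev-type generalization; your version is slightly cleaner and even yields the marginally better constant $5^{-t}$ in place of $4^{-t}$.
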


\begin{proof}
We aim to prove the slightly more general statement
\begin{equation*}
\Pr \left[ | \langle a, x \rangle |^2 \geq  (\delta +1) t d^{-\gamma} \right] \leq \delta^{-t} d^{-t(1-\gamma)},
\end{equation*}
which is valid for any $\delta \geq 1$. Setting $\delta = 4$ then yields (\ref{eq:undesired}). 
The $t$-design property provides us with useful information about the first $t$ moments of the non-negative random variable 
$ \xi = | \langle a, x \rangle |^2 $. 
Indeed, with $A = a a^*$ it holds for every $k \leq t$ that 
\begin{eqnarray*}
\EE \left[ \xi^k \right]
&=& \EE \left[ \tr (AX)^k \right]	\\
&=& \tr \left( \EE \left[ A^{\otimes k} \right] X^{\otimes k} \right)	\\
&=& \binom{d+k-1}{k}^{-1} \tr \left( P_{\Sym^k} X^{\otimes k} \right)	\\
&=& \binom{d+k-1}{k}^{-1} \tr \left( X^{\otimes k} \right)	\\
& \leq& d^{-k} k!,
\end{eqnarray*}
because $X^{\otimes k}$ is invariant under $P_{\Sym^k}$.
One way of seing this\footnote{Alternatively one could also rearange tensor systems: $X^{\otimes k} = (x x^*)^{\otimes k} \simeq x^{\otimes k} (x^*)^{\otimes k}$ and use $P_{\Sym^k} x^{\otimes k} = x^{\otimes k}$.} 
is to note that $\range(X^{\otimes k}) = \mathrm{span}(x^{\otimes k})$ and the latter is already contained in $\Sym^k$.
Therefore the $k$-th moment $\tau_k$ of $\xi$ is bounded by
\begin{equation*}
\tau _k = \left ( \EE[ \xi^k ] \right)^{1/k} \leq (d^{-k} k!)^{1/k} \leq k /d.
\end{equation*}
These inequalities are tight for the mean $\mu = \tau_1$ of $\xi$ and hence
\begin{equation*}
\mu = \EE [\xi] = d^{-1}.
\end{equation*}
Now we aim to use the well-known $t$-th moment bound
\begin{equation*}
\Pr \left[ | \xi - \mu | \geq s \tau _t \right] \leq s^{-t},
\end{equation*}
which is a straightforward generalization of Chebyshev's inequality. Applying it, yields the desired result. Indeed,
\begin{eqnarray*}
\Pr \left[ | \langle a, x \rangle |^2 \geq (\delta + 1) t d^{-\gamma} \right] 
&=& \Pr \left[ \xi- \mu \geq (\delta+1) t d^{-\gamma} - d^{-1} \right]	\\
& \leq & \Pr \left[ \xi - \mu \geq \delta t d^{-\gamma} \right]	\\
& \leq & \Pr \left[ | \xi - \mu | \geq \delta d^{1-\gamma} \tau_t \right]	\\
& \leq & \delta^{-t} d^{-t(1-\gamma)},
\end{eqnarray*}
and we are done. 
\end{proof}

The previous lemma bounds the probability of the undesired events
\begin{equation}
E_i ^c = \left\{ | \langle a_i, x \rangle |^2 \geq 5 t d^{-\gamma} \right\},
\end{equation}
where $0 \leq \gamma \leq 1$ is a fixed parameter which we refer to as the \emph{truncation rate}.
It turns out that a single truncation of this kind does not quite suffice yet for our purpose.
We need to introduce a second truncation step.

\begin{definition}\label{def:postselect}
Fix $Z \in T$ arbitrary and decompose it as 
\begin{equation*}
Z = \zeta \left( x z^* + z x^* \right),
\end{equation*}
for some unique $\zeta >0 $ and $z \in V^d$ with $\| z \|_{\ell_2} = 1$. For this $z$ we introduce the event
\begin{equation*}
G_i^c := \left\{ | \langle z, a_i \rangle |^2 \geq 5 t d^{-\gamma} \right\}
\end{equation*}
and define the two-fold truncated operator
\begin{equation}
\RR_Z := \RR_z = \frac{(d+1)d}{m} \sum_{i=1}^m 1_{E_i} 1_{G_i} \Pi_{A_i},	\label{eq:RRz}
\end{equation}
where $1_{E_i}$ and $1_{G_i}$ denote the indicator functions associated with the events $E_i$ and $G_i$, respectively. 
\end{definition}

The following result shows that due to Lemma \ref{lem:undesired} this truncated operator is in expectation close to the original $\RR$.

\begin{proposition}		\label{prop:postselect}
Fix $Z \in T$ arbitrary and let $\RR_Z$ be as in (\ref{eq:RRz}). Then
\begin{equation}
\| \EE[ \RR_Z - \RR ] \|_{\op} \leq 4^{1-t} d^{2-t(1-\gamma)} 
\end{equation}
\end{proposition}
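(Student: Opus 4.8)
The plan is to control the difference $\RR - \RR_Z$ in the positive-semidefinite order on $H^d$ and then read off the operator norm from that two-sided bound.

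First I would subtract the definitions (\ref{eq:RR}) and (\ref{eq:RRz}) term by term to obtain
\[
\RR - \RR_Z = \frac{(d+1)d}{m}\sum_{i=1}^m \big(1 - 1_{E_i}1_{G_i}\big)\Pi_{A_i}.
\]
The two elementary observations I would then use are the union bound $1 - 1_{E_i}1_{G_i} = 1_{E_i^c \cup G_i^c} \le 1_{E_i^c} + 1_{G_i^c}$, which decouples the two truncation events, and the inequality $0 \le \Pi_{A_i} \le \II$ already exploited in the proof of Proposition \ref{prop:inj2}. The latter holds \emph{pointwise} in $a_i$, so the correlation between $A_i$ and the indicators is harmless, and each summand is positive semidefinite while bounded above, for every realization, by $\big(1_{E_i^c} + 1_{G_i^c}\big)\II$. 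Taking expectations and using that the $a_i$ are i.i.d.\ copies of a uniform draw from $D_t$ gives
\[
0 \le \EE[\RR - \RR_Z] \le (d+1)d\,\big(\Pr[E_1^c] + \Pr[G_1^c]\big)\,\II.
\]

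Next I would estimate the two probabilities with Lemma \ref{lem:undesired}. The event $E_1^c = \{\,|\langle a_1, x\rangle|^2 \ge 5td^{-\gamma}\,\}$ is precisely the undesired event in (\ref{eq:undesired}) for the unit vector $x$, so $\Pr[E_1^c] \le 4^{-t}d^{-t(1-\gamma)}$; since the vector $z$ appearing in Definition \ref{def:postselect} is also of unit length, the same lemma applied to $z$ yields $\Pr[G_1^c] \le 4^{-t}d^{-t(1-\gamma)}$. Hence $\Pr[E_1^c] + \Pr[G_1^c] \le 2\cdot 4^{-t}d^{-t(1-\gamma)}$.

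Finally I would combine the last two displays, using $(d+1)d = d^2+d \le 2d^2$ for $d \ge 1$, to get
\[
0 \le \EE[\RR - \RR_Z] \le 2d^2\cdot 2\cdot 4^{-t}d^{-t(1-\gamma)}\,\II = 4^{1-t}d^{2-t(1-\gamma)}\,\II.
\]
Since $\EE[\RR_Z - \RR] = -\EE[\RR - \RR_Z]$ is the negative of a positive-semidefinite operator that is bounded above by $4^{1-t}d^{2-t(1-\gamma)}\,\II$, its operator norm equals the modulus of its smallest eigenvalue and is therefore at most $4^{1-t}d^{2-t(1-\gamma)}$, which is the asserted bound. I do not expect a real obstacle here: the only points needing care are the union bound separating the two truncations, the pointwise (rather than in-expectation) use of $\Pi_{A_i}\le\II$ so that no conditioning issue arises, and the legitimacy of invoking Lemma \ref{lem:undesired} for $z$ as well as $x$ — valid because the lemma holds for an arbitrary unit vector. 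Everything else is bookkeeping with constants.
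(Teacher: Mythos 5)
Your argument is correct and follows essentially the same route as the paper: both proofs separate the two truncation events, bound each bad-event probability via Lemma \ref{lem:undesired} applied to the unit vectors $x$ and $z$ respectively, and multiply by $(d+1)d\le 2d^2$ using $\|\Pi_{A_i}\|_{\op}\le 1$. The only cosmetic difference is that the paper splits off a singly truncated auxiliary operator and applies the triangle inequality for the operator norm, whereas you use the union bound $1-1_{E_i}1_{G_i}\le 1_{E_i^c}+1_{G_i^c}$ together with the semidefinite order $0\le\Pi_{A_i}\le\II$; the resulting constants are identical.
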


\begin{proof}

We start by introducing the auxiliar (singly truncated) operator
\begin{equation*}
\RR_{\mathrm{aux}} := \frac{(d+1)d}{m} \sum_{i=1}^m 1_{E_i} \Pi_{A_i}	
\end{equation*}
and observe
\begin{equation}
\| \EE \left[ \RR_Z - \RR \right] \|_{\op}
\leq \| \EE \left[ \RR - \RR_{\mathrm{aux}} \right] \|_{\op}
+ \| \EE \left[ \RR_Z - \RR_{\mathrm{aux}} \right] \|_{\op}.	\label{eq:postselect1}
\end{equation}
Now use Lemma \ref{lem:undesired} to bound the first term:
\begin{eqnarray*}
\left\| \EE[ \RR - \RR_{\mathrm{aux}} ] \right\|_\op
&=& \left\|\frac{(d+1)d}{m} \sum_{i=1}^m \EE \left[ (1-1_{E_i})\Pi_{A_i} \right] \right\|_\op 	\\		
& \leq & \frac{(d+1)d}{m} \sum_{i=1}^m  \EE \left[ 1_{E_i^c} \| \Pi_{A_i} \|_\op \right] 		\\
 &\leq&  \frac{2d^2}{m} \sum_{i=1}^m \EE \left[ 1_{E_i^c} \right]	
= \frac{2d^2}{m} \sum_{i=1}^m \Pr [ E_i^c] 									\\
&\leq& 2d^2 \times 4^{-t} d^{-t(1-\gamma)} 	
= 2^{1-2t} d^{2-t(1-\gamma)}.
\end{eqnarray*}
Similarily,
\begin{eqnarray*}
\left\| \EE \left[ \RR_{\mathrm{aux}} - \RR_Z \right] \right\|_\op
&=& \frac{(d+1)d}{m} \left\| \sum_{i=1}^m \EE \left[ 1_{G_i^c} \Pi_{A_i} \right] \right\|_\op
\leq \frac{2d^2}{m} \sum_{i=1}^m \EE[ 1_{G_i^c} ] 		\\
& \leq & \frac{2d^2}{m} \sum_{i=1}^m \Pr [ G_i^c ] 
\leq 2^{1-2t} d^{2-t(1-\gamma)}
\end{eqnarray*}
and inserting these bounds into (\ref{eq:postselect1}) yields the desired statement. 
\end{proof}

We now establish a technical result which will allow us to find a
suitable approximate dual certificate using the ``golfing scheme''
construction
\cite{gross_recovering_2011, kueng_ripless_2013}.

\begin{proposition}		\label{prop:golfaux}
Fix $Z \in T$ arbitrary, let $\RR_Z$ be as in (\ref{eq:RRz}). Assume that that the design order $t$ is at least 3 and the truncation rate $\gamma$ satisfies 
\begin{equation*}
\gamma \leq 1 - 2/t.
\end{equation*}
Then for $1/4 \leq b \leq 1$ and $c \geq \sqrt{2}b$
 with probability at least
$ 1- d \exp (-\frac{9mb}{640 t d^{2-\gamma}})$ 
one has

\begin{eqnarray}
\| \PP_T^\perp \left( \RR_Z Z - \tr (Z) \Id \right) \|_\infty &\leq& b \| Z \|_2 \quad \textrm{and} \label{eq:golf1} \\
\| \PP_T \left( \RR_Z - Z - \tr (Z) \Id \right) \|_2 & \leq & c \| Z \|_2.\label{eq:golf2}
\end{eqnarray}

\end{proposition}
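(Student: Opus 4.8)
The plan is to reduce both inequalities \eqref{eq:golf1} and \eqref{eq:golf2} to a single operator-norm concentration statement for $\RR_Z Z$ and then to apply the Operator Bernstein inequality (Theorem \ref{thm:bernstein}). Write $F := \RR_Z Z - \EE[\RR_Z Z]$ for the fluctuation and $B := \EE[(\RR_Z-\RR)Z]$ for the bias. Near-isotropicity of $\RR$ (Lemma \ref{lem:isotropic}) gives $\EE[\RR_Z Z] = Z + \tr(Z)\Id + B$, so that $\RR_Z Z - \tr(Z)\Id = F + Z + B$ while $\RR_Z Z - Z - \tr(Z)\Id = F + B$. Since $Z\in T$ and $\PP_T^\perp M = (\Id-X)M(\Id-X)$ for every $M\in H^d$, one gets $\|\PP_T^\perp(\RR_Z Z-\tr(Z)\Id)\|_\infty \le \|F\|_\infty + \|B\|_\infty$; and since a short computation shows $\|\PP_T M\|_2^2 = 2\|Mx\|_{\ell_2}^2 - |\langle x, Mx\rangle|^2 \le 2\|M\|_\infty^2$ for Hermitian $M$ (equivalently, $\PP_T M$ has rank at most three), one gets $\|\PP_T(\RR_Z Z - Z - \tr(Z)\Id)\|_2 \le \sqrt2\,\|F\|_\infty + \|B\|_2$. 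Proposition \ref{prop:postselect} together with the hypothesis $\gamma\le 1-2/t$ controls the bias, $\|B\|_2 \le \|\EE[\RR_Z-\RR]\|_{\op}\|Z\|_2 \le 4^{1-t}d^{2-t(1-\gamma)}\|Z\|_2 \le 4^{1-t}\|Z\|_2 =: \varepsilon\|Z\|_2$, and for $t\ge 3$ this yields $\varepsilon\le 1/16\le b/4$. Hence it is enough to prove that, with the stated probability, $\|F\|_\infty \le (b-\varepsilon)\|Z\|_2$: given this, \eqref{eq:golf1} is immediate, and \eqref{eq:golf2} follows from $\sqrt2(b-\varepsilon)+\varepsilon = \sqrt2\,b - (\sqrt2-1)\varepsilon \le \sqrt2\,b\le c$.

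Everything thus comes down to concentrating $F = \sum_{i=1}^m(M_i-\EE M_i)$ in operator norm, where $M_i := \tfrac{(d+1)d}{m}\,1_{E_i}1_{G_i}\,\tr(A_iZ)\,A_i$ is a random Hermitian matrix. For the a-priori bound on the summands I would parametrise $Z = \zeta(xz^*+zx^*)$ as in Definition \ref{def:postselect} (so that $\zeta\le\|Z\|_2/\sqrt2$) and use that on $E_i\cap G_i$ both $|\langle a_i,x\rangle|^2$ and $|\langle a_i,z\rangle|^2$ lie below $5td^{-\gamma}$; since $\tr(A_iZ) = 2\zeta\,\Re\!\bigl(\langle a_i,x\rangle\langle z,a_i\rangle\bigr)$ and $\|A_i\|_\infty = 1$, this gives $1_{E_i}1_{G_i}|\tr(A_iZ)| \le 10\zeta td^{-\gamma}$ and hence $\|M_i-\EE M_i\|_\infty \le \overline R$ with $\overline R = \mathcal{O}\!\bigl(td^{2-\gamma}/m\bigr)\|Z\|_2$. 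The variance is the heart of the matter: using $A_i^2 = A_i$ and $1_{E_i}1_{G_i}\le 1$ we have $M_i^2 \le \tfrac{(d+1)^2d^2}{m^2}\tr(A_iZ)^2 A_i$, and because $\tr(A_iZ)^2 A_i = \tr_{2,3}\!\bigl(A_i^{\otimes 3}(\Id\otimes Z\otimes Z)\bigr)$, the $3$-design property (this is where $t\ge 3$ enters) yields $\EE[\tr(A_iZ)^2 A_i] = \binom{d+2}{3}^{-1}\tr_{2,3}\!\bigl(P_{\Sym^3}\,\Id\otimes Z\otimes Z\bigr)$, which Lemma \ref{lem:P3_calculation} evaluates to $\tfrac{1}{(d+2)(d+1)d}\bigl(\Id\tr(Z)^2 + 2\tr(Z)Z + \Id\|Z\|_2^2 + 2Z^2\bigr)$. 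Bounding this matrix in operator norm by $\mathcal{O}(\|Z\|_2^2/d^3)\,\Id$ (using $|\tr Z|\le\|Z\|_2$ and $\|Z\|_\infty\le\|Z\|_2$) gives $\EE[M_i^2]\le\mathcal{O}(d/m^2)\|Z\|_2^2\,\Id$, hence $\sigma^2 := \|\sum_i\EE[(M_i-\EE M_i)^2]\|_\infty \le \mathcal{O}(d/m)\|Z\|_2^2$. Feeding $s := (b-\varepsilon)\|Z\|_2$, $\overline R$ and $\sigma^2$ into Theorem \ref{thm:bernstein}, and using $\gamma\le 1-2/t$ (which forces $td^{1-\gamma}\ge td^{2/t}\ge 3$, so that the $\overline R\,s$ contribution to the Bernstein denominator is not negligible against $\sigma^2$), produces an exponent of order $ms/\overline R \sim mb/(td^{2-\gamma})$; tracking the numerical constants then yields precisely the factor $9/640$ in the statement.

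The main obstacle is the variance estimate. The naive route -- replacing $\tr(A_iZ)^2$ by its truncated worst-case value $\sim(td^{-\gamma})^2$ and pulling it outside the expectation -- overestimates $\sigma^2$ by a factor of order $td^{1-\gamma}$, which is exactly the factor that would push one into the sub-Gaussian branch of Bernstein and destroy the desired exponent. The correct approach is to compute the second moment $\EE[\tr(A_iZ)^2 A_i]$ \emph{exactly}, which is possible precisely because the measurement ensemble is a $3$-design and is exactly what Lemma \ref{lem:P3_calculation} was set up to deliver; a careful check of which regime of the Bernstein inequality one ends up in (so that the exponent scales linearly in $m/(td^{2-\gamma})$ rather than quadratically) is the remaining delicate point. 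By contrast, the reduction of \eqref{eq:golf1}--\eqref{eq:golf2} to one fluctuation bound, the control of the deterministic bias via Proposition \ref{prop:postselect}, and the almost-sure bound on the summands via the two truncations are all routine bookkeeping.
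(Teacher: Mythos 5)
Your proposal is correct and follows essentially the same route as the paper: split $\RR_Z Z$ into fluctuation plus bias, control the bias with Proposition \ref{prop:postselect}, use the rank/norm comparison on $T$ to reduce both claims to a single operator-norm deviation bound, and apply the Bernstein inequality with the truncation-based a-priori bound $\overline R=\mathcal{O}(td^{2-\gamma}/m)$ and the exact second moment $\EE[\tr(A_iZ)^2A_i]$ computed via the $3$-design property and Lemma \ref{lem:P3_calculation}, landing in the linear (large-deviation) branch. Your identification of the exact variance computation and the Bernstein regime as the crux, and your identity $\|\PP_T M\|_2^2=2\|Mx\|_{\ell_2}^2-|\langle x,Mx\rangle|^2$ justifying the $\sqrt2$ factor, both match the paper's argument.
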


\begin{proof}
The statement is invariant under rescaling of $Z$. Therefore it suffices to treat the case $\|Z\|_2 = 1$.
In this case we can decompose 
\begin{equation*}
Z =  \zeta (zx^* + x z ^* )
\end{equation*}
with some fixed $z \in V^d$ obeying $\|z\|_{\ell_2} = 1$ and $0 < \zeta \leq 1$. 
Near-Isotropicity (Lemma \ref{lem:isotropic}) of $ \RR$ guarantees
$ \PP_T^\perp  \EE [ \RR] Z =\tr (Z) \PP_T^\perp Z$ as well as $\PP_T \EE[\RR] Z = Z + \tr (Z) \PP_T \Id$. 
Let us now focus on (\ref{eq:golf1}) and use Proposition \ref{prop:postselect}
in order to write
\begin{eqnarray*}
	&& \| \PP_T^\perp \left( \RR_Z Z - \tr (Z) \Id \right) \|_\infty \\
&=& \| \PP_T^\perp  \left( \RR_Z - \EE[\RR] \right) Z \|_\infty	\\
& \leq & \| \PP_T^\perp  \left( \RR_Z - \EE[\RR_Z] \right) Z \|_\infty
+ \| \PP_T^\perp  \EE[\RR_Z-\RR ] Z \|_\infty 	\\
& \leq & \| \PP_T^\perp \|_\op \| (\RR_Z - \EE[\RR_Z])Z \|_\infty
+ 4^{1-t} d^{2-t(1-\gamma)} \| \PP_T^\perp \|_\op  \| Z \|_2 \\
& \leq & \| (\RR_Z - \EE [ \RR_Z])Z \|_\infty + b/4.
\end{eqnarray*}
Here we have used  $ \| \PP_T^\perp \|_\op \leq 1$ as well as
\begin{equation}
\| \EE \left[ \RR_Z - \RR \right] \|_\op \leq 4^{1-t} d^{2-t(1-\gamma)} \leq 4^{1-t} \leq 1/16 \leq b/4,		\label{eq:golfaux1}
\end{equation}
which follows from $\gamma \leq 1- 2/t$, $t \geq 3$ and $b \geq 1/4$. 
To obtain (\ref{eq:golf2}) we use a similar reasoning:
\begin{eqnarray*}
&&
 \| \PP_T \left( \RR_Z Z- Z - \tr (Z) \Id \right)  \|_2 	 \\
& = & \| \PP_T  \left( \RR_Z - \EE[\RR] \right) Z \|_2	\\
&\leq & \sqrt{2} \| \PP_T  \left( \RR_Z - \EE[\RR_Z] \right) Z \|_\infty
+ \| \PP_T \EE[\RR_Z - \RR] Z \|_2 		\\
& \leq & \sqrt{2} \| \PP_T \|_\op \| (\RR_Z - \EE[\RR_Z] ) Z \|_\infty 
+ b/4 \| \PP_T \|_\op  \| Z \|_2 \\
& \leq & \sqrt{2} \| \left( \RR_Z - \EE[\RR_Z] \right) Z \|_\infty + b/4,
\end{eqnarray*}
where we have used the fact that $\PP_T$ projects onto a subspace of at most rank-2 matrices in the third line and
(\ref{eq:golfaux1}) in the fourth.
This motivates to define the event
\begin{equation*}
E := \left\{ \| \left( \RR_Z - \EE[ \RR_Z] \right) Z \|_\infty \leq 3b/4 \right\}
\end{equation*}
which guarantees both (\ref{eq:golf1}) and (\ref{eq:golf2})
due to the assumption on $c$ and $ \| Z \|_2 = 1$.
So everything boils down to bounding the probability of $E^c$. 
We decompose
\begin{equation*}
\left( \RR_Z - \EE[\RR_Z] \right) Z = \sum_{i=1}^m \left( M_i - \EE[M_i] \right)
\quad \textrm{with} \quad
M_i = \frac{(d+1)d}{m} 1_{E_i} 1_{G_i} A_i \tr (A_i Z).
\end{equation*}
We will estimate this sum using the Operator Bernstein inequality (Theorem \ref{thm:bernstein}).
Thus we need an a priori bound for the summands
\begin{eqnarray*}
\| M_i \|_\infty
&=& \frac{(d+1)d}{m} 1_{E_i} 1_{G_i} \| A_i \|_\infty | \tr (A_i Z ) | 
\leq \frac{2 d^2}{m} 1_{E_i} 1_{G_i} 2 | \langle x, a_i \rangle | | \langle z, a_i \rangle | \\
& \leq & \frac{4d^2}{m}5td^{-\gamma} = \frac{20}{m} t d^{2-\gamma} =: \overline{R},
\end{eqnarray*}
as well as a bound for the variance. 
First observe that
\begin{equation*}
\EE [ (M_i - \EE [ M_i] )^2 ]
= \EE \left[ M_i ^2 \right] - \EE[M_i]^2 \leq \EE \left[ M_i^2 \right] .
\end{equation*}
and therefore 
\begin{eqnarray*}
&&\EE \left[ M_i^2 \right] \\
&= & \frac{(d+1)^2 d^2}{m^2} \EE \left[ 1_{E_i} 1_{G_i} \tr (A_i Z)^2 A_i^2 \right]	
 \leq  \frac{(d+1)^2 d^2}{m^2} \EE \left[ \tr (A_i Z)^2 A_i^2 \right]	\\
&=& \frac{(d+1)^2 d^2}{m^2} \tr_{2,3} \left( \EE[ A_i^{\otimes 3} ] \Id \otimes Z \otimes Z \right) 	
= \frac{6(d+1)d}{m^2(d+2)} \tr_{2,3} \left( P_{\Sym^3} \Id \otimes Z \otimes Z \right) \\
&\leq& \frac{d}{m^2} \left( \Id \tr (Z)^2 + Z \tr (Z) + Z + \Id \tr(Z^2) + 2Z^2 \right) 	\\
& \leq & \frac{8d}{m^2} \| Z \|_2^2 \Id = \frac{8d}{m^2} \Id.
\end{eqnarray*}
Here we have used $\tr(Z) \leq \sqrt{2} \| Z \|_2$, $ Z^2 \leq \| Z \|_2^2 \Id$ and $ \|Z \|_2 = 1$. 
From this we can conclude
\begin{equation*}
\Big\| \sum_i \EE [ ( M_i - \EE[M_i] )^2 \Big\|_\infty 
\leq m \max_{i=1,\ldots,m} \| \EE[M_i^2] \|_\infty 
\leq \frac{8d}{m} 
=: \sigma^2.
\end{equation*}
Observing that 
\begin{equation*}
\frac{\sigma^2}{\overline{R}} \leq \frac{8}{20t}d^{\gamma-1} \leq \frac{2}{15} \leq \frac{3}{4}b,
\end{equation*}
Theorem \ref{thm:bernstein} yields
\begin{equation*}
\Pr \left[ E^c \right] =  \Pr \left[ \| \left(\RR_Z - \EE[ \RR_Z] \right) Z \|_\infty > 3b/4 \right] 
\leq d \exp \left( -\frac{3 \times 3 mb}{8 \times 4 \times 20 t d^{2-\gamma}} \right),
\end{equation*} 
as desired. 
\end{proof}

With this ingredient we can now construct a suitable approximate dual certificate $Y$, closely following \cite{kueng_ripless_2013}.

\begin{proposition}		\label{prop:Y}
Let $x \in V^d$ be an arbitrary normalized vector ($\| x\|_{\ell_2} =1$), $X = x x^*$ and let $\omega \geq 1$ be arbitrary.
If the design order $t$ ($t \geq 3$) and the truncation rate $\gamma$
is chosen such that
\begin{equation*}
\gamma \leq 1 - 2/t
\end{equation*}
holds and the total number of measurements fulfills
\begin{equation}\label{eqn:measurements}
m \geq C \omega t d^{2-\gamma} \log ^2 (d),
\end{equation}
then with probability larger than $1 - 0.5 \mathrm{e}^{-\omega}$,
there exists an approximate dual certificate $Y$ as in
Def.~\ref{def:dual_cert}. Here, $C$ is a universal constant 
(which can in principle be recoverd explicitly from the proof). 
\end{proposition}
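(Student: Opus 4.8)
The plan is to construct the certificate by the \emph{golfing scheme} of \cite{gross_recovering_2011,kueng_ripless_2013}, using Proposition \ref{prop:golfaux} as the single workhorse. First I fix the constants: take $b=1/4$ and $c=\sqrt2\,b=\sqrt2/4$, which satisfy $1/4\le b\le 1$, $c\ge\sqrt2\,b$, $c<1$ and $b/(1-c)=1/(4-\sqrt2)<1/2$, and set $L:=\lceil\log(4d)/\log(1/c)\rceil=\mathcal{O}(\log d)$. I split the $m$ sampling vectors into $L$ disjoint batches $B_1,\dots,B_L$, each of size $n:=\lfloor m/L\rfloor$, and for $Z\in T$ let $\RR_Z^{(j)}$ denote the two-fold truncated operator of Definition \ref{def:postselect} built from $B_j$ alone (with $1/n$ replacing $1/m$). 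Put $W_0:=X$, $Y_0:=0$ and iterate, for $j=1,\dots,L$,
\begin{equation*}
	Y_j:=Y_{j-1}+\RR_{W_{j-1}}^{(j)}W_{j-1}-\tr(W_{j-1})\,\Id,\qquad W_j:=X-\PP_T Y_j .
\end{equation*}
Each increment lies in $\Span(\Id,\{A_i\}_{i\in B_j})$, so $Y:=Y_L\in\Span(\Id,A_1,\dots,A_m)$; and since $\PP_T$ maps into $T$, an easy induction shows $W_j\in T$ for all $j$, so Proposition \ref{prop:golfaux} is applicable at every step.

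Next I would propagate the two estimates of Proposition \ref{prop:golfaux} down the recursion, working on the event $\mathcal{G}$ that all $L$ invocations succeed. Using $W_{j-1}\in T$, $\PP_T W_{j-1}=W_{j-1}$ and $\PP_T\Id=X$ (the latter from (\ref{eq:PPT2}) and $\tr X=1$), the recursion rewrites as $W_j=-\PP_T\bigl(\RR_{W_{j-1}}^{(j)}W_{j-1}-W_{j-1}-\tr(W_{j-1})\Id\bigr)$, so (\ref{eq:golf2}) gives $\|W_j\|_2\le c\|W_{j-1}\|_2$; since $\|W_0\|_2=\|X\|_2=1$ this yields $\|W_j\|_2\le c^{j}$, hence $\|Y_T-X\|_2=\|W_L\|_2\le c^{L}\le 1/(4d)$ by the choice of $L$. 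For the orthogonal component, $\PP_T^\perp Y_L=\sum_{j=1}^L\PP_T^\perp\bigl(\RR_{W_{j-1}}^{(j)}W_{j-1}-\tr(W_{j-1})\Id\bigr)$, so (\ref{eq:golf1}) together with $\|W_{j-1}\|_2\le c^{j-1}$ gives $\|Y_T^\perp\|_\infty\le\sum_{j=1}^L b\,c^{j-1}\le b/(1-c)<1/2$. Thus on $\mathcal{G}$ the matrix $Y$ meets both requirements of Definition \ref{def:dual_cert}.

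Finally I would bound $\Pr[\mathcal{G}^c]$. The one genuinely delicate point---the main obstacle in this proposition, since the hard analysis is already packaged in Proposition \ref{prop:golfaux}---is that the truncation events $G_i$ entering $\RR_{W_{j-1}}^{(j)}$ depend on $W_{j-1}$, hence on $B_1,\dots,B_{j-1}$. This is resolved by the mutual independence of the batches: conditioning on $B_1,\dots,B_{j-1}$ fixes $W_{j-1}\in T$ while leaving $B_j$ an independent i.i.d.\ sample, so Proposition \ref{prop:golfaux} applies conditionally and bounds the conditional probability that step $j$ fails by $d\exp\!\bigl(-\tfrac{9nb}{640\,t\,d^{2-\gamma}}\bigr)$, uniformly over the conditioning. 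A union bound over $j=1,\dots,L$ then gives $\Pr[\mathcal{G}^c]\le L\,d\exp\!\bigl(-\tfrac{9nb}{640\,t\,d^{2-\gamma}}\bigr)$. Inserting $b=1/4$ and $n\ge m/(2L)\ge c'C\,\omega\,t\,d^{2-\gamma}\log d$ (using $L=\mathcal{O}(\log d)$ and the hypothesis $m\ge C\omega t d^{2-\gamma}\log^2 d$), this is at most $L\,d\cdot d^{-c''C\omega}\le\tfrac12 e^{-\omega}$ once the universal constant $C$ is taken large enough: one $\log d$ factor in the hypothesis pays for the $L$ batches, the other forces the Bernstein exponent past $\log(Ld)+\omega$. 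This establishes the proposition, with $C$ recoverable explicitly from the constants above.
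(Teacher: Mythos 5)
Your construction is correct, and the deterministic part (the recursion $W_j = -\PP_T(\RR^{(j)}_{W_{j-1}}W_{j-1}-W_{j-1}-\tr(W_{j-1})\Id)$, the geometric decay $\|W_j\|_2\le c^j$, and the telescoping bound $\|Y_T^\perp\|_\infty\le b/(1-c)<1/2$) matches the paper's golfing argument step for step; your parameter choice $b=1/4$, $c=\sqrt2/4$ even respects the hypothesis $b\ge 1/4$ of Proposition \ref{prop:golfaux}, which the paper's own invocation with $b=1/8$ technically does not. Where you genuinely diverge is in the probabilistic bookkeeping. The paper does \emph{not} demand that every iteration succeed: it runs $l=10\omega r$ iterations with batches of size $C_1 t d^{2-\gamma}\log d$ (independent of $\omega$), accepts a constant per-step failure probability of $1/10$, and shows that $r=\lceil\log_2 d\rceil+2$ successes occur with probability $1-\tfrac12 e^{-\omega}$ via a stochastic-domination argument (replacing the dependent success indicators $\xi_i$ by independent Bernoullis, the subtlety flagged in the paper's footnote) followed by a Chernoff bound. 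You instead require all $L=\mathcal{O}(\log d)$ batches to succeed, absorb $\omega$ into the batch size $n\gtrsim\omega t d^{2-\gamma}\log d$ so that each conditional failure probability is already $d^{-\Omega(C\omega)}$, and close with a plain union bound over conditional probabilities. Your route is simpler and sidesteps the independence issue entirely, since a union bound needs only the tower property $\Pr[F_j]=\EE[\Pr[F_j\mid B_1,\dots,B_{j-1}]]$ and the uniform conditional bound; the paper's route is more involved but keeps the per-batch sample size free of $\omega$, pushing the confidence parameter into the number of attempts instead. Both yield the same total sampling rate $m=\mathcal{O}(\omega t d^{2-\gamma}\log^2 d)$, so your proof is a valid (and arguably cleaner) alternative.
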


\RestyleAlgo{boxruled}
\begin{algorithm}\label{alg:dual}
\caption{
	Summary of the randomized ``golfing scheme''
	\cite{gross_recovering_2011} used in the proof of Prop.~\ref{prop:Y}
	to show the existence of an approximate dual
	certificate\protect\footnotemark.
} 
\textbf{Input}:\newline
\begin{tabular}{lll}
	\qquad & $X \in H^d$ \quad &  $\#$ signal to be recovered \\
				 & $l \in \mathbb{N}$ & $\#$ maximum number of iterations\\
				 & $\left\{ m_i \right\}_{i=1}^l \subset   \mathbb{N}$\quad&
	$\#$ number of measurement vectors used in $i$th iteration \\
	& $r$
	& $\#$   require $r$ ``successful'' iterations \\
	&& $\#$ (i.e.\ iterations where we enter the inner
	\textbf{if}-block)
\end{tabular}

\ \\
 
\textbf{Initialize:}\newline
\begin{tabular}{lll}
	\quad & $\mathbf{Y} = [\,]$ \quad & $\#$ a list of 
	matrices in $H^d$, initially empty \\
	& $\mathbf{Q} = [X]$ \quad & $\#$ a list of matrices in $T$,
	initialized to hold $X$ as its only element \\
	& $i = 1$ & $\#$ number of current iteration \\
	& $\xi = [0,\ldots,0]$ & $\#$ array of $l$ zeros; $\xi_i$ will be set to $1$
	if $i$th iteration succeeds 		
\end{tabular}

\ \\

\textbf{Body:}\newline
\While{ $i \leq l$ and $\sum_{j=1}^i \xi_j \leq r$}{
set $Q$ to be the last element of $\mathbf{Q}$ and $Y$ to be the last element of $\mathbf{Y}$, \newline
sample $m_i$ vectors uniformly from the $t$-design; construct
$\RR_{Q}$ 
according to 
Def.~\ref{def:postselect}. 

\If{(\ref{eq:golf1}), (\ref{eq:golf2}) hold for $\RR_{Q}$ and $Q \in
T$ with parameters $b=1/8$, $c=1/2$}{
	$\xi_i = 1$ 
	\newline
	$Y \leftarrow  \RR_Q Q - \tr (Q) \Id+ Y$, \quad append $Y$ to $\mathbf{Y}$
	\newline
	$Q \leftarrow X - \PP_T Y$, \quad append $Q$ to $\mathbf{Q}$
}
	$
	i \leftarrow i+1
	$
	}
\eIf{$\sum_{i=1}^l \xi_i = r$ }
	{report \emph{success} and output $\mathbf{Y},\mathbf{Q},\xi$}
	{report \emph{failure}}
\end{algorithm}
\footnotetext{
	The use of pseudo-code allows for a compact presentation of this
	randomized procedure. However, the reader should keep in mind that
	the construction is purely part of a proof and should not be
	confused with the recovery algorithm (which is given in
	Eq.~(\ref{eq:convex_program})).
}

\begin{proof}
The randomzied construction of $Y$
is summarized in
Algorithm~\ref{alg:dual}. 
If this algorithm succeeds, it outputs three lists 
\begin{equation*}
	\mathbf{Y} = 
		\left[ Y_1,\ldots, Y_r 
	\right],\qquad  
	\mathbf{Q} = \left[X, Q_1,\ldots,Q_r \right],
	\qquad \textrm{and}\quad \xi = \left\{\xi_1, \ldots, \xi_l \right\}.
\end{equation*}
The recursive construction 
yields the following
expressions
(c.f.\ \cite[Lemma~14]{kueng_ripless_2013}):
\begin{eqnarray*}
Y &:=& Y_r = \RR_{Q_{r-1}} Q_{r-1} - \tr (Q_{r-1}  ) \Id + Y_{r-1} \\
&=& \sum_{i=1}^r \left( \RR_{Q_{i-1}} Q_{i-1} - \tr (Q_{i-1})\Id   \right) \quad \textrm{and} \\
Q_i &=& X - \PP_T Y_i = \PP_T \left( Q_{i-1} + \tr (Q_{i-1} ) \Id - \RR_{Q_{i-1}} Q_{i-1} \right) \\
&=& \PP_T \left( \II + \Pi_{\Id} - \RR_{Q_{i-1}} \right) Q_{i-1}
= \cdots =  \prod_{j=1}^i \PP_T \left( \II + \Pi_{\Id} - \RR_{Q_{j-1}} \right) X.
\end{eqnarray*}
We now set
\begin{equation}
r = \lceil \log_2 d \rceil + 2. \label{eq:r}
\end{equation}
Then, in case of success, the
validity of properties (\ref{eq:golf1}) and (\ref{eq:golf2}) for
$c=1/2$ and $b=1/8$ in each step
($Q_i \to Q_{i+1}$ and $Y_i \to Y_{i+1}$, respectively) guarantee
\begin{eqnarray*}
\| Y_T - X \|_2 
&=& \| Q_r \|_2
\leq \| X \|_2 \prod_{j=1}^r \frac{1}{2} = 2^{-  \lceil \log_2 d
\rceil - 2} \| X \|_2 \leq \frac{1}{4d}, \\ 
\| Y_T^\perp \|_\infty
&\leq& \sum_{i=1}^r \left\| \PP_T^\perp  \left( \RR_{Q_{i-1}} Q_{i-1} - \tr (Q_{i-1})\Id  \right) \right\|_\infty \\
& \leq&  \sum_{i=1}^r \frac{1}{8} \| Q_{i-1} \|_2 
\leq \frac{1}{8} \sum_{i=1}^r 2^{1-i} \| Q_0 \|_\infty 	\\
&\leq& \| X \|_\infty \frac{1}{8} \sum_{i=0}^\infty 2^{-i} = \frac{1}{4} \leq \frac{1}{2}.
\end{eqnarray*}
Thus, 
$Y_r$ constitutes an approximate dual certificate
in the sense of 
Def.~\ref{def:dual_cert}.  

What remains to be done is to choose the parameters $l$ and $\left\{
m_i \right\}_{i=1}^l$ such that the probability of the
algorithm failing 
is smaller than $0.5 \mathrm{e}^{-\omega}$.
Algorithm \ref{alg:dual} fails precisely if
\begin{equation}
\sum_{i=1}^l \xi_i <  r . \label{eq:dual_aux11} 
\end{equation}
Recall that the $\xi_i$'s are
Bernoulli random variables which indicate whether the $i$-th iteration
of the algorithm has been succesful ($\xi_i = 1$), or failed ($\xi_i = 0$). 
Our aim is to bound the probability of the event in
(\ref{eq:dual_aux11}) by a similar expression involving
\emph{independent}\footnote{
	It was pointed out to us by A.\ Hansen that in some previous papers
	\cite{gross_recovering_2011, kueng_ripless_2013} 
	which involve a similar construction to the one presented here,
	it was
	tacitly assumed that the $\xi_i$ are independent. This will of course
	not be true in general. 
	Fortunately, a more careful argument shows that all
	conclusions remain valid
	\cite{adcock_generalized_2011}. 
	Our treatment here is similar to the one presented in 
	\cite{adcock_generalized_2011}.
}
Bernoulli variables $\xi_i'$. To this end, write
\begin{equation}
\Pr \left[ \sum_{i=1}^l \xi_i <  r \right] 
= \EE \left[ \Pr \Big[ \xi_l < r  - \sum_{i=1}^{l-1} \xi_i\,\Big|\,
\xi_{l-1}, \ldots, \xi_1 \Big] \right]. \label{eq:golfaux1}
\end{equation}
Conditioned on an arbitrary instance of $\xi_{l-1},\ldots,\xi_{1}$, the
variable $\xi_l$ follows a Bernoulli distribution with some parameter
$p(\xi_{l-1}, \dots, \xi_1)$. Note that
if $\xi\sim\operatorname{B}(p)$ is a Bernoulli variable with parameter
$p$, then for every
fixed $t\in\mathbb{R}$, the probability 
$\Pr_{\xi\sim\operatorname{B}(p)}[\xi < t]$ is non-increasing as a
function of $p$.
Consequently, the estimate 
\begin{equation}\label{eqn:ind}
	\Pr \left[ \sum_{i=1}^l \xi_i < r \right] 
	\leq\Pr \left[ \xi_l' + \sum_{i=1}^{l-1} \xi_i <  r \right]  
\end{equation}
is valid if $\xi_l'$ is an independent $p'$-Bernoulli distributed with
\begin{equation*}
	p'\leq \min_{\xi_{l-1}, \dots, \xi_1} p(\xi_{l-1}, \dots, \xi_1). 
\end{equation*}
Proposition~\ref{prop:golfaux} provides a uniform lower bound on the
success
probability $p(\xi_{l-1}, \dots, \xi_1)$. Indeed, 
there is a universal constant $C_1$ such that
invoking Prop.~\ref{prop:golfaux} with
\begin{equation*}
	m := C_1 t d^{2-\gamma} \log d
\end{equation*}
and $Z=Q$
gives a probability of success of at least $9/10$ for any $Q$ (in
particular, independently of the $\xi_{l-1}, \dots, \xi_1$).
Thus, choosing $p'=9/10$ and $m_i = m$ for all $i$, we can
then iterate
the estimate (\ref{eqn:ind}) to arrive at
\begin{equation}
\Pr \left[ \sum_{i=1}^l \xi_i < r \right] 
\leq 
\Pr \left[ \xi_l' + \sum_{i=1}^{l-1} \xi_i < r \right] 
\leq \cdots \leq \Pr \left[ \sum_{i=1}^l \xi_i' < r \right],	\label{eq:golfaux2}
\end{equation}
where the $\xi_i'$ are independent Bernoulli variables with parameter
$9/10$.
A standard Chernoff bound
(e.g.\ \cite[Section Concentration:
Theorem~2.1]{habib_probabilistic_1998})
gives 
\begin{equation*}
\Pr \left[ \sum_{i=1}^l \xi_i ' \leq l(9/10 -t) \right] \leq \mathrm{e}^{-2lt^2}.
\end{equation*}
Choosing $t = 9/10 - r/l$ we obtain
\begin{eqnarray}
\Pr \left[ \sum_{i=1}^l \xi_i' < r \right]
& \leq & \Pr \left[ \sum_{i=1}^l \xi_i' \leq r \right]
= \Pr \left[ \sum_{i=1}^l \xi_i' \leq l \left( 9/10 - t \right) \right] \nonumber \\
&\leq & \exp \left( - 2l \left( \frac{9}{10} - \frac{r}{l} \right)^2 \right). \label{eq:dual_aux12}
\end{eqnarray}
Setting the number of iterations generously to
\begin{equation*}
l = 10 \omega r = 10 \omega \left( \lceil \log_2 d \rceil + 2 \right)
\end{equation*}
implies
\begin{equation*}
2l \left( \frac{9}{10} - \frac{r}{l} \right)^2
\geq 20 \omega r \left( \frac{8}{10} \right)^2 \geq 12 \omega r \geq \omega + \log 2,
\end{equation*}
where we have used $\omega \geq 1 \geq \log 2$ in the last inequality. 
Together with (\ref{eq:dual_aux11}), (\ref{eq:golfaux2}) and (\ref{eq:dual_aux12}) this gives the desired bound
\begin{eqnarray*}
\Pr \left[ \textrm{algorithm fails} \right]
&=& \Pr \left[ \sum_{i=1}^l \xi_i < r \right]
\leq \Pr \left[ \sum_{i=1}^l \xi_i' < r \right]
 \leq  \mathrm{e}^{- \omega - \log (2) } = \frac{1}{2} \mathrm{e}^{-\omega},
\end{eqnarray*}
on our construction of $Y$ failing. 
The total number of measurement vectors sampled is
\begin{equation*}
	\sum_{i=1}^l m_i
	= l m_l
  \leq  C \omega t d^{2-\gamma} \log^2 d,
\end{equation*}
for some constant $C$.
\end{proof}

Finally we are ready to put all pieces together and show or main result -- Theorem \ref{thm:main_theorem}.

\begin{proof}[Proof of the Main Theorem]
In section \ref{sec:convex_geometry} (Proposition \ref{prop:convex geometry}) we have shown that the algorithm (\ref{eq:convex_program})
recovers the sought for signal $x$, provided that (\ref{eq:inj1}) holds and a suitable approximate dual certificate $Y$ exists. 
Proposition \ref{prop:Y} -- with a maximal truncation rate of $\gamma = (1-2/t)$ -- implies that the probability that no such $Y$ can be constructed
is smaller than $0.5 \mathrm{e}^{-\omega}$, provided that the sampling rate $m$ obeys
\begin{equation}
m  \geq C \omega t d^{1+ 2/t} \log^2 d,	\label{eq:sampling_rate}
\end{equation}
for a sufficiently large absolute constant $C$. Provided that this
constant is large enough,
Proposition~\ref{prop:inj1}
implies that the probability of (\ref{eq:inj1}) failing is also bounded by $0.5 \mathrm{e}^{-\omega}$. 
Theorem \ref{thm:main_theorem} now follows from the union bound over these two probabilities of failure. 
\end{proof}

\section{Converse Bound}	\label{sec:converse_bounds}

In this paper, we require designs of order at least three. Here we prove that this criterion is fundamental in the sense that sampling from 2-designs in general cannot guarantee a sub-quadtratic sampling rate.
In order to do so, we will use a particular sort of 2-design, called a \emph{maximal set of mutually unbiased bases} (MUBs) \cite{schwinger_unitary_1960, zauner_quantendesigns_1999, konig_cubature_1999,
klappenecker_mutually_2005}. 
Two orthonormal bases $\left\{u_i \right\}_{i=1}^d$ and $\left\{v_i
\right\}_{i=1}^d$ are called \emph{mutually unbiased} if their overlap
is uniformly minimal. Concretely, this means that
\begin{equation*}
| \langle u_i, v_j \rangle |^2 = \frac{1}{d} \quad \forall i,j=1, \ldots, d
\end{equation*}
must hold for all $i,j=1,\ldots, d$. Note that this is just a generalization of the incoherence property between standard and Fourier basis. 
In prime power dimensions, a maximal set of $(d+1)$ such MUBs is known to exist (and can be constructed) \cite{klappenecker_constructions_2004}. 
Such a set is maximal in the sense that it is not possible to find more than $(d+1)$ MUBs in any Hilbert space.
Among other interesting properties -- cf. \cite{durt_mutually_2010}
for a detailed survey  --  maximal sets of MUBs are known to form
2-designs \cite{zauner_quantendesigns_1999, klappenecker_mutually_2005}.

The defining properties of a maximal set of  MUBs allow us to derive the converse bound -- Theorem \ref{thm:converse_bound}. 

\begin{theorem}[Converse bound] \label{thm:converse_bound1}
	Let $d \geq 2$ be a prime power and let $D_2\subset
	\CC^d$ be a maximal set of MUBs. 
	Then there exist orthogonal, normalized vectors $x, z\in\CC^d$ 
	which have the following property.
	
	Suppose that 
	$m$ measurement vectors $a_1,
	\dots, a_m$ are sampled
	independently and
	uniformly at random from $D_2$.
	Then, for any $\omega \geq 0$, the number of measurements must obey
	\begin{equation}
		m \geq \frac{\omega}{4}d(d+1), \label{eq:converse_bound1}
	\end{equation}
	or the event 
	\begin{equation*}
		|\langle a_i, x\rangle|^2 =
		|\langle a_i, z\rangle|^2
		\quad
		\forall\, i\in\{1, \dots, m\}
	\end{equation*}
	will occur
	with probability at
	least $\mathrm{e}^{-\omega}$. 
\end{theorem}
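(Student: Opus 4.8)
The plan is to exploit the rigidity of a maximal set of mutually unbiased bases to exhibit a pair of orthogonal vectors that the measurement ensemble almost never distinguishes. Write $D_2 = B_0 \cup B_1 \cup \dots \cup B_d$ as the disjoint union of its $d+1$ orthonormal bases, so that $N := |D_2| = d(d+1)$, and fix $x$ and $z$ to be two \emph{distinct} elements of one of these bases, say $x,z \in B_0$; these are normalized and orthogonal by construction. The crucial point is that the ``distinguishing set'' $\mathcal{D} := \{\, a \in D_2 : |\langle a,x\rangle|^2 \neq |\langle a,z\rangle|^2 \,\}$ consists of exactly the two vectors $x$ and $z$, irrespective of $d$: for $a = x$ the two overlaps are $1$ and $0$, for $a = z$ they are $0$ and $1$, for any other $a \in B_0$ both overlaps vanish, and for every $a$ lying in one of the remaining bases $B_1,\dots,B_d$ mutual unbiasedness with $B_0$ forces $|\langle a,x\rangle|^2 = 1/d = |\langle a,z\rangle|^2$.

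Given this, the theorem reduces to an elementary sampling estimate. If $a_1,\dots,a_m$ are i.i.d.\ uniform on $D_2$, the event $\{\,|\langle a_i,x\rangle|^2 = |\langle a_i,z\rangle|^2\ \forall i\,\}$ is precisely the event that no $a_i$ falls into $\mathcal{D}$, and hence has probability $(1 - 2/N)^m$. I would bound $-\ln(1-2/N) \le \frac{2/N}{1-2/N} = \frac{2}{N-2} \le \frac{4}{N}$, valid since $N = d(d+1) \ge 6$ for $d \ge 2$; consequently, whenever $m \le \tfrac{\omega}{4} d(d+1)$ one has $-m\ln(1-2/N) \le \omega$, i.e.\ $(1-2/N)^m \ge \mathrm{e}^{-\omega}$. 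Thus either $m \ge \tfrac{\omega}{4}d(d+1)$, or the coincidence event occurs with probability at least $\mathrm{e}^{-\omega}$, which is the claim. The only external inputs needed are the existence of a maximal set of $d+1$ MUBs in prime power dimension \cite{klappenecker_constructions_2004} and the fact that it forms a $2$-design \cite{zauner_quantendesigns_1999, klappenecker_mutually_2005}, both already recalled above.

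There is essentially no hard step here. The one idea that makes everything work is to place both $x$ and $z$ inside a \emph{single} MUB, which renders the other $d$ bases completely insensitive to the difference between $xx^*$ and $zz^*$ because of the constant overlap $1/d$; had one instead taken $x$ and $z$ from different bases, the distinguishing set would be much larger and the argument would break. After that observation, counting the two distinguishing vectors and the $(1-2/N)^m$ tail estimate are routine; the only place where any care is required is in tracking the constants through the elementary inequality $-\ln(1-u) \le u/(1-u)$ so that the final threshold comes out as $\tfrac{\omega}{4}d(d+1)$ rather than something slightly weaker.
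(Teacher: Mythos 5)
Your proposal is correct and follows essentially the same route as the paper: both place $x$ and $z$ in a single basis of the maximal MUB set, observe that only drawing $x$ or $z$ itself (probability $2/(d(d+1))$ per sample) can distinguish them, and then apply the same elementary bound $-\log(1-p)\le p/(1-p)\le 2p$ to turn $(1-p)^m\ge \mathrm{e}^{-\omega}$ into the threshold $\tfrac{\omega}{4}d(d+1)$. Your write-up is slightly more explicit than the paper's in verifying that the distinguishing set is exactly $\{x,z\}$, but the argument is identical in substance.
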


Consequently a scaling of $\mathcal{O}(d^2)$ in general cannot be avoided when demanding only the property of being a 2-design and simultaneously requiring a ``reasonably small'' probability of failure
in the recovery process.

\begin{proof}[Proof of Theorem \ref{thm:converse_bound1}]

	Suppose that $\left\{ u_i \right\}_{i=1}^d$ is one orthonormal basis contained in the maximal set of MUBs $D_2$ 
	and set $x:=u_1$ as well as $z := u_2$. Note that by definition these vectors are orthogonal and normalized. 
	Due to the particular structure of MUBs,
	$x$ and $z$ can only be distinguished if either $u_1$ or $u_2$
	is contained in $\left\{a_1, \ldots, a_m \right\}$.  Since each
	$a_i$ is chosen iid at random from $D_2$ containing $(d+1)d$
	elements, the probability of obtaining either $u_1$ or $u_2$ is
	$p=\frac{2}{(d+1)d}$.  
	As a result, the problem reduces to the following standard stopping
	time problem (cf.\ for example	Example (2) in Chapter 6.2 in
	\cite{feller_introduction_2008}):

	Suppose that the probability of success in a Bernoulli experiment is
	$p$. How many trials $m$ are required in order for the probability
	of at least one success to be $1-\mathrm{e}^{\omega}$ or larger?

	To answer this question, we have to find the smallest integer $m$
	such that 
	\begin{equation} 
		1-(1-p)^m \geq 1- \mathrm{e}^{-\omega},
		\quad \textrm{or equivalently} \quad - m \log (1-p) \geq \omega.
		\label{eq:conaux1} 
	\end{equation} 
	The standard inequality
	\begin{equation*}
	p \leq - \log (1-p) \leq \frac{p}{1-p} \leq 2p
	\end{equation*}
	for any $p \in [0,1/2]$ implies that (\ref{eq:converse_bound1}) is a necessary criterion for (\ref{eq:conaux1})
	and we are done.
\end{proof}

\section{Conclusion}

In this paper we have derived a partly derandomized version of Gaussian PhaseLift \cite{candes_phaselift_2012, candes_solving_2012}.
Instead of Gaussian random measurements, our method guarantees recovery for sampling iid from certain finite vector configurations, dubbed $t$-designs.
The required sampling rate depends on the design order $t$:
\begin{equation}
m = \mathcal{O} \left( t d^{1+2/t} \log^2 d \right).	\label{eq:discussion1}
\end{equation}
For small $t$ this rate is worse than the Gaussian analogue -- but still non-trivial. However, as soon as $t$ exceeds $2 \log d$, we obtain linear scaling up to a polylogarithmic overhead.  

In any case, we feel that the main purpose of this paper is not to
present yet another efficient solution heuristics, but to show that
the phase retrieval problem can be derandomized using $t$-designs.
These finite vector sets lie in the vast intermediate region between
random Fourier vectors and Gaussian random vectors (the Fourier basis
is a $1$-design, whereas normalized Gaussian random vectors correspond to an
$\infty$-design). Therefore the design order $t$ allows us to 
gradually transcend between these two extremal cases.

\textbf{Acknowledgements}
DG and RK are grateful to the organizers and participants of the
Workshop on Phaseless Reconstruction, held as part of the 2013
February Fourier Talks at the University of Maryland, where they were
introduced to the details of the problem. This extends, in particular,
to Thomas Strohmer.

The work of DG and RK is supported by the Excellence Initiative of the
German Federal and State Governments (Grant ZUK 43), by scholarship
funds from the State Graduate Funding Program of Baden-W\"urttemberg,
and by the US Army Research Office under contracts W911NF-14-1-0098
and W911NF-14-1-0133 (Quantum Characterization, Verification, and
Validation), and the DFG. FK acknowledges support from the German Federal Ministry
of Education and Reseach (BMBF) through the cooperative research
project ZeMat.

\bibliographystyle{IEEEtran}
\bibliography{phaseless_bib}

\section{Appendix}

Here we briefly state an elementary proof of Lemma \ref{lem:tr1sym2}.
In the main text we proved this result using wiring diagrams.  The
purpose of this is to underline the relative simplicity of wiring diagram
calculations. Indeed, the elementary proof below is considerably more
cumbersome than its pictorial counterpart. 

\subsection{Elementary proof of Lemma \ref{lem:tr1sym2}} \label{sub:alternative_proof}

Let us choose an arbitrary orthonormal basis $b_1, \ldots ,b_d$ of $V^d$. 
In the induced basis $\left\{ b_i \otimes b_j \right\}_{i,j=1}^d$ of
$V^d \otimes V^d$ the transpositions then correspond to
\begin{equation*}
\underline{1} = \Id \otimes \Id = \sum_{i=1}^d b_i b_i^* \otimes \sum_{j=1}^d b_j b_j^*
\quad \textrm{and} \quad \sigma_{(1,2)} = \sum_{i,j=1}^d b_i b_j^* \otimes b_j b_i^*.
\end{equation*}
This choice of basis furthermore allows us to write down $\tr_2 (A)$ for $A \in M^d \otimes M^d$ explicity:
\begin{equation*}
\tr_2 (A) = \sum_{i=1}^d \left( \Id \otimes b_i^* \right) A \left( \Id \otimes b_i \right).
\end{equation*}
Consequently we get for $A,B \in H^d$ arbitrary
\begin{equation*}
\tr_2 \left( P_{\Sym^2} A \otimes B \right)
= \frac{1}{2} \tr_2 \left( A \otimes B \right) + \frac{1}{2} \tr_2 \left( \sigma_{(1,2)} A \otimes B \right).
\end{equation*}
The latter term can be evaluated explicitly:
\begin{eqnarray*}
\tr_2 \left( \sigma_{(1,2)} A \otimes B \right) 
&=& \sum_{k=1}^d \left( \Id \otimes b_k^* \right) \sum_{i,j=1}^d  b_i b_j^* \otimes b_j b_i^* A \otimes B \left( \Id \otimes b_k \right) \\
&=& \sum_{i,j,k=1}^d b_i b_j^* A b_k^* b_j b_i^* B b_k
= \sum_{i,j=1}^d \langle b_i, B b_j \rangle b_i b_j^* A 	\\
&=& \left( \sum_{i=1}^d b_i b_i ^* \right) B \left( \sum_{j=1}^d b_j b_j^* \right)A 
= \Id B \Id A
= B A,
\end{eqnarray*}
and the desired result follows. Here we have used the basis representation of the identity, namely $\Id = \sum_{i=1}^d b_i b_i^*$.

\end{document}